%
\documentclass[runningheads]{llncs}
\usepackage[T1]{fontenc}
%
\usepackage{graphicx}
%
%
\usepackage[a4paper,margin=3cm]{geometry} 

\usepackage[pdftex,colorlinks,linkcolor=black,urlcolor=black,citecolor=black]{hyperref}
\usepackage{tikz}
\usepackage{comment}

\ifpdf
\DeclareGraphicsRule{*}{mps}{*}{}
\fi

\usepackage{amsthm}

\usepackage{amsmath,amssymb}
\definecolor{ForestGreen}{RGB}{34,139,34}
\usepackage{tcolorbox}

\spnewtheorem{observation}{Observation}{\bfseries}{\itshape}

\usepackage{tikz}
\usetikzlibrary{shapes}
\usetikzlibrary{positioning,calc} 
\usepackage{lipsum}

\newcommand{\ta}{\text{\normalfont tree-$\alpha$}}
\newcommand{\pa}{\text{\normalfont path-$\alpha$}}
\newcommand{\pw}{\text{\normalfont pw}}
\newcommand{\bpw}{\text{\normalfont bpw}}
\newcommand{\btw}{\text{\normalfont btw}}
\newcommand{\tw}{\text{\normalfont tw}}

\sloppy

\begin{document}
\title{Subexponential and Parameterized Mixing Times of Glauber Dynamics on Independent Sets}
\titlerunning{Subexponential and Parameterized Mixing Times of Glauber Dynamics}
%
\author{Malory Marin}
\authorrunning{M. Marin}
%
\institute{\small ENS de Lyon, CNRS, Université Claude Bernard Lyon 1, LIP, UMR 5668, \\69342 Lyon Cedex 07, France}
%
\maketitle              

\begin{abstract}
Given a graph $G$, the hard-core model defines a probability distribution over its independent sets, assigning to each set of size $k$ a probability of $\frac{\lambda^k}{Z}$, where $\lambda>0$ is a parameter known as the \emph{fugacity} and $Z$ is a normalization constant. The Glauber dynamics is a simple Markov chain that converges to this distribution and enables efficient sampling. Its \emph{mixing time}—the number of steps needed to approach the stationary distribution—has been widely studied across various graph classes, with most previous work emphasizing the dichotomy between polynomial and exponential mixing times, with a particular focus on sparse classes of graphs.

Inspired by the modern fine-grained approach to computational complexity, we investigate subexponential mixing times of the Glauber dynamics on geometric intersection graphs, such as disk graphs. We further study parameterized mixing times with respect to two structural parameters that can remain small even in dense graphs: the tree-independence number and the path-independence number. Building on a result of Dyer, Greenhill, and Müller, we show that Glauber dynamics mixes in polynomial time on graphs of bounded path-independence number, and in quasi-polynomial time when the tree-independence number is bounded. Moreover, we prove that both bounds are tight via a conductance argument, thereby resolving a question raised in their work.

 This work provides a simple and efficient algorithm for sampling from the hard-core model. Unlike classical approaches that rely explicitly on geometric representations or on constructing decompositions such as tree decompositions or separator trees, our analysis only requires their existence to establish mixing time bounds-these structures are not used directly by the algorithm itself.
\end{abstract}

\section{Introduction}

\subsection{Glauber Dynamics}

An \emph{independent set} in a graph $G$ is a set of pairwise non-adjacent vertices. Determining the size of a maximum independent set in $G$, known as the \emph{independence number} $\alpha(G)$, is one of the most fundamental problems in algorithmic graph theory. The hard-core model is a probability distribution over the set of independent sets of a graph $G=(V,E)$, parameterized by the \emph{fugacity} $\lambda>0$, where the probability of an independent set $I\subseteq V$ is given by
$$
\pi_{G,\lambda}(I) = \frac{\lambda^{|I|}}{Z_G(\lambda)}
$$
where $Z_G(\lambda) = \sum_{J \in \mathcal{I}(G)} \lambda^{|J|}$ is the \emph{partition function}, summing over $\mathcal{I}(G)$, the set of all independent sets of $G$. The \emph{Glauber dynamics} on independent sets of a graph is a Markov chain whose stationary distribution corresponds exactly to the hard-core model. In this process, one starts with an arbitrary independent set and then repeatedly selects a vertex $v\in V$ uniformly at random. If $v$ is in the current independent set, it is removed with probability $\frac{1}{\lambda+1}$; otherwise, it is added with probability $\frac{\lambda}{\lambda+1}$, provided that adding it maintains independence. 

The \emph{mixing time} of the Glauber dynamics refers to the number of steps required for the Markov chain to converge close to its stationary distribution $\pi_{G,\lambda}$. More formally, given two distributions $\mu$ and $\nu$ on $\mathcal{I}(G)$, the \emph{total variation distance} $\Delta(\mu,\nu)$ is defined as $\Delta(\mu,\nu)= \frac{1}{2}\sum_{I\in \mathcal{I}(G)} |\mu(I) - \nu(I)|$. Given two independent sets $I,J\in \mathcal{I}(G)$ and an integer $t\geqslant 0$, let $\mu^t_I(J)$denote the probability that the current independent set is $J$ after $t$ transitions, starting from the independent set $I$. The mixing time $\tau_{G,\lambda}$ is the smallest $t$ such that $\max_{I\in \mathcal{I}(G)}\Delta(\mu^t_I, \pi_{G,\lambda}) \leqslant \frac{1}{4}$. The constant $\frac{1}{4}$ is chosen arbitrarily in $(0,\frac{1}{2})$, and running the Glauber dynamics during $\ell \tau_{G,\lambda}$ steps would guarantee a total variation distance of $2^{-\ell}$. A key question in the study of this chain is whether it exhibits \emph{fast mixing}, meaning the mixing time is at most polynomial in the size of the graph. Recent results have established fast mixing under various conditions, such as when the graph has maximum degree $d$ and the activity parameter $\lambda$ is below a critical threshold $\lambda_c(d)$\cite{chen2021optimal}, while it can be exponential when $\lambda$ is above $\lambda_c(d)$ \cite{mossel2009hardness}. More generally, Sly \cite{sly2010computational} proved that unless NP = RP, no polynomial-time approximation scheme exists for the partition function when $\lambda_c(d) < \lambda < \lambda_c(d) + \varepsilon(d)$. Beyond bounded-degree graphs, Eppstein and Frishberg \cite{eppstein2023rapid} demonstrated that Glauber dynamics mixes in time $n^{O(\tw)}$ for graphs of treewidth $\tw$. Similarly, Bezakova and Sun \cite{bezakova2020mixing} showed that in chordal graphs with separator size $b$, the mixing time is $n^{O(\ln b)}$ for all $\lambda > 0$. Recently, Jerrum \cite{jerrum2024glauber} established a dichotomy in $H$-free graphs, identifying structural conditions under which Glauber dynamics mixes efficiently for all $\lambda > 0$. 

The three results mentioned above share a key feature: they hold for arbitrarily large fugacity $\lambda > 0$. However, they all focus on sparse graphs since in some dense graphs—such as complete bipartite graphs—Glauber dynamics typically mixes in exponential time (see Section~\ref{sec:LowerBounds}). Nevertheless, there are exceptions: for instance, in complete graphs, Glauber dynamics exhibits rapid mixing. \\

This suggests that fast mixing may still be achievable in certain dense graph classes, raising the following question, which we aim to explore:

\begin{center} \begin{tcolorbox}[colback=purple!10, colframe=purple, arc=5pt, boxrule=1pt]{Does Glauber dynamics mix "reasonably" fast on some classes of dense graphs, for arbitrarily large fugacity?} \end{tcolorbox} \end{center}

In 2020, Dyer, Greenhill, and Müller~\cite{dyer2021counting} initiated the study of this question by identifying a class of dense graphs on which Glauber dynamics mixes rapidly: the graphs of bounded bipartite pathwidth. This class, defined as the maximum pathwidth of a bipartite subgraph of a graph, notably includes all claw-free graphs and hence all line graphs. Their work suggests that the modern fine-grained perspective on complexity and algorithms, built upon graph parameters, provides a fruitful framework for characterizing graph classes in which Glauber dynamics exhibits fast mixing. In the following paragraph, we review results on subexponential and parameterized algorithms, which serve as the necessary background and motivation for our contribution.

\subsection{Subexponential and Parameterized Algorithms}

In general, most NP-hard problems require at least $2^{\Omega(n)}$ time on $n$-vertex graphs. However, for certain graph classes, such as planar graphs, many of these problems—including finding a maximum independent set—can be solved in time $2^{O(\sqrt{n})}$. Such a complexity of $2^{o(n)}$ is called \emph{subexponential}. Other graph classes, such as $H$-minor-free graphs and unit disk graphs, also admit similar algorithms. These classes share a key property: they have \emph{separator theorems}.  

For instance, the celebrated planar separator theorem states that any $n$-vertex planar graph contains a balanced separator of size $O(\sqrt{n})$~\cite{lipton1979separator}. This line of research has been extended to more general geometric intersection graphs, which, unlike planar graphs, can be dense. For example, while planar graphs have bounded clique size, the intersection graph of unit disks can have arbitrarily large cliques. More formally, let $\mathcal{X}$ be a collection of $n$ subsets of $\mathbb{R}^d$ ($d\geqslant 2$). The intersection graph $G[\mathcal{X}]$ of $\mathcal{X}$ is the $n$-vertex graph with one vertex per subset of $\mathcal{X}$, and an edge between two vertices if and only if both corresponding subsets have a nonempty intersection. When all subsets are balls in $\mathbb{R}^d$, $G[\mathcal{X}]$ is called a \emph{$d$-dimensional ball graph}. For the special case of $d=2$, we say that $G[\mathcal{X}]$ is a \emph{disk graph}.

Early work established subexponential algorithms with running times of $O(n^{n^{1-\frac{1}{d}}})$, and even $2^{O(n^{1-\frac{1}{d}})}$ for $d$-dimensional geometric graphs~\cite{marx2014limited,smith1998geometric}. More recently, a robust framework was developed to obtain tight running times for geometric graphs~\cite{de2018framework}. This framework is based on a new type of separator, called \emph{clique-based separators}, which has received further attention in~\cite{de2023clique}.  

The notion of a separator is closely linked to the concept of tree decompositions. A \emph{tree decomposition} of a graph $G = (V, E)$ is a pair $\mathcal{T} = (T, \{X_t\}_{t \in V(T)})$, where $T$ is a tree and $\{X_t\}_{t \in V(T)}$ is a collection of subsets of $V$ (called \emph{bags}) satisfying the following properties:
\begin{enumerate}
    \item $\bigcup_{t \in V(T)} X_t = V$, meaning every vertex of $G$ appears in at least one bag.
    \item For every edge $(u,v) \in E$, there exists a node $t \in V(T)$ such that $\{u,v\} \subseteq X_t$.
    \item For each vertex $v \in V$, the set of nodes $\{t \in V(T) \mid v \in X_t\}$ induces a connected subtree of $T$.
\end{enumerate}

The \emph{width} of a tree decomposition is $\max_{t \in V(T)} |X_t| - 1$, and the \emph{treewidth} of $G$, denoted $\tw(G)$, is the minimum width over all possible tree decompositions of $G$. A well-known algorithmic result states that a maximum independent set can be found in time $2^{O(\tw(G))} \cdot n$.  

Such complexities of the form $f(k) \cdot n^{O(1)}$, where $k$ is some parameter and $f$ is a computable function, are called \emph{fixed-parameter tractable} (FPT). A direct consequence of the planar separator theorem is that any planar graph has treewidth $O(\sqrt{n})$, leading to a $2^{O(\sqrt{n})}$ algorithm for finding a maximum independent set in planar graphs.  

However, in dense geometric graphs, the presence of large cliques implies high treewidth, making this parameter unsuitable. To address this, a new parameter that allows for some density was introduced in~\cite{dallard2024treewidth}. The \emph{independence number} of a tree decomposition is defined as $\max_{t \in V(T)} \alpha(G[X_t])$, where $\alpha(G[X_t])$ is the independence number of the subgraph induced by a bag. The \emph{tree independence number}, denoted $\ta(G)$, is the minimum independence number over all possible tree decompositions of $G$. This parameter leads to alternative algorithmic results: a maximum independent set can be found in time $n^{\ta(G)}$. While this is not FPT, it remains polynomial, providing a useful trade-off for dense graphs where treewidth-based approaches are inefficient.

A \emph{path decomposition} is a special case where $T$ is a path. For simplicity, we define a path decomposition only as the ordered set of bags $\{X_i\}_{1\leqslant i \leqslant p}$. The \emph{pathwidth} of $G$, denoted $\pw(G)$, is the minimum width of any path decomposition of $G$. Similarly, the \emph{path independence number} of $G$, denoted $\pa(G)$, is the minimum independence number over all possible path decompositions of $G$.

The \emph{bipartite pathwidth} of a graph $G$, introduced by Dyer, Greenhill, and Müller~\cite{dyer2021counting}, denoted $\bpw(G)$, is defined as the maximum pathwidth of a bipartite induced subgraph of $G$. This parameter was specifically introduced to study the mixing time of Glauber dynamics, where the following result was established (up to some reformulation):

\begin{theorem}[\cite{dyer2021counting}]\label{thm:Dyer}
Let $G$ be an $n$-vertex graph. The mixing time of the Glauber dynamics with fugacity $\lambda > e/n$ satisfies
$$
\tau_{G,\lambda} = (2\tilde{\lambda}n)^{O(\bpw(G))},
$$
where $\tilde{\lambda} = e^{|\ln \lambda|}$.
\end{theorem}

From now on, we will write $\tilde{\lambda}$ to denote $e^{|\ln(\lambda)|}$. In the preliminaries, we will observe that $\bpw(G) \leqslant 2\pa(G)$, which immediately yields a polynomial bound on the mixing time for graphs of bounded path-independence number. Likewise, since every $n$-vertex graph satisfies $\tw(G) = O( \ln(n))\pw(G)$~\cite{korach1993tree}, it follows that the mixing time is quasi-polynomial whenever the \emph{bipartite treewidth} of $G$, defined analogously as the maximum treewidth of a bipartite induced subgraph, is bounded. Moreover, as $\btw(G) \leqslant 2\ta(G)$, the same quasi-polynomial bound applies when the tree-independence number is bounded.

The polynomial bound for bounded bipartite pathwidth is tight by a simple conductance argument (see Section~\ref{sec:LowerBounds}). However, the tightness of the quasi-polynomial bound was left open by Dyer, Greenhill, and Müller.

\subsection{A Motivation from Wi-Fi Network}

The hard-core model plays a crucial role in analyzing the performance of CSMA/CA (Carrier Sense Multiple Access / Collision Avoidance) in wireless networks. CSMA/CA is a listen-before-talk mechanism designed to prevent interference that could result in transmission loss. In this protocol, each access point (AP) listens to its radio channel before transmitting and only transmits if no other AP is detected as transmitting.

A wireless network can naturally be modeled as a graph: vertices represent APs, and two vertices are adjacent if their corresponding APs can detect each other. Under the CSMA/CA protocol, at any given time, the set of transmitting APs forms an independent set in this \emph{conflict graph}, a central tool for evaluating network performance.

In \cite{laufer2015capacity}, the first formal analysis of the global behavior of the network depending on its topology was conducted. In particular, it was shown that at equilibrium, the probability that the current set of transmitting vertices forms a given independent set $I \in \mathcal{I}(G)$ is given by  
$$
\frac{\theta^{|I|}}{Z_G(\theta)}
$$
where $\theta$ is the ratio between the transmission and listening phase durations. In Wi-Fi networks, $\theta$ typically ranges from 20 to 100. Observe that this distribution is exactly  the hard-core model with fugacity $\theta$. 

A key performance metric is the \emph{throughput} of a vertex, defined as the number of bits per second that a vertex can transmit. Neglecting transmission errors and network protocol headers, the throughput of a vertex is proportional to the fraction of time it is transmitting. For $v \in V(G)$, this fraction, denoted by $p_v$, is given by  
$$
p_v = \frac{\sum_{I\in \mathcal{I}(G), v\in I}\theta^{|I|}}{Z_G(\theta)}
$$
Thus, computing $p_v$ for all $v\in V(G)$ is crucial for understanding network behavior. There are two main approaches to compute these values:

\begin{itemize}
    \item \textbf{Exact Computation:} This approach requires an algorithm that counts the number of independent sets of each size in a given graph. Using such an algorithm, one can compute $Z_G(\theta)$ and, for each $v \in V(G)$, the value $Z_{G-N[v]}(\theta)$, where $G-N[v]$ is the graph obtained by removing $v$ and all its neighbors from $G$. Then, use the formula
    $$
    p_v = \frac{\theta Z_{G-N[v]}(\theta)}{Z_G(\theta)}
    $$
    to compute $p_v$ for all $v\in V(G)$.
    \item \textbf{Sampling-Based Estimation:} This approach relies on a probabilistic algorithm that samples independent sets from the hard-core model with fugacity $\theta$. To estimate $p_v$ for some $v \in V(G)$, one can generate multiple independent sets using this sampler and compute the proportion of time $v$ appears in these sets. This value converges almost surely to $p_v$ by the law of large numbers.
\end{itemize}

Both approaches share a common drawback: the underlying problems are hard. More precisely, computing the exact number of independent sets of each size is NP-hard, as it trivially reduces to the problem of finding a maximum independent set. Furthermore, Dyer, Frieze, and Jerrum \cite{dyer2002counting} proved that, unless $\text{RP} = \text{NP}$, no fully polynomial randomized approximation scheme (FPRAS) can exist for counting independent sets (and thus for computing the partition function).
 
However, decades of research in algorithmics and complexity theory have shown that this dichotomy (P vs NP) does not always fully capture the difficulty of a problem, as demonstrated by the development of subexponential and parameterized algorithms. This is particularly true when the considered graphs exhibit specific structural properties. For example, a graph may, in some sense, resemble a simpler graph, such as a path or a tree, in which polynomial algorithm exist. Another important case is when the graph admits a geometric interpretation, which is particularly relevant for wireless networks. Indeed, the conflict graph of a wireless network is often modeled as the intersection graph of disks (or balls) in $\mathbb{R}^2$ (or $\mathbb{R}^3$).

\subsection{Contributions and organization}

In the next section, we introduce the main definitions and tools required for this work.

In Section~\ref{sec:GeometricGraphs}, we establish our first main result: a subexponential mixing time for intersection graphs of balls in $d$-dimensional space.

\begin{theorem}\label{thm:BallGraphs}
Let $G$ be an intersection graph of $n$ balls in $\mathbb{R}^d$ with $d\geqslant 2$. The mixing time of the Glauber dynamics with fugacity $\lambda> 0$ satisfies 
$$
\tau_{G,\lambda} = (2\tilde{\lambda})^{O(n^{1-\frac{1}{d}})}.
$$
\end{theorem}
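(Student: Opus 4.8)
The plan is to bound the mixing time through the spectral gap and to estimate that gap by a recursive \emph{decomposition of Markov chains} argument carried out along the clique-based separator hierarchy that underlies the $2^{O(n^{1-1/d})}$-time algorithms for maximum independent set on $d$-dimensional ball graphs~\cite{de2018framework}. First I would pass from the mixing time to the spectral gap: for a (lazy, if necessary) reversible chain one has $\tau_{G,\lambda}=O\!\big(\gamma_{G,\lambda}^{-1}\log(1/\pi_{\min})\big)$, where $\gamma_{G,\lambda}$ is the spectral gap and $\pi_{\min}=\min_{I}\pi_{G,\lambda}(I)$. Since $Z_G(\lambda)\le(1+\lambda)^n$ and $|I|\in[0,n]$, one checks $\pi_{\min}\ge(2\tilde\lambda)^{-n}$, so $\log(1/\pi_{\min})=O(n\log(2\tilde\lambda))$; this factor, and any $\mathrm{poly}(n)$ or $n^{O(\log n)}$ factor met later, is absorbed into $(2\tilde\lambda)^{O(n^{1-1/d})}$. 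It therefore suffices to prove $\gamma_{G,\lambda}^{-1}=(2\tilde\lambda)^{O(n^{1-1/d})}$.

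For the gap I would use that every $d$-dimensional ball graph $H$ on $m$ vertices has a balanced separator $S$ (each component of $H-S$ has at most $cm$ vertices for a constant $c<1$) that is a union of few cliques, in a sense strong enough to control the number of \emph{weighted} partial configurations it carries, namely $\log_2 Z_{H[S]}(\tilde\lambda)=O\!\big(m^{1-1/d}\log(2\tilde\lambda)\big)$. Given such an $S$ for $G$, partition $\mathcal I(G)$ according to the trace of an independent set on $S$. The standard decomposition inequality for reversible chains then yields
\[
\gamma_{G,\lambda}^{-1}\ \le\ C\,\mathrm{poly}(n)\,\cdot\,\bar\gamma_S^{-1}\,\cdot\,\max_{\sigma\in\mathcal I(G[S])}\gamma_\sigma^{-1},
\]
where $\bar\gamma_S$ is the gap of the induced "projection'' chain on $\mathcal I(G[S])$ --- a chain on a space of size $Z_{G[S]}(1)$ with smallest stationary weight at least $Z_{G[S]}(\tilde\lambda)^{-1}$, hence $\bar\gamma_S^{-1}\le(2\tilde\lambda)^{O(m^{1-1/d})}\,\mathrm{poly}(n)$ by a crude conductance/state-counting bound --- and $\gamma_\sigma$ is the gap of Glauber dynamics conditioned on $\sigma$, which factorizes as Glauber dynamics on the two sides of the separator with at most $cm$ free vertices each, so $\gamma_\sigma^{-1}$ is a maximum of two such quantities on strictly smaller ball graphs.

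Iterating this bound down the separator recursion, and crucially taking a \emph{maximum} (not a product) over the two sides at each level, $\gamma_{G,\lambda}^{-1}$ is bounded by a product of per-level factors along a single worst-case root-to-leaf branch: with $m_i\le c^i n$ the size of the subproblem reached at depth $i$ along this branch, $\gamma_{G,\lambda}^{-1}\le\prod_i\big(C\,\mathrm{poly}(n)\,(2\tilde\lambda)^{O(m_i^{1-1/d})}\big)$. Because $c<1$ and $1-\frac{1}{d}<1$, the exponents form a geometric series: $\sum_i m_i^{1-1/d}\le n^{1-1/d}\sum_i c^{\,i(1-1/d)}=O(n^{1-1/d})$, and the branch has depth $O(\log n)$, so $\gamma_{G,\lambda}^{-1}\le n^{O(\log n)}\,(2\tilde\lambda)^{O(n^{1-1/d})}=(2\tilde\lambda)^{O(n^{1-1/d})}$, which together with the first paragraph proves the theorem.

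The main obstacle is exactly the step that separates this from the classical treewidth bound $n^{O(\tw)}$: the per-level cost of conditioning on a separator must be $(2\tilde\lambda)^{O(m^{1-1/d})}$ rather than $n^{\Theta(m^{1-1/d})}$, which forces one to exploit the separator through its clique structure --- each clique contributes at most one vertex to an independent set, and it is the logarithmic weight $\log(|C|+1)$ of a clique, not its cardinality, that enters --- rather than through its size. Equivalently, one must transport the weight guarantee from the clique-separator framework, where it is stated as a \emph{sum} over the whole separator tree (this is what makes the dynamic program run in $2^{O(n^{1-1/d})}$), into a \emph{product} along a single branch; this succeeds only because the subproblem sizes decay geometrically, and keeping the fugacity dependence confined to a single factor $\tilde\lambda^{O(n^{1-1/d})}$ throughout the recursion (via careful tracking of $Z_{H[S]}(\tilde\lambda)$ and of $\pi_{\min}$) is the remaining delicate point.
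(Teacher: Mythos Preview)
Your route differs from the paper's: you propose recursive Markov-chain decomposition, while the paper defines a single family of canonical paths on the full chain, built recursively along a clique-based separator tree, and bounds its congestion directly. The decomposition idea is viable and the geometric-series step $\sum_i (c^{\,i}n)^{1-1/d}=O(n^{1-1/d})$ is the same one the paper uses, but the specific justification you give for the projection-chain gap fails. You claim $\bar\pi_{\min}\ge Z_{G[S]}(\tilde\lambda)^{-1}$, yet $\bar\pi(\sigma)=\lambda^{|\sigma|}Z_{G-S-N(\sigma)}(\lambda)/Z_G(\lambda)$ depends on how $\sigma$ blocks $G-S$, not only on $G[S]$. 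Take $G$ a star with centre $v$ and $n-1$ pairwise nonadjacent leaves (a disk graph) and $S=\{v\}$, a valid weight-$1$ clique separator: then $\bar\pi(\{v\})=\lambda/\bigl(\lambda+(1+\lambda)^{n-1}\bigr)$ is exponentially small in $n$, while $Z_{G[S]}(\tilde\lambda)=1+\tilde\lambda$. So no ``crude conductance/state-counting'' bound on $\bar\gamma_S$ can give $(2\tilde\lambda)^{O(w)}$ per level; this is precisely the obstacle you flag at the end, and it is not resolved by ``careful tracking'' of $\pi_{\min}$.

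The projection chain \emph{does} have inverse gap $(2\tilde\lambda)^{O(w)}\cdot\mathrm{poly}(n)$, but proving it requires a genuine canonical-paths argument on $\mathcal I(G[S])$: route through $\emptyset$ and use $Z_{(V\setminus S)\setminus N(\sigma\cup\tau)}\le Z_{(V\setminus S)\setminus N(\sigma)}$ to get $\bar\pi(\sigma\cup\tau)/\bar\pi(\sigma)\le\tilde\lambda^{|\tau|}$, hence congestion $\le\mathrm{poly}(n)\,Z_{G[S]}(\tilde\lambda)$. Carrying this out at every level of the recursion amounts to re-deriving the paper's bound with an extra layer of indirection. The paper's direct argument is cleaner: the canonical path from $K$ to $L$ empties the root separator, recurses left, recurses right, then fills the root separator; for any transition $(I,I\Delta\{v\})$ the intermediate state $I$ already determines $L$ on one side of the root-to-$t_v$ branch and $K$ on the other, so the congestion is at most $\mathrm{poly}(n)\cdot\tilde\lambda^{O(\alpha(G[A_v]))}\cdot|\mathcal I(G[A_v])|^2$ where $A_v$ is the union of separators along that branch. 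The clique-based weight bound (each clique contributes $\log(|C|+1)$, not $|C|$) plus the geometric series give $\alpha(G[A_v])$ and $\log|\mathcal I(G[A_v])|$ both $O(n^{1-1/d})$, and no projection chain is ever formed.
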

Note that for such graphs, computing the partition function $Z_G(\lambda)$ exactly, or determining the activation probabilities $p_v$ for all $v \in V(G)$, cannot be done in time $2^{o(n^{1-1/d})}$ under the ETH, by a simple reduction to the problem of computing the independence number $\alpha(G)$ within this class~\cite{de2018framework}. However, by using Glauber dynamics, we obtain a randomized algorithm to estimate the $p_v$s that is not only simple to implement but also matches the complexity of the best possible exact deterministic algorithm.

In fact, we prove a slightly more general result for hereditary graph classes that admit clique-based separators, which include intersection graphs of balls. Notably, this result also extends to intersection graphs of similarly-sized fat objects in $d$-dimensional space, and many other classes of graphs.

In Section~\ref{sec:Parameterized}, we investigate mixing times in graphs of bounded path independence number and tree independence number. We prove the following

\begin{theorem}\label{thm:PathTree}
Let $G$ be a graph with $n$ vertices. The mixing time of the Glauber dynamics with fugacity $\lambda> 0$ satisfies 
\begin{itemize}
    \item $\tau_{G,\lambda} = (b\tilde{\lambda})^{O(\pa(G))} \cdot n^{O(1)}$ where $b$ is the minimum width (plus one) of a path decomposition of path independence number $\pa(G)$ ;
    \item $\tau_{G,\lambda} = n^{O(\ta(G))\cdot \ln(b\tilde{\lambda})}$ where $b$ is the minimum width (plus one) of a tree decomposition of tree independence number $\ta(G)$.
\end{itemize}
\end{theorem}
This theorem can be seen as a refinement of Theorem~\ref{thm:Dyer} in the context of path and tree-independence number since it takes into account the size of the bags in the bounds, while such a parameter do not exist in the context of bipartite pathwidth and treewidth. In  particular, Theorem~\ref{thm:PathTree} generalizes the polynomial mixing time result for chordal graphs with bounded separators ~\cite{bezakova2020mixing}. In addition, with minor modifications to the proof, we also obtain a result for pathwidth, establishing the first FPT mixing time bound for Glauber dynamics in the hard-core model. Bordewich et al.~\cite{bordewich2014subset} previously proved an FPT mixing time bound in terms of pathwidth for a different Glauber dynamics, which they used to establish polynomial mixing time when treewidth is bounded. Our result on bounded pathwidth similarly implies polynomial mixing time for bounded treewidth, offering an alternative proof of the result by Eppstein and Frishberg~\cite{eppstein2023rapid}. Note that, in full generality (that is, when the bag size $b$ may be as large as the number of vertices), Theorem~\ref{thm:PathTree} follows from Theorem~\ref{thm:Dyer} by simple inequalities between the parameters.

Finally, in Section~\ref{sec:LowerBounds}, we establish lower bounds for all our results, relying on conductance arguments. Previous works proved that Theorem~\ref{thm:BallGraphs} is tight in dimension two~\cite{randall2006slow}, and not improvable by far in dimension at least $3$~\cite{borgs1999torpid}. The matching lower bounds for pathwidth and path independence number of Theorem~\ref{thm:PathTree} are straightforward to obtain, as they involve well-known families of graphs, specifically complete bipartite graphs and slight variations thereof. Then, we prove a quasi-polynomial matching lower bound for the case of tree-independence number, which is more intricate and reveals connections to \emph{TAR-reconfiguration}, a well-studied variant of independent set reconfiguration problems. More precisely, we prove the following.

\begin{theorem}\label{thm:LowerBoundTreeAlpha}
For any $t\geqslant 1$, there exist infinitely many graphs $G$ such that $\ta(G)=2t$ and for any $\lambda\geqslant 1$,
$$
\tau_{G,\lambda} \geqslant (\lambda n)^{ct\ln(n)} 
$$
where $n=|V(G)|$ and $c$ is a constant.
\end{theorem}

Since graphs of bounded tree-independence number have bounded bipartite treewidth, it answers a question of Dyer, Greenhill and Muller.

\section{Preliminaries}\label{sec:Preliminaries}

Most of the definitions used in this paper are standard; however, we recall the most frequently used ones for clarity. Additionally, we provide a brief overview of the key techniques used to bound the mixing time of Glauber dynamics. For readers unfamiliar with these concepts, a more detailed introduction on Markov chains can be found in Appendix~\ref{appendix:Markov}.

\subsection{Graphs, Parameters and Partition Function} 

\paragraph{Notations.} Let $G$ be a simple graph. We denote by $V(G)$ and $E(G)$ the set of vertices and the set of edges of $G$, respectively. When there is no ambiguity, we denote by $n$ the number of vertices of $G$, and by $m$ the number of edges of $G$. 

Given a set $R \subseteq V(G)$, we use $G[R]$ to denote the subgraph induced by $R$, and $G - R$ to denote the graph induced by $V(G) \setminus R$. For a vertex $v \in V(G)$, we denote by $N(v)$ the \emph{open neighborhood} of $v$, that is, $N(v) = \{u \in V(G) \mid uv \in E(G)\}$, and by $N[v]$ its \emph{closed neighborhood}, defined as $N[v] = N(v) \cup \{v\}$. Given two sets $A$ and $B$, we denote by $A\Delta B$ the \emph{symmetric difference} of $A$ and $B$, that is $(A\setminus B) \cup (B\setminus A)$.

For a subset $X\subseteq V(G)$, we write $Z_X(\lambda)$ instead of $Z_{G[X]}(\lambda)$ when there is no ambiguity. Given a graph $G$, we define $\mathcal{G}(G)$ as the \emph{reconfiguration graph} of independent sets in $G$. Its vertex set is $\mathcal{I}(G)$, and two independent sets $I, J \in \mathcal{I}(G)$ are adjacent if and only if $|I \Delta J| \leqslant 1$. This graph serves as the support graph for the Glauber dynamics.

\paragraph{Relation between parameters.}Notice that bipartite pathwidth and bipartite treewidth can be related to path- and tree-independence numbers in a straightforward way.

\begin{proposition}\label{prop:BoundBipartite}
For any graph $G$, the inequalities $\bpw(G) \leqslant 2\pa(G)+1$ and $\btw(G) \leqslant 2\ta(G)+1$ hold.
\end{proposition}

\begin{proof}
Suppose that $G$ has a path decomposition $(X_t)_{1\leqslant t\leqslant p}$ of independence number $k=\pa(G)$.  
Let $V'\subseteq V(G)$ be such that $G[V']$ is bipartite. Then $(X_t \cap V')_{1\leqslant t\leqslant p}$ is a path decomposition of $G[V']$.  
Moreover, for every $1\leqslant t \leqslant p$, $G[X_t\cap V']$ is bipartite, and hence $X_t\cap V'$ can be partitioned into two independent sets.  
It follows that 
\[
|X_t\cap V'| \leqslant 2\alpha(G[X_t\cap V']) \leqslant 2k,
\] 
so $\bpw(G)\leqslant 2\pa(G)+1$.  
The argument for $\btw(G)\leqslant 2\ta(G)+1$ is identical, using a tree decomposition instead of a path decomposition.
\end{proof}

Since any bipartite induced subgraph of a claw-free graph is a disjoint union of paths and cycles~\cite{dyer2021counting}, the class of claw-free graphs has bipartite pathwidth at most $2$.  
However, this class has unbounded tree-independence number.  
Indeed, line graphs of grids are claw-free, and their tree-independence number is unbounded, since they have bounded maximum degree but unbounded treewidth.

\paragraph{A technical lemma.} In the next lemma, we present three properties  which are simple consequences of counting arguments, and will be used in most of the proofs.

\begin{lemma}\label{lemma:Technical}
Let $G$ be a graph, and $\lambda>0$. The three following properties hold :
\begin{enumerate}
    \item \label{lemma:Technical0} $Z_G(\lambda ) \leqslant  \max(\lambda^{\alpha(G)},1)|\mathcal{I}(G)|$ ;
    \item \label{lemma:Technical1} for any $X\subseteq V(G)$ and independent set $I\subseteq X$, 
    $$
    \sum_{\substack{K\in \mathcal{I}(G) \\ K\cap X =I}} \pi_{G,\lambda}(K) \leqslant \lambda^{|I|}\frac{Z_{V(G)\setminus X}(\lambda)}{Z_G(\lambda)};
    $$ 
    \item \label{lemma:Technical2} for any partition $(X,Y)$ of $V(G)$, $Z_G(\lambda) \leqslant Z_{X}(\lambda)Z_{Y}(\lambda)$.
\end{enumerate}
\end{lemma}

\begin{proof}
We first prove point (\ref{lemma:Technical0}). Observe that, given any independent set $I\in \mathcal{I}(G)$, $\lambda^{|I|} \leqslant \lambda^{\alpha(G)}$ if $\lambda \geqslant 1$, and $\lambda^{|I|}\leqslant 1$ if $\lambda \leqslant 1$. Thus, 
$
Z_G(\lambda ) \leqslant \sum_{I\in \mathcal{I}(G)} \max(\lambda^{\alpha(G)},1) 
$

Then, we prove (\ref{lemma:Technical1}). Let $K\in \mathcal{I}(G)$ such that $K\cap X=I$. Notice that $K$ is the union of an independent set of $G[V\setminus X]$, that is $K\setminus X$, with the independent set $I$. Thus, 
$$
\sum_{\substack{K\in \mathcal{I}(G) \\ K\cap X =I}} \lambda^{|K|} \leqslant \sum_{K'\in \mathcal{I}(G[V\setminus X])} \lambda^{|K'|+|I|} = \lambda^{|I|}Z_{V\setminus X}(\lambda)
$$
and the result is obtained by dividing by $Z_G(\lambda)$.

Finally, we prove point (\ref{lemma:Technical2}). Notice that for any independent set $I\in \mathcal{I}(G)$, $I\cap X \in \mathcal{I}(G[X])$ and $I\cap Y\in \mathcal{I}(G[Y])$. Thus, since $(X,Y)$ is a partition of $V(G)$,
\begin{align*}
Z_G(\lambda) &=\sum_{I\in \mathcal{I}(G)}\lambda^{|I|} \\
&\leqslant \sum_{I_X\in \mathcal{I}(G[X])}\sum_{I_Y\in \mathcal{I}(G[Y])} \lambda^{|I_X\cup I_Y|}\\
&= Z_{X}(\lambda) Z_{Y}(\lambda)
\end{align*}
\end{proof}

\subsection{Conductance} The conductance of Markov chain $\mathcal{M}$ is defined as the edge expansion of the underlying transition graph. It is the main tool to prove lower bound on the mixing time of Glauber dynamics. One of the simplest methods to bound the conductance, introduced by Randall~\cite{randall2006slow}, involves partitioning the state space $\mathcal{I}(G)$ into three subsets $(\Omega_S, \Omega_1, \Omega_2)$ such that $\Omega_S$ \emph{separates} $\Omega_1$ and $\Omega_2$. This means that for any pair of states $(I,J) \in \Omega_1 \times \Omega_2$ and any path $\gamma_{I,J} = (I = W_0, W_1, \ldots, W_\ell = J)$ in $\mathcal{G}(G)$, there exists at least one intermediate independent set $W_i$ such that $W_i \in \Omega_S$. For the sake of completeness, we now formally state the following theorem, which encapsulates this argument and will be used in Section~\ref{sec:LowerBounds}.

\begin{theorem}\cite{sinclair1992improved,randall2006slow}\label{thm:LowerBoundConductance}
Let $G$ be a graph and let $(\Omega_S, \Omega_1,\Omega_2)$ be a partition of $\mathcal{I}(G)$ such that $\Omega_S$ separates $\Omega_1$ and $\Omega_2$. For any $\lambda>0$,
$$
\tau_{G,\lambda} \geqslant \frac{\ln 2}{4}\left(\frac{\pi_{G,\lambda}(\Omega_1)}{\pi_{G,\lambda}(\Omega_S)}-2\right)
$$
\end{theorem}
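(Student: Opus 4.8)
Proof proposal for Theorem~\ref{thm:LowerBoundConductance} (the conductance lower bound).

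The plan is to bound the mixing time from below via the conductance of the Glauber dynamics Markov chain $\mathcal{M}$ on $\mathcal{G}(G)$, using the well-known inequality $\tau_{G,\lambda} \geqslant \frac{1}{4}\left(\frac{1}{\Phi} - 2\right)$ (or a variant with a $\ln 2$ factor, see \cite{sinclair1992improved}), where $\Phi$ is the conductance. Recall that for a set $S \subseteq \mathcal{I}(G)$ with $\pi_{G,\lambda}(S) \leqslant \tfrac12$, the conductance restricted to $S$ is $\Phi(S) = \frac{Q(S, S^c)}{\pi_{G,\lambda}(S)}$, where $Q(S,S^c) = \sum_{I \in S, J \notin S} \pi_{G,\lambda}(I) P(I,J)$ is the ergodic flow out of $S$, and $\Phi = \min_{\pi(S) \leqslant 1/2} \Phi(S)$. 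So it suffices to exhibit one set $S$ with small $\Phi(S)$.

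The natural choice is $S = \Omega_1$ if $\pi_{G,\lambda}(\Omega_1) \leqslant \tfrac12$; otherwise one takes $S = \Omega_2$, noting the statement is symmetric in $\Omega_1, \Omega_2$ only up to which one is "light" — I would handle this by observing that the separating hypothesis lets us assume WLOG that $\pi_{G,\lambda}(\Omega_1) \leqslant \pi_{G,\lambda}(\Omega_2)$, hence $\pi_{G,\lambda}(\Omega_1) \leqslant \tfrac12$, and the quantity $\frac{\pi_{G,\lambda}(\Omega_1)}{\pi_{G,\lambda}(\Omega_S)}$ in the statement is the one we control. The key structural step is: because $\Omega_S$ separates $\Omega_1$ from $\Omega_2$ in $\mathcal{G}(G)$, and since a single Glauber transition changes the independent set by at most one vertex (i.e.\ moves along an edge of $\mathcal{G}(G)$), there is \emph{no edge} of $\mathcal{G}(G)$ directly joining $\Omega_1$ to $\Omega_2$. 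Consequently every transition leaving $\Omega_1$ must land in $\Omega_S$, so the ergodic flow satisfies
$$
Q(\Omega_1, \Omega_1^c) = Q(\Omega_1, \Omega_S) = \sum_{I \in \Omega_1,\, J \in \Omega_S} \pi_{G,\lambda}(I)P(I,J).
$$
Using reversibility, $\pi_{G,\lambda}(I)P(I,J) = \pi_{G,\lambda}(J)P(J,I)$, so this sum is at most $\sum_{J \in \Omega_S} \pi_{G,\lambda}(J) \sum_{I} P(J,I) \leqslant \sum_{J \in \Omega_S} \pi_{G,\lambda}(J) = \pi_{G,\lambda}(\Omega_S)$. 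Therefore $\Phi(\Omega_1) = \frac{Q(\Omega_1,\Omega_1^c)}{\pi_{G,\lambda}(\Omega_1)} \leqslant \frac{\pi_{G,\lambda}(\Omega_S)}{\pi_{G,\lambda}(\Omega_1)}$, and plugging into $\tau_{G,\lambda} \geqslant \frac{\ln 2}{4}\left(\frac{1}{\Phi(\Omega_1)} - 2\right)$ gives the claimed bound.

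The only genuinely delicate point is the case analysis ensuring the set we apply the conductance bound to has stationary measure at most $\tfrac12$, together with invoking the correct form of the conductance--mixing inequality with the stated constants $\frac{\ln 2}{4}$ and $-2$; both are standard (the inequality $\tau \geqslant \frac{\ln 2}{4}(\Phi^{-1} - 2)$ follows from Sinclair's spectral-gap bound \cite{sinclair1992improved} combined with the relation between spectral gap and mixing time). Everything else is a direct unwinding of the definitions of flow, reversibility of Glauber dynamics, and the separation hypothesis; no new idea is needed beyond the observation that the one-vertex-change structure of Glauber dynamics forces all flow out of $\Omega_1$ through $\Omega_S$.
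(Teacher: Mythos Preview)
Your proposal is correct and follows essentially the same route as the paper's own proof (given in Appendix~\ref{appendix:Markov}): assume without loss of generality that $\pi_{G,\lambda}(\Omega_1)\leqslant\tfrac12$, use the separation hypothesis to deduce that all flow out of $\Omega_1$ goes into $\Omega_S$, bound $Q(\Omega_1,\Omega_S)=Q(\Omega_S,\Omega_1)\leqslant\pi_{G,\lambda}(\Omega_S)$ via reversibility, and then invoke the conductance--spectral-gap--mixing-time chain from \cite{sinclair1992improved} with $\varepsilon=\tfrac14$ to produce the $\tfrac{\ln 2}{4}$ constant. The paper spells out the last step through $\lambda_1\geqslant 1-2\Phi$ and the lower bound $\tau\geqslant\tfrac12\,\lambda_1(1-\lambda_1)^{-1}\ln(1/2\varepsilon)$, whereas you package this as the single inequality $\tau\geqslant\tfrac{\ln 2}{4}(\Phi^{-1}-2)$, but the content is identical.
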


\subsection{Canonical Paths}

The \emph{canonical paths} technique provides a powerful method for upper bounding the mixing time. Let $G$ be a graph, and for each pair of independent sets $I, J \in \mathcal{I}(G)$, fix a specific path $\gamma_{I,J} = (I = W_0, W_1, \ldots, W_\ell = J)$ in $\mathcal{G}(G)$. The path $\gamma_{I,J}$ is called the canonical path from $I$ to $J$. Let $\Gamma := \{\gamma_{I,J} \mid I,J \in \mathcal{I}(G)\}$ be the set of all fixed canonical paths. The \emph{congestion} through a transition $e = (W, W')$ where $|W\Delta W'|\leqslant 1$, is defined as
\begin{center}
\begin{equation}\label{Congestion}
\rho(\Gamma, e) := \frac{1}{\pi_{G,\lambda}(W)P(W, W')} \sum_{\substack{K, L \in \mathcal{I}(G) \\ \gamma_{K,L} \text{ uses } e}} \pi_{G,\lambda}(K) \pi_{G,\lambda}(L) |\gamma_{K,L}|
\end{equation}
\end{center}
where $|\gamma_{K,L}|$ is the length of the path $\gamma_{K,L}$ and $P(W,W')$ is the probability associated to the transition $(W,W')$. The overall congestion of the paths $\Gamma$ is then
$$
\rho(\Gamma) := \max_{e = (W,W') : P(W,W') > 0} \rho(\Gamma, e).
$$
We will use the following :

\begin{theorem}[\cite{sinclair1992improved}]\label{thm:BoundCanonicalPaths}
Let $G$ be a graph and let $\Gamma$ be a set of canonical paths between independent sets of $G$. For any $\lambda>0$,
$$
\tau_{G,\lambda} \leqslant \rho(\Gamma)\ln\left(\frac{4}{\min_{I\in \mathcal{I}(G)}\pi_{G,\lambda}(I)}\right)
$$
\end{theorem}

\section{Subexponential Mixing Time on Geometric Graphs}\label{sec:GeometricGraphs}

This section is dedicated to proving a slightly more general result that will imply Theorem~\ref{thm:BallGraphs}. First, we introduce the notion of clique-based separator trees, which serve as the main tool for our proof. Then, we establish that in any graph belonging to a class that admits clique-based separators, the Glauber dynamics mixes in subexponential time.

\subsection{Clique-based separator and clique-based separator tree}
\paragraph{Clique-based separator} Let $a\in (0,1)$ and $w>0$. Following from \cite{de2018framework}, we say that a triple $(A,B,S)$ is a \emph{$(a,w)$-clique based separation} of $G$ if it verifies the following properties :
\begin{itemize}
    \item $(A,B,S)$ is a partition of $G$ such that $(A\times B) \cap E(G) = \emptyset$ ;
    \item $|A| \leqslant a |V(G)|$ and $|B|\leqslant a |V(G)|$ ;
    \item there exists a set of cliques $C_1,...,C_k$ ($k\geqslant 1$) such that $S= \bigcup_{1\leqslant i \leqslant k} C_i$ and such that 
    $$
    \sum_{1\leqslant i \leqslant k} \log_2(|C_i|+1) \leqslant w
    $$
\end{itemize}
The set $S$ is then called a \emph{$(a,w)$-clique based separator} of $G$. A triple $(A,B,S)$ that satisfies only the first condition is simply called a \emph{separation} of $G$, and $S$ is a \emph{separator} of $G$.

\begin{observation}\label{Obs:CliqueBased}
Suppose that $(A,B,S)$ is a $(a,w)$-clique based separator of a graph $G$. Then $G[S]$ contains at most $2^{w}$ distinct independent sets, each of them of size at most $w$.
\end{observation}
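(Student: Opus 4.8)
The plan is to prove Observation~\ref{Obs:CliqueBased} by a direct counting argument, exploiting the fact that an independent set can contain at most one vertex from each clique. First I would recall the setup: we have $S = \bigcup_{1 \leqslant i \leqslant k} C_i$ where each $C_i$ is a clique of $G$, and $\sum_{i} \log_2(|C_i|+1) \leqslant w$. Any independent set $I$ in $G[S]$ intersects each clique $C_i$ in at most one vertex, so $I$ is determined by choosing, for each $i \in \{1,\dots,k\}$, either one of the $|C_i|$ vertices of $C_i$ or nothing — but this overcounts, since a given independent set can arise from several such choices (a vertex may lie in several cliques, or the same set may be reached by different selection patterns). Nonetheless this gives the crude bound $|\mathcal{I}(G[S])| \leqslant \prod_{i=1}^k (|C_i|+1)$, and hence in particular the number of \emph{maximal} independent sets is at most $\prod_{i=1}^k (|C_i|+1)$.

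The key computation is then to convert this product bound into the claimed $2^w$: taking $\log_2$ of both sides gives $\log_2 \prod_{i=1}^k (|C_i|+1) = \sum_{i=1}^k \log_2(|C_i|+1) \leqslant w$, so $\prod_{i=1}^k (|C_i|+1) \leqslant 2^w$, which bounds the number of MIS of $G[S]$ by $2^w$. For the size bound, I would observe that if $I$ is an independent set of $G[S]$ then $|I| = \sum_{i : I \cap C_i \neq \emptyset} 1 \leqslant k$; to get the sharper bound $|I| \leqslant w$, note that since each $|C_i| \geqslant 1$ we have $\log_2(|C_i|+1) \geqslant \log_2 2 = 1$, so $k \leqslant \sum_{i=1}^k \log_2(|C_i|+1) \leqslant w$, and therefore every independent set of $G[S]$ (in particular every MIS) has size at most $w$.

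I do not anticipate a genuine obstacle here; the statement is essentially immediate from the subadditivity of the selection bound and the definition of a clique-based separator. The only point requiring a modicum of care is the overcounting issue in the first bound — one should phrase it as ``the map sending an independent set to its tuple of clique-intersections is injective into $\prod_i (C_i \cup \{\bot\})$,'' which is true because $I \cap S = I$ is recoverable as $\bigcup_i (I \cap C_i)$ from the tuple, so the bound $|\mathcal{I}(G[S])| \leqslant \prod_i(|C_i|+1)$ is legitimate even though it is not tight. With that in hand both assertions of the observation follow in two lines each.
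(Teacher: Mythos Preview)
Your proposal is correct and follows essentially the same argument as the paper: bound the number of independent sets by $\prod_i(|C_i|+1) \leqslant 2^w$, and bound their size by $k \leqslant \sum_i \log_2(|C_i|+1) \leqslant w$. One tiny slip: the equality $|I| = \sum_{i : I \cap C_i \neq \emptyset} 1$ should be an inequality $\leqslant$ (the cliques need not be disjoint), but the conclusion $|I| \leqslant k$ is unaffected.
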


\begin{proof}
Since $C_1,...,C_k$ are cliques, any independent set of $G[S]$ intersects any clique on at most one vertex. It follows that there are at most $\prod_{1\leqslant i \leqslant k} (|C_i|+1)$ independent sets of $G[S]$ which can be upper bounded  by $2^w$ using simple computations. Then, notice that $\alpha(G[S]) \leqslant k$, and that $w=\sum_{1\leqslant i\leqslant k} \log_2(|C_i|+1) \geqslant \sum_{1\leqslant i\leqslant k} 1 = k$.
\end{proof}

Observe that a clique-based separator generalizes the classical notion of a separator, where only the number of vertices in the separator matters. Suppose that $(A, B, S)$ satisfies the first two conditions of the definition of a clique-based separator. Then, it also forms a $(a, |S|)$-clique-based separator by considering only cliques of size $1$. This shows that clique-based separators are particularly interesting when the number of vertices in $S$ is large compared to the total number of vertices in the graph.

\paragraph{Clique-Based Separator tree} Let $a\in (0,1)$ be a constant and let $g$ be a real-valued non-decreasing function.  We say that $\mathcal{T} = (T,\{X_t\}_{t\in V(T)})$  is a \emph{$(a,g)$-clique based separator tree} of a graph $G$ if  $T$ is a rooted binary full tree with the following properties:
\begin{enumerate}
    \item Each node $t \in V(T)$ is associated with a set $X_t \subseteq V(G)$.
    \item The sets $X_t$, for $t \in T$, partition $V(G)$, i.e., 
    $$
    \bigcup_{t \in V(T)} X_t = V(G), \quad \text{and} \quad X_t \cap X_{t'} = \emptyset \text{ for distinct } t, t' \in V(T).
    $$
    \item For each node $t \in V(T)$, let $V_t = \bigcup_{s} X_s$, where $s$ ranges over the descendants of $t$ (including $t$). Note that if $t$ is an internal node with children $u$ and $v$, then $V_t$ is the disjoint union of $X_t$, $V_u$, and $V_v$. If $t$ is a leaf, then $V_t = X_t$.
    
    For each internal node $t \in T$ with children $u$ and $v$, the triplet $(V_u, V_v, X_t)$ forms a $(a,g(|V_t|))$-clique based separation for the subgraph $G[V_t]$.
    \item For each leaf $t \in T$, we have $|V_t| \leqslant 1$.
\end{enumerate}

A $(1,id_{\mathbb{R}})$-clique-based separator tree of $G$ is simply called a \emph{separator tree} of $G$, as the third condition reduces to a separation condition, while the fourth condition becomes trivial.

\begin{observation}\label{Obs:CliqueBasedSeparatorTree}
Let $\mathcal{C}$ be a hereditary class of graphs such that each graph with $n$ vertices admits a $(a,g(n))$-clique based separator for some constant $a$ and function $g$ depending on $\mathcal{C}$. Then, each graph $G\in \mathcal{C}$ admits an $(a,g)$-clique based separator tree.
\end{observation}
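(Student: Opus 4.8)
The plan is to build the clique-based separator tree recursively, mimicking the standard construction of a separator tree from a separator theorem, but carefully tracking the clique decomposition of each separator. First I would set up the recursion: given $G \in \mathcal{C}$ with $n$ vertices, if $n \leqslant 1$ we take a single leaf with $X_t = V(G)$, which satisfies condition (4) trivially. Otherwise, since $\mathcal{C}$ is hereditary and $G$ has $n \geqslant 2$ vertices, $G$ admits an $(a, g(n))$-clique based separator $S = \bigcup_{i} C_i$ together with a partition $(A, B, S)$ of $V(G)$ with $(A \times B) \cap E(G) = \emptyset$ and $|A|, |B| \leqslant an$. I would make the root $t$ with $X_t = S$, and recurse on $G[A]$ and $G[B]$ — both of which lie in $\mathcal{C}$ by heredity — to obtain clique-based separator trees $\mathcal{T}_A$ and $\mathcal{T}_B$, whose roots become the two children $u, v$ of $t$. (If $A$ or $B$ is empty, one can either allow an empty-bag leaf or, more cleanly, note that then $S$ must already separate trivially and adjust so the tree stays binary and full; this is a minor bookkeeping point.)

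Next I would verify the four defining properties of an $(a,g)$-clique based separator tree for the tree $\mathcal{T}$ thus assembled. Property (1) is immediate by construction. Property (2), that the bags partition $V(G)$, follows by induction: the bags of $\mathcal{T}_A$ partition $A$, those of $\mathcal{T}_B$ partition $B$, and together with $X_t = S$ they partition $A \cup B \cup S = V(G)$; disjointness across the two subtrees holds because $A$ and $B$ are disjoint and neither meets $S$. For property (3), observe that with this construction $V_t = V(G)$, $V_u = A$, and $V_v = B$, so the triple $(V_u, V_v, X_t) = (A, B, S)$ is by hypothesis an $(a, g(n))$-clique based separation of $G[V_t] = G$; since $n = |V_t|$, this is exactly a $(a, g(|V_t|))$-clique based separation, as required. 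The analogous statement at every internal node of $\mathcal{T}_A$ and $\mathcal{T}_B$ holds by the induction hypothesis. Property (4) holds at the leaves by the base case of the recursion.

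The one point that needs a little care — and the main (mild) obstacle — is that the recursion must actually terminate and produce a \emph{full} binary tree, i.e. every internal node has exactly two children and every leaf has $|V_t| \leqslant 1$. Termination is guaranteed because $|A|, |B| \leqslant a n$ with $a < 1$, so each recursive call strictly decreases the vertex count (for $n \geqslant 2$, $an < n$), and after finitely many steps we reach instances with at most one vertex. The fullness requirement is where the empty-part issue arises: if, say, $B = \emptyset$, a naive recursion would give $t$ only one child. The clean fix is to observe that when $n \geqslant 2$ we may always split so that both $A$ and $B$ are nonempty — if the given separator leaves one side empty, then $S$ together with the nonempty side already forms a separation of a smaller graph, and we can instead first peel off a single vertex as a (trivial, clique-of-size-one) separator, which is itself an $(a, g(n))$-clique based separator since $1 = \log_2(1+1) \leqslant g(n)$ provided $g(n) \geqslant 1$; alternatively, and most simply, one permits leaves with empty bags (still satisfying $|V_t| \leqslant 1$) and notes this does not affect any of the conditions. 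Either way the construction goes through, completing the induction and hence the proof of Observation~\ref{Obs:CliqueBasedSeparatorTree}.
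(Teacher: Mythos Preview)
Your proposal is correct and follows essentially the same inductive construction as the paper's proof: handle the base case $n\leqslant 1$ with a single leaf, otherwise take an $(a,g(n))$-clique based separator $(A,B,S)$, recurse on $G[A]$ and $G[B]$ (which lie in $\mathcal{C}$ by heredity), and attach the two resulting trees under a new root with bag $S$. In fact you are more explicit than the paper, which simply asserts the result ``by definition'' without verifying the four properties or discussing the empty-side/fullness issue you address.
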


\begin{proof}
We prove the result by induction on $n$. If $G$ has at most $1$ vertex, the claim holds trivially by considering a clique-based separator tree with a single node $t$ and setting $X_t = V(G)$.

Now, assume the result holds for all graphs with at most $n-1$ vertices, and let $G$ be a $n$-vertex graph. Since $G \in \mathcal{C}$, it admits an $(a, g(n))$-clique-based separator $(A,B,S)$. By the induction hypothesis, $G[A]$ has a $(a,g)$-clique-based separator tree $\mathcal{T}_A = (T_A, \{X_t\}_{t\in V(T_A)})$, and similarly, $G[B]$ has a clique-based separator tree $\mathcal{T}_B = (T_B, \{X_t\}_{t\in V(T_B)})$.

We construct a separator tree $\mathcal{T}$ for $G$ by introducing a new root node $t_0$ and defining the tree structure as follows: the vertex set of $T$ is $V(T_A) \cup V(T_B) \cup \{t_0\}$, and the edge set is $E(T_A) \cup E(T_B) \cup {(t_0, r_A), (t_0, r_B)}$, where $r_A$ and $r_B$ are the roots of $T_A$ and $T_B$, respectively. Let $X_{t_0}=S$ be the bag associated with the root $t_0$, and then $\mathcal{T} =(T, \{X_t\}_{t\in V(T)})$ is a $(a,g)$-clique-based separator tree of $G$ by definition.
\end{proof}

\subsection{Bounding mixing time}

Using the notion of a separator tree, we first define a set of canonical paths that will exhibit subexponential congestion. This will then lead to an upper bound on the mixing time.

\paragraph*{Definition and properties of canonical paths.} Let $G$ be a graph and $\mathcal{T} = (T, \{X_t\}_{t\in V(T)})$  be a separator tree of $G$. For the moment, $\mathcal{T}$ does not have to be specifically clique-based separator tree. We define $\Gamma_{\mathcal{T},G}$ a set of canonical paths between the independent sets of $G$ as follows. We define by induction on $T$, for any $t\in V(T)$ and $I,J$ independent sets of $G[V_t]$, the canonical path $\gamma_{I,J}$ from $I$ to $J$.
\begin{itemize}
    \item If $t$ is a leaf, then the path $\gamma_{I,J}$ consists in removing all the vertices of $I$ and then add all vertices of $J$ ;
    \item otherwise, $t$ has a left child $u\in V(T)$ and a right child $v\in V(T)$.  $\gamma_{I,J}$ is obtained as follows :
    \begin{itemize}
        \item go from $I$ to $I\setminus X_t$ by deleting all the vertices of $I\cap X_t$;
        \item go from $I\setminus X_t$ to $(J\cap V_u)\cup (I\cap V_v)$ by following the canonical path from $(I\cap V_u)$ to $(J\cap V_u)$ obtained by induction on $u$;
        \item go from $(J\cap V_u)\cup (I\cap V_v)$ to $(J\cap V_u)\cup (J\cap V_v) = J \setminus X_t$ by following the canonical path from $(I\cap V_v)$ to $(J\cap V_v)$ obtained by induction on $v$;
        \item go from $J\setminus X_t$ to $J$ by adding all the vertices of $J\cap X_t$.
    \end{itemize}
\end{itemize}
Notice that for any pair of independent sets $(I,J)$ of $G$, the path $\gamma_{I,J}$ is defined. We set $\Gamma(\mathcal{T},G)$ to be the set of such paths.  For a vertex $v\in V(G)$, let $t_v\in V(T)$ be the unique node of $T$ such that $v\in X_{t_v}$. In addition, let $$A_v =\bigcup_{t\in V(T) \mid \text{$t$ ancestor of $t_v$ in $T$}} X_t
$$ be all the vertices of $G$ which appear in the bag of an ancestor node of $t_v$ in $T$. Note that, in particular, $v\in A_v$.

The key property of this set of canonical paths is that, for any transition $(I, I \cup {v})$ or $(I, I \setminus {v})$, where $I \in \mathcal{I}(G)$ and $v \in V(G)$, the congestion of this transition is bounded by a function of $Z_{A_v}(\lambda)^2$, up to a polynomial factor. The next lemmas formalizes this argument. In the following, we denote by $\pi$ the distribution $\pi_{G,\lambda}$ when there is no ambiguity.

\begin{lemma}\label{lemma:Congestion1}
Let $G$ be a graph, $\mathcal{T}$ a separator tree of $G$, $I$ an independent set of $G$, and $v\in V(G)\setminus N[I]$. Then, for any $\lambda>0$,
$$
\rho(\Gamma_{\mathcal{T},G}, (I,I\cup\{v\}) \leqslant  2n^2\frac{\lambda+1}{\lambda} \frac{1}{\lambda^{|I\cap A_v|}} Z_{A_v}(\lambda)^2 
$$
\end{lemma}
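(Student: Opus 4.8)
The plan is to estimate the congestion $\rho(\Gamma_{\mathcal{T},G},(I,I\cup\{v\}))$ directly from its definition in~\eqref{Congestion}. First I would note that $P(I,I\cup\{v\}) = \frac{1}{n}\cdot\frac{\lambda}{\lambda+1}$, so that the prefactor $\frac{1}{\pi(I)P(I,I\cup\{v\})}$ contributes exactly $n\frac{\lambda+1}{\lambda}\cdot\frac{1}{\pi(I)}$. The path lengths $|\gamma_{K,L}|$ are at most $2n$ (each canonical path adds or removes each vertex at most a bounded number of times; a crude bound suffices since it only costs a polynomial factor), which accounts for the $2n^2$ and leaves us to bound
$$
\frac{1}{\pi(I)}\sum_{\substack{K,L\in\mathcal{I}(G)\\ \gamma_{K,L}\text{ uses }(I,I\cup\{v\})}}\pi(K)\pi(L)
$$
by $\frac{1}{\lambda^{|I\cap A_v|}}Z_{A_v}(\lambda)^2$.

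The heart of the argument is a structural claim about which pairs $(K,L)$ can route through the transition $(I,I\cup\{v\})$, and what $I$ must look like for them. By the recursive definition of $\gamma_{K,L}$, the transition adding $v$ at node $t_v$ occurs only during the ``add back $L\cap X_{t_v}$'' phase (or a leaf phase) of the recursion at $t_v$, which is reached inside the recursion at every ancestor of $t_v$. I would argue that at the moment $v$ is added, the current independent set $I$ is completely determined by a bounded amount of information: on the ancestor bags $A_v$ it is a ``mixed'' set whose intersection with each ancestor bag is either $K\cap X_t$ or $L\cap X_t$ (depending on whether the recursion at that ancestor has already switched from $K$ to $L$ on the relevant side), while outside $A_v$ it agrees with $K$ on $V(G)\setminus (A_v\cup V_{t_v})$-type pieces and with $L$ elsewhere — in particular, $I\setminus A_v$ is freely ``$K$ on one part, $L$ on the other''. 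The consequence I want is: $I\cap A_v$ together with the ``switch pattern'' determines how $K$ and $L$ must look on $A_v$, so summing $\pi(K)\pi(L)$ over all valid $(K,L)$ factors through $A_v$ as a constrained quantity times, on the complement, at most $Z$-type sums.

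Concretely, I would show: fixing $I$, every pair $(K,L)$ routing through $(I,I\cup\{v\})$ satisfies $K\cap A_v$ and $L\cap A_v$ each agree with $I\cap A_v$ on a prescribed sub-collection of ancestor bags and are otherwise independent sets of the corresponding induced subgraphs, while $K$ and $L$ restricted to $V(G)\setminus A_v$ range (essentially) freely over independent sets. Then applying Lemma~\ref{lemma:Technical}\eqref{lemma:Technical1} to handle the fixed intersection on $A_v$ — which produces the factor $\lambda^{|I\cap A_v|}\frac{Z_{V(G)\setminus A_v}(\lambda)}{Z_G(\lambda)}$ for each of $K$ and $L$, and hence a $\frac{1}{\lambda^{|I\cap A_v|}}$ rather than $\lambda^{|I\cap A_v|}$ after we divide by $\pi(I)\geqslant \lambda^{|I|}/Z_G(\lambda)$ and cancel — together with Lemma~\ref{lemma:Technical}\eqref{lemma:Technical2} to split $Z_{V(G)\setminus A_v}(\lambda)\leqslant Z_{A_v}(\lambda)\cdot(\text{stuff})$ appropriately, should collapse everything to $\frac{1}{\lambda^{|I\cap A_v|}}Z_{A_v}(\lambda)^2$ up to the polynomial slack already extracted. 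The main obstacle, and the step I would be most careful about, is the bookkeeping in the structural claim: precisely identifying, for a transition deep in the recursion, which ancestor bags have ``already switched'' and thereby pinning down $I$ from $(K,L)$ and conversely bounding the $(K,L)$ that are consistent with a given $I$ — getting the quantifiers and the $A_v$-versus-complement split exactly right is where the real work lies, whereas the counting inequalities are then routine invocations of Lemma~\ref{lemma:Technical}.
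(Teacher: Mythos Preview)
Your opening bookkeeping (transition probability, path length $\leqslant 2n$, reducing to bounding $\frac{1}{\pi(I)}\sum\pi(K)\pi(L)$) matches the paper. The gap is in the structural claim.

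You assert that on each ancestor bag $X_t\subseteq A_v$ the current set $I$ equals either $K\cap X_t$ or $L\cap X_t$. This is false: by the recursive definition of $\gamma_{K,L}$, at every \emph{strict} ancestor $t$ of $t_v$ the ``remove $K\cap X_t$'' phase has already run but ``add $L\cap X_t$'' has not, so $I\cap X_t=\emptyset$. In particular $K\cap A_v$ and $L\cap A_v$ are \emph{not} determined by $I$, so your plan to apply Lemma~\ref{lemma:Technical}\eqref{lemma:Technical1} with $X=A_v$ (fixing $K\cap A_v=I\cap A_v$) cannot work, and the suggested split $Z_{V(G)\setminus A_v}\leqslant Z_{A_v}\cdot(\text{stuff})$ goes in the wrong direction for Lemma~\ref{lemma:Technical}\eqref{lemma:Technical2}. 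You also claim that on $V(G)\setminus(A_v\cup V_{t_v})$ the set $I$ agrees with $K$; in fact this region consists of sibling subtrees along the root--$t_v$ path, and whether $I$ agrees with $K$ or $L$ there depends on the left/right position of each sibling.

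What the paper actually does is take the partition $(L_v,A_v,R_v)$ of $V(G)$ where $L_v$ is the union of bags preceding $t_v$ in the \emph{post-order} of $T$ and $R_v$ is the remainder. The correct constraints are $I\cap L_v=L\cap L_v$ and $I\cap R_v=K\cap R_v$, so it is $K$ and $L$ that are pinned down on halves of the \emph{complement} of $A_v$, while both are free on $A_v$. Lemma~\ref{lemma:Technical}\eqref{lemma:Technical1} with $X=R_v$ (resp.\ $X=L_v$) then produces factors $Z_{L_v}Z_{A_v}$ and $Z_{R_v}Z_{A_v}$; the crucial missing ingredient in your outline is that $A_v$ \emph{separates} $L_v$ from $R_v$ in $G$, which gives $Z_{L_v}(\lambda)Z_{R_v}(\lambda)\leqslant Z_G(\lambda)$ and is exactly what cancels the stray $Z_G$ and leaves $Z_{A_v}(\lambda)^2$.
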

\begin{proof}
By Equation~\ref{Congestion}, we have 
\begin{align*}
    \rho(\Gamma_{\mathcal{T},G}, (I, I\cup\{v\}) &= \frac{1}{\pi(I) P(I,I\cup\{v\})} \sum_{\substack{K,L\in \mathcal{I}(G) \\ (I,I')\in \gamma_{K, L}}} \pi(K)\pi(L) |\gamma_{K, L}| \\
    &\leqslant 2n^2\frac{\lambda+1}{\lambda} \frac{Z_G(\lambda)}{\lambda^{|I|}}\sum_{\substack{K,L\in \mathcal{I}(G) \\ (I,I')\in \gamma_{K, L}}} \pi(K)\pi(L) 
\end{align*}
The first inequality holds since for any $K,L\in \mathcal{I}(G)$, the length of a canonical path is at most $2n$, the probability $P(I,I\cup\{v\})$ is $\frac{1}{n}\frac{\lambda}{\lambda+1}$, and finally $\pi(I) = \frac{\lambda^{|I|}}{Z_G(\lambda)}$. 

Suppose that $(I,I\cup\{v\})\in \gamma_{K,L}$ for some independent sets $K,L \in \mathcal{I}(G)$. Consider the following order on the nodes of $T$ : for $t,t'\in V(T)$, we state $t \prec t'$ if and only if $t$ is strictly before $t'$ in the post-order of $T$. Then, we define two sets :
$$
\begin{array}{ll}
     L_v &= \bigcup_{t \in V(T) \mid t\prec t_v} X_{t} \\
     R_v &= V(G) \setminus (L_v\cup A_v)
\end{array}
$$
Notice that, $(L_v,A_v,R_v)$ forms a partition of the set fo vertices $V(G)$. An illustration of the three sets is given in Figure~\ref{fig:1stCase}.

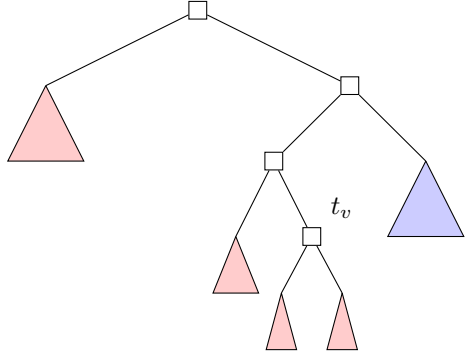
\begin{figure}[h]
\centering
    \begin{tikzpicture}
    \node[draw] (1) at (0,0) {};
    \node[draw] (2) at (2, -1) {};
    \node[draw] (3) at (1, -2) {};
    \node[draw, label=above right:$t_v$] (4) at (1.5, -3) {};
    \draw (1) -- (2)--(3)--(4) ;
    
    \draw (1) -- (-2, -1) ;
    \fill[draw, fill = red!20] (-2, -1)-- (-2.5, -2) -- (-1.5, -2) -- (-2,-1);
    
    \draw (2) -- (3,-2);
    \fill[draw, fill = blue!20] (3,-2) -- (2.5, -3) -- (3.5,-3) -- (3,-2);
    
    \draw (3) -- (0.5,-3);
    \fill[draw, fill = red!20] (0.5, -3) -- (0.2,-3.75) -- (0.8, -3.75) -- (0.5,-3);
    
    \draw (4) -- (1.1,-3.75);
	\fill[draw, fill = red!20] (1.1,-3.75) -- (0.9, -4.5) -- (1.3, -4.5) -- (1.1,-3.75);
    
    \draw (4) -- (1.9,-3.75);
    \fill[draw, fill = red!20] (1.9,-3.75) -- (1.7, -4.5) -- (2.1, -4.5) -- (1.9,-3.75);

    \end{tikzpicture}

    \caption{Illustration of the sets $L_v,A_v$ and $R_v$ in the case of $I'=I\cup \{v\}$. The nodes in red represents $L_v$, the nodes in blue $R_v$ and the remaining represents $A_v$, ie the ancestors of $v$.}
    \label{fig:1stCase}
\end{figure}

Suppose that $(I,I\cup \{v\})$ is on the path from $K$ to $L$. Along the path, all vertices from $K\cap A_v$ have been removed, all vertices from $K\cap L_v$ have been changed to the vertices of $L\cap L_v$, and the vertices of $K \cap R_v$ have not been modified. Thus, $I\cap L_v = L\cap L_v$, $I\cap R_v = K\cap R_v$. Thus, 
$$
\sum_{\substack{K,L\in \mathcal{I}(G) \\ (I,I\cup\{v\})\in \gamma_{K,L}}} \pi(K)\pi(L) \leqslant \left(\sum_{\substack{K\in \mathcal{I}(G) \\ K\cap R_v = I\cap R_v}} \pi(K) \right) 
    \left(\sum_{\substack{L\in \mathcal{I}(G) \\ L\cap L_v = I\cap L_v}} \pi(L) \right)
$$

We bound the first of two sum on the above equation. By applying Lemma~\ref{lemma:Technical}.(\ref{lemma:Technical1}) on $G$, subset $R_v\subseteq V(G)$ and independent set $I\cap R_v$, we have 
$$
\sum_{\substack{K\in \mathcal{I}(G) \\ K\cap R_v = I\cap R_v}} \pi(K) \leqslant \frac{\lambda^{|I\cap R_v|}}{Z_{G}(\lambda)} Z_{V(G) \setminus R_v}
$$
Then, by Lemma~\ref{lemma:Technical}.(\ref{lemma:Technical2}) on graph $G[V(G)\setminus R_v]$ and partition $(L_v,A_v)$ of $V(G)\setminus R_v$, we get 
$$
\sum_{\substack{K\in \mathcal{I}(G) \\ K\cap R_v = I\cap R_v}} \pi(K) \leqslant \frac{\lambda^{|I\cap R_v|}}{Z_G(\lambda)} Z_{L_v}(\lambda) Z_{A_v}(\lambda)
$$
Similarly, we bound the second sum as follows :
$$
\sum_{\substack{L\in \mathcal{I}(G) \\ L\cap L_v = I\cap L_v}} \pi(L) \leqslant \frac{\lambda^{|I\cap L_v|}}{Z_G(\lambda)} Z_{R_v}(\lambda) Z_{A_v}(\lambda)
$$
Combining the inequalities above, we obtain that 
\begin{align*}
    \rho(\Gamma_{\mathcal{T},G}, (I, I\cup \{v\})) 
    &\leqslant 2n^2\frac{\lambda+1}{\lambda}\frac{Z_G(\lambda)}{\lambda^{|I|}} \sum_{\substack{K,L\in \mathcal{I}(G) \\ (I,I\cup\{v\})\in \gamma_{K,L}}} \pi(K)\pi(L) \\
    &\leqslant 2n^2\frac{\lambda+1}{\lambda}  \frac{Z_G(\lambda)}{\lambda^{|I|}} \left(\sum_{\substack{K\in \mathcal{I}(G) \\ K\cap R_v = I\cap R_v}} \pi(K) \right) 
    \left(\sum_{\substack{L\in \mathcal{I}(G) \\ L\cap L_v = I\cap L_v}} \pi(L) \right)\\[1em]
    &\leqslant 2n^2\frac{\lambda+1}{\lambda}\frac{Z_G(\lambda)}{\lambda^{|I|}} \frac{\lambda^{|I\cap L_v|} \lambda^{|I\cap R_v|}}{Z_G(\lambda)^2}
    Z_{L_v}(\lambda) Z_{R_v}(\lambda)Z_{A_v}(\lambda)^2 \\[1em]
    &= 2n^2\frac{\lambda+1}{\lambda} \frac{1}{\lambda^{|I\cap A_v|}}\frac{Z_{L_v}(\lambda) Z_{R_v}(\lambda)}{Z_G(\lambda)} Z_{A_v}(\lambda)^2 
\end{align*}
where the last inequality holds since $\frac{\lambda^{|I\cap L_v|} \lambda^{|I\cap R_v|}}{\lambda^{|I|}} = \frac{1}{\lambda^{|I\cap A_v|}}$.

Notice that $R_v$ and $L_v$ are disconnected in $G$, thus if $K\in \mathcal{I}(G[L_v])$ and $L\in \mathcal{I}(G[R_v])$, then $K\cup L$ is an independent set of $G$. It follows that
$$
\left(\sum_{K \in \mathcal{I}(G[L_v])}  \lambda^{|K|}  \right) 
    \left(\sum_{L \in \mathcal{I}(G[R_v])} \lambda^{|L|} \right) \leqslant \sum_{I\in \mathcal{I}(G)}\lambda^{|I|} = Z_G(\lambda)
$$
and finally, 
$$
\rho(\Gamma, (I, I\cup \{v\}))  \leqslant 2n^2\frac{\lambda+1}{\lambda} \frac{1}{\lambda^{|I\cap A_v|}} Z_{A_v}(\lambda)^2 
$$
\end{proof}

\begin{lemma}\label{lemma:Congestion2}
Let $G$ be a graph, $\mathcal{T}$ a separator tree of $G$, $I$ an independent set of $G$, and $v\in I$. Then, for any $\lambda>0$,
$$
\rho(\Gamma_{\mathcal{T},G}, (I,I\setminus\{v\}) \leqslant
 2n^2\frac{\lambda+1}{\lambda^{|I\cap A_v|}}Z_{A_v}(\lambda)^2
$$
\end{lemma}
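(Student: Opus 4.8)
The plan is to mirror the proof of Lemma~\ref{lemma:Congestion1} almost verbatim, since the two statements are the removal-versus-addition analogues of each other and the canonical paths $\Gamma_{\mathcal{T},G}$ are symmetric under reversal. First I would expand $\rho(\Gamma_{\mathcal{T},G}, (I, I\setminus\{v\}))$ using Equation~\ref{Congestion}. The transition probability is now $P(I, I\setminus\{v\}) = \frac{1}{n}\cdot\frac{1}{\lambda+1}$, and $\pi(I) = \frac{\lambda^{|I|}}{Z_G(\lambda)}$, so the prefactor becomes $2n^2(\lambda+1)\frac{Z_G(\lambda)}{\lambda^{|I|}}$ after bounding each path length by $2n$. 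Note this is why the statement has $\frac{\lambda+1}{\lambda^{|I\cap A_v|}}$ rather than $\frac{\lambda+1}{\lambda}\cdot\frac{1}{\lambda^{|I\cap A_v|}}$: removing a vertex costs $\frac{1}{\lambda+1}$ instead of $\frac{\lambda}{\lambda+1}$, so we lose one factor of $\frac{1}{\lambda}$ in the reciprocal of the transition probability but no longer pick up the extra $\frac{1}{\lambda}$ from the $\lambda$ in the numerator of $P$.

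Next I would analyze which pairs $(K,L)$ have $\gamma_{K,L}$ passing through the transition $(I, I\setminus\{v\})$. The key combinatorial fact is identical: in the recursive construction of $\gamma_{K,L}$, the vertex $v$ (which lies in bag $X_{t_v}$) is removed precisely at the step where the recursion reaches node $t_v$ and deletes $K\cap X_{t_v}$, before recursing into the children. At that moment the independent set has the form (vertices of $L$ in the part already processed) $\cup$ (vertices of $K$ in the ancestor bags of $t_v$, minus those of $X_{t_v}$ already being removed) $\cup$ (vertices of $K$ in the part not yet processed). Using the same partition $(L_v, A_v, R_v)$ of $V(G)$ as in Lemma~\ref{lemma:Congestion1} — $L_v$ the bags strictly before $t_v$ in post-order, $A_v$ the ancestors of $t_v$, $R_v$ the rest — I would argue that if $(I, I\setminus\{v\})\in\gamma_{K,L}$ then $I\cap L_v = L\cap L_v$ and $I\cap R_v = K\cap R_v$ (the precise partition may differ slightly depending on whether the deletion happens strictly before or after the recursive calls, but $v\in X_{t_v}$ is the common element and the three-way split is governed by post-order position exactly as before). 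Hence
$$
\sum_{\substack{K,L\in\mathcal{I}(G)\\(I,I\setminus\{v\})\in\gamma_{K,L}}}\pi(K)\pi(L) \leqslant \Bigl(\sum_{\substack{K\in\mathcal{I}(G)\\K\cap R_v=I\cap R_v}}\pi(K)\Bigr)\Bigl(\sum_{\substack{L\in\mathcal{I}(G)\\L\cap L_v=I\cap L_v}}\pi(L)\Bigr).
$$

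Then I would bound each factor exactly as in Lemma~\ref{lemma:Congestion1}: apply Lemma~\ref{lemma:Technical}.(\ref{lemma:Technical1}) with subset $R_v$ and independent set $I\cap R_v$, then Lemma~\ref{lemma:Technical}.(\ref{lemma:Technical2}) on the partition $(L_v, A_v)$ of $V(G)\setminus R_v$, to get $\frac{\lambda^{|I\cap R_v|}}{Z_G(\lambda)}Z_{L_v}(\lambda)Z_{A_v}(\lambda)$, and symmetrically $\frac{\lambda^{|I\cap L_v|}}{Z_G(\lambda)}Z_{R_v}(\lambda)Z_{A_v}(\lambda)$ for the other factor. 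Multiplying everything together, using $\frac{\lambda^{|I\cap L_v|}\lambda^{|I\cap R_v|}}{\lambda^{|I|}} = \lambda^{-|I\cap A_v|}$, and finally using that $L_v$ and $R_v$ are non-adjacent in $G$ so that $Z_{L_v}(\lambda)Z_{R_v}(\lambda)\leqslant Z_G(\lambda)$, I obtain
$$
\rho(\Gamma_{\mathcal{T},G},(I,I\setminus\{v\})) \leqslant 2n^2(\lambda+1)\frac{1}{\lambda^{|I\cap A_v|}}Z_{A_v}(\lambda)^2,
$$
which is the claimed bound. The only genuine point requiring care — and the main obstacle — is verifying that the partition identities $I\cap L_v = L\cap L_v$ and $I\cap R_v = K\cap R_v$ still hold for the deletion transition, since the recursive path first deletes $K\cap X_t$ and only at the very end adds $L\cap X_t$; one must check that the deletion of $v$ at node $t_v$ occurs while the sets processed "before" (in post-order) already carry $L$-values and those "after" still carry $K$-values, which is exactly the structure of $\gamma_{K,L}$ and matches the figure, so the argument goes through unchanged. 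Everything else is a routine transcription of the previous proof with the single bookkeeping change in the transition probability.
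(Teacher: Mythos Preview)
Your overall strategy is exactly the paper's: expand the congestion, identify a three-way partition $(L_v,A_v,R_v)$ so that $I\cap L_v=L\cap L_v$ and $I\cap R_v=K\cap R_v$, and then repeat the Lemma~\ref{lemma:Technical} computations from Lemma~\ref{lemma:Congestion1}. The bookkeeping on the transition probability is correct.

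There is, however, a genuine gap in the partition step. You claim the post-order partition from Lemma~\ref{lemma:Congestion1} works here, and then hedge that ``the precise partition may differ slightly''. It does differ, and not slightly: with the post-order partition the identity $I\cap L_v=L\cap L_v$ fails. In the recursive definition of $\gamma_{K,L}$, the deletion of $K\cap X_{t_v}$ happens \emph{before} recursing into the children of $t_v$, so at the moment of the transition $(I,I\setminus\{v\})$ the two subtrees rooted at the children of $t_v$ still carry $K$-values, not $L$-values. In post-order those subtrees precede $t_v$, so your $L_v$ contains them; but $I$ agrees with $K$ there, not with $L$, and the factorisation breaks down.

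The paper fixes this by switching to the \emph{pre}-order of $T$: it sets $R_v=\bigcup_{t_v\prec t}X_t$ (pre-order) and $L_v=V(G)\setminus(A_v\cup R_v)$. In pre-order the descendants of $t_v$ come after $t_v$, so they land in $R_v$, which is exactly where the still-$K$ vertices must go (compare Figures~\ref{fig:1stCase} and~\ref{fig:2ndCase}: the subtrees below $t_v$ change colour). With this corrected partition the identities $I\cap L_v=L\cap L_v$ and $I\cap R_v=K\cap R_v$ hold, $L_v$ and $R_v$ remain non-adjacent in $G$, and every subsequent line of your argument goes through verbatim.
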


\begin{proof}
Similarly to the proof of Lemma~\ref{lemma:Congestion1}, by Equation~\ref{Congestion} we obtain 
\begin{align*}
    \rho(\Gamma_{\mathcal{T},G}, (I, I\setminus\{v\}) &= \frac{1}{\pi(I) P(I,I\setminus\{v\})} \sum_{\substack{K,L\in \mathcal{I}(G) \\ (I,I')\in \gamma_{K, L}}} \pi(K)\pi(L) |\gamma_{K, L}| \\
    &\leqslant 2n\frac{Z_G(\lambda)}{\lambda^{|I|}}n(\lambda+1) \sum_{\substack{K,L\in \mathcal{I}(G) \\ (I,I')\in \gamma_{K, L}}} \pi(K)\pi(L) \\
\end{align*}

Consider the following order on the nodes of $T$ : for $t,t'\in V(T)$, we state $t \prec t'$ if and only if $t$ is strictly before $t'$ in the pre-order of $T$. Then, we define two sets :
$$
\begin{array}{ll}
     R_v &= \bigcup_{t \in V(T) \mid t_v\prec t} X_{t} \\
     L_v &= V(G) \setminus (A_v\cup R_v)
\end{array}
$$
Note that $(L_v,A_v,R_v)$ is a separation of $V(G)$. An illustration of the three sets is given in Figure~\ref{fig:2ndCase}.

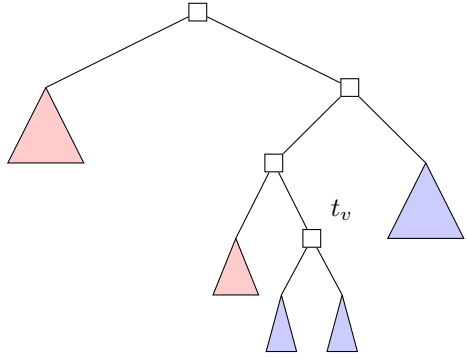
\begin{figure}[h]
\centering
    \begin{tikzpicture}
    \node[draw] (1) at (0,0) {};
    \node[draw] (2) at (2, -1) {};
    \node[draw] (3) at (1, -2) {};
    \node[draw, label=above right:$t_v$] (4) at (1.5, -3) {};
    \draw (1) -- (2)--(3)--(4) ;
    
    \draw (1) -- (-2, -1) ;
    \fill[draw, fill = red!20] (-2, -1)-- (-2.5, -2) -- (-1.5, -2) -- (-2,-1);
    
    \draw (2) -- (3,-2);
    \fill[draw, fill = blue!20] (3,-2) -- (2.5, -3) -- (3.5,-3) -- (3,-2);
    
    \draw (3) -- (0.5,-3);
    \fill[draw, fill = red!20] (0.5, -3) -- (0.2,-3.75) -- (0.8, -3.75) -- (0.5,-3);
    
    \draw (4) -- (1.1,-3.75);
	\fill[draw, fill = blue!20] (1.1,-3.75) -- (0.9, -4.5) -- (1.3, -4.5) -- (1.1,-3.75);
    
    \draw (4) -- (1.9,-3.75);
    \fill[draw, fill = blue!20] (1.9,-3.75) -- (1.7, -4.5) -- (2.1, -4.5) -- (1.9,-3.75);

    \end{tikzpicture}

    \caption{Illustration of the sets $L_v,A_v$ and $R_v$ in the case of $I'=I\setminus \{v\}$. The nodes in red represents $L_v$, the nodes in blue $R_v$ and the remaining represents $A_v$, ie the ancestors of $v$.}
    \label{fig:2ndCase}
\end{figure}

Suppose that $(I,I\setminus \{v\})$ is on the path from $K$ to $L$. Along the path, all vertices from $K\cap A_v$ have been removed, all vertices from $K\cap L_v$ have been changed to the vertices of $L\cap L_v$, and the vertices of $K \cap R_v$ have not been modified. Thus, $I\cap L_v = L\cap L_v$ and $I\cap R_v = K\cap R_v$. 

Following the same exact step of computation as in Lemma~\ref{lemma:Congestion1}, we obtain that :
$$
\sum_{\substack{K,L\in \mathcal{I}(G) \\ (I,I')\in \gamma_{K, L}}} \pi(K)\pi(L) \leqslant
\frac{\lambda^{|I\cap L_v|}\lambda^{|I\cap R_v|}}{Z_G(\lambda)}\underbrace{\frac{Z_{L_v}(\lambda)Z_{R_v}(\lambda)}{Z_G(\lambda)}}_{\leqslant 1}Z_{A_v}(\lambda)^2
$$
and finally, 
$$
\rho(\Gamma_{\mathcal{T},G}, (I, I\setminus\{v\}) \leqslant 2n^2 \frac{\lambda+1}{\lambda^{|I\cap A_v|}} Z_{A_v}(\lambda)^2
$$
\end{proof}

\begin{lemma}\label{lemma:Congestion3}
Let $G$ be a graph, $\mathcal{T}=(T,\{X_t\}_{t\in V(T)})$ be a separator tree of $G$. Then, for any $\lambda>0$, 
$$
\rho(\Gamma_{\mathcal{T},G}) \leqslant 4n^2\tilde{\lambda}^{2\alpha+1}\max_{v\in V(G)}|\mathcal{I}(G[A_v])|^2
$$
where  $\tilde{\lambda} = e^{|\ln \lambda|}$ and $\alpha = \max_{v\in V(G)} \alpha(G[A_v])$.
\end{lemma}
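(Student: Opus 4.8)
The plan is to obtain the bound on the global congestion $\rho(\Gamma_{\mathcal{T},G}) = \max_{e}\rho(\Gamma_{\mathcal{T},G},e)$ by splitting the maximum over transitions $e=(W,W')$ with $P(W,W')>0$ into the only two non-trivial types: an \emph{addition} $(I,I\cup\{v\})$ with $v\in V(G)\setminus N[I]$, which is exactly the situation of Lemma~\ref{lemma:Congestion1}, and a \emph{deletion} $(I,I\setminus\{v\})$ with $v\in I$, which is exactly the situation of Lemma~\ref{lemma:Congestion2}. (Self-loops $W=W'$ carry congestion $0$, since no non-trivial canonical path uses them.) It therefore suffices to show that the right-hand sides of both lemmas are bounded above by $4n^2\tilde{\lambda}^{2\alpha+1}\max_{v\in V(G)}|\mathcal{I}(G[A_v])|^2$.

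The heart of the argument is a single uniform estimate: for every independent set $I$ and every vertex $v$,
$$
\frac{Z_{A_v}(\lambda)^2}{\lambda^{|I\cap A_v|}}\;\leqslant\;\tilde{\lambda}^{2\alpha}\,|\mathcal{I}(G[A_v])|^2 .
$$
To prove it, note first that $|I\cap A_v|\leqslant\alpha(G[A_v])\leqslant\alpha$, and then split on $\lambda$. If $\lambda\geqslant 1$ (so $\tilde{\lambda}=\lambda$), then $\lambda^{-|I\cap A_v|}\leqslant 1$, and Lemma~\ref{lemma:Technical}.(\ref{lemma:Technical0}) applied to $G[A_v]$ gives $Z_{A_v}(\lambda)\leqslant\lambda^{\alpha(G[A_v])}|\mathcal{I}(G[A_v])|\leqslant\lambda^{\alpha}|\mathcal{I}(G[A_v])|$, whence the claim. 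If $\lambda<1$ (so $\tilde{\lambda}=1/\lambda$), then $\lambda^{|J|}\leqslant 1$ for every independent set $J$, so $Z_{A_v}(\lambda)\leqslant|\mathcal{I}(G[A_v])|$, while $\lambda^{-|I\cap A_v|}\leqslant\lambda^{-\alpha}=\tilde{\lambda}^{\alpha}\leqslant\tilde{\lambda}^{2\alpha}$; multiplying gives the claim. This casework is the only place any computation is needed, and it is the step I would treat as the (mild) main obstacle: one must be careful that the powers of $\tilde{\lambda}$ collapse to $2\alpha$ and not something larger.

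It then remains to plug this estimate into the two lemmas. For an addition, Lemma~\ref{lemma:Congestion1} gives $\rho(\Gamma_{\mathcal{T},G},(I,I\cup\{v\}))\leqslant 2n^2\frac{\lambda+1}{\lambda}\cdot\frac{Z_{A_v}(\lambda)^2}{\lambda^{|I\cap A_v|}}\leqslant 2n^2\frac{\lambda+1}{\lambda}\tilde{\lambda}^{2\alpha}|\mathcal{I}(G[A_v])|^2$, and since $\frac{\lambda+1}{\lambda}=1+\frac{1}{\lambda}\leqslant 2\max(1,\frac{1}{\lambda})\leqslant 2\tilde{\lambda}$ this is at most $4n^2\tilde{\lambda}^{2\alpha+1}|\mathcal{I}(G[A_v])|^2$. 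For a deletion, Lemma~\ref{lemma:Congestion2} gives $\rho(\Gamma_{\mathcal{T},G},(I,I\setminus\{v\}))\leqslant 2n^2(\lambda+1)\cdot\frac{Z_{A_v}(\lambda)^2}{\lambda^{|I\cap A_v|}}\leqslant 2n^2(\lambda+1)\tilde{\lambda}^{2\alpha}|\mathcal{I}(G[A_v])|^2$, and using $\lambda+1\leqslant 2\max(\lambda,1)\leqslant 2\tilde{\lambda}$ this is again at most $4n^2\tilde{\lambda}^{2\alpha+1}|\mathcal{I}(G[A_v])|^2$. Replacing $|\mathcal{I}(G[A_v])|$ by $\max_{v\in V(G)}|\mathcal{I}(G[A_v])|$ and taking the maximum over all transitions $e$ yields $\rho(\Gamma_{\mathcal{T},G})\leqslant 4n^2\tilde{\lambda}^{2\alpha+1}\max_{v\in V(G)}|\mathcal{I}(G[A_v])|^2$, which is the stated bound.
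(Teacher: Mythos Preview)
Your proof is correct and follows essentially the same approach as the paper: you invoke Lemmas~\ref{lemma:Congestion1} and~\ref{lemma:Congestion2} for the two non-trivial transition types, dismiss self-loops, and then bound the common factor $\lambda^{-|I\cap A_v|}Z_{A_v}(\lambda)^2$ via a case split on $\lambda\gtrless 1$ using Lemma~\ref{lemma:Technical}.(\ref{lemma:Technical0}). The only cosmetic difference is that you isolate the estimate $\frac{Z_{A_v}(\lambda)^2}{\lambda^{|I\cap A_v|}}\leqslant\tilde{\lambda}^{2\alpha}|\mathcal{I}(G[A_v])|^2$ as a standalone claim before plugging in the prefactors $\frac{\lambda+1}{\lambda}$ and $\lambda+1$, whereas the paper merges all factors and then splits on $\lambda$; the computations are the same.
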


\begin{proof}[Proof of Lemma~\ref{lemma:Congestion3}]
Let $(I,I')$ be a transition of the Markov chain. By definition of $\Gamma := \Gamma_{\mathcal{T},G}$, no canonical path uses the transition $(I,I)$, since at each step, either a vertex is added to or removed from the current independent set. It follows that $\rho(\Gamma, (I,I))=0$. Thus, we only need to consider the cases where either $I' = I \cup \{v\}$ for some $v \in V(G) \setminus N[I]$ or $I' = I \setminus \{v\}$ for some $v \in I$. A direct application of Lemma~\ref{lemma:Congestion1} and Lemma~\ref{lemma:Congestion2} gives  
$$
\rho(\Gamma,(I,I')) \leqslant 2n^2 \max\left(\lambda+1,\frac{\lambda+1}{\lambda}\right) \frac{1}{\lambda^{|I\cap A_v|}}Z_{A_v}(\lambda)^2
$$
First, suppose that $\lambda \geqslant 1$. In this case, the equality $\tilde{\lambda} = \lambda$ holds, and by Lemma~\ref{lemma:Technical}.(\ref{lemma:Technical0}),  
$$
Z_{A_v}(\lambda) \leqslant \tilde{\lambda}^{\alpha(G[A_v])} |\mathcal{I}(G[A_v])|
$$
Since $\frac{1}{\lambda^{|I\cap A_v|}} \leqslant 1$ and $\lambda+1 \leqslant 2\lambda$, we obtain:  
$$
\rho(\Gamma,(I,I')) \leqslant 4n^2 \tilde{\lambda}^{2\alpha(G[A_v])+1} |\mathcal{I}(G[A_v])|^2
$$
Next, suppose that $\lambda \in (0,1)$. In this case, the equality $\tilde{\lambda} = \lambda^{-1}$ holds. Since $\lambda^{\alpha(G[A_v])} \leqslant 1$, Lemma~\ref{lemma:Technical}.(\ref{lemma:Technical0}) implies $Z_{A_v}(\lambda) \leqslant |\mathcal{I}(G[A_v])|$. Furthermore, we have $\max\left(\lambda+1, \frac{\lambda+1}{\lambda} \right) \leqslant 2\tilde{\lambda}$, and $\frac{1}{\lambda^{|I\cap A_v|}} \leqslant \tilde{\lambda}^{\alpha(G[A_v])}$. It follows that  
$$
\rho(\Gamma,(I,I')) \leqslant 4n^2 \tilde{\lambda}^{\alpha(G[A_v])+1} |\mathcal{I}(G[A_v])|^2
$$
The conclusion is straightforward.

\end{proof}

\paragraph*{Main result.} Lemma~\ref{lemma:Congestion3} gives an upper bound on the congestion of $\Gamma_{\mathcal{T}, G}$ depending on $\max_{v\in V(G)} \alpha(G[A_v])$ and $\max_{v\in V(G)} |\mathcal{I}(G[A_v])|$. However, this result does not rely on any specific properties beyond $\mathcal{T}$ being a separator tree. Theorem~\ref{thm:SubexpMixingTime} leverages the fact that $\mathcal{T}$ is a clique-based separator tree of $G$ to derive a subexponential bound on the congestion, and consequently, on the mixing time.

\begin{theorem}\label{thm:SubexpMixingTime}
Let $\mathcal{C}$ be a hereditary class of graphs, such that any $n$-vertex graph $G\in \mathcal{C}$ has a $(a,cn^{1-b}\ln^\delta(n))$-clique based separator for some constants $a$, $b$, $c$ and $\delta$ depending on~$\mathcal{C}$. Let $G\in \mathcal{C}$ be a $n$-vertex graph. The mixing time of the Glauber dynamics on $G$ with fugacity $\lambda> 0$ satisfies :
$$
\tau_{G,\lambda} = (2\tilde{\lambda})^{O(n^{1-b}\ln^\delta(n))}.
$$
\end{theorem}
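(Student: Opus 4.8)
The plan is to combine Observation~\ref{Obs:CliqueBasedSeparatorTree}, which guarantees that every $G\in\mathcal{C}$ admits an $(a,g)$-clique-based separator tree $\mathcal{T}$ with $g(n)=cn^{1-b}\ln^d(n)$, with the congestion bound of Lemma~\ref{lemma:Congestion3} and the canonical-paths bound of Theorem~\ref{thm:BoundCanonicalPaths}. Concretely, I would fix such a tree $\mathcal{T}$ and consider the canonical paths $\Gamma_{\mathcal{T},G}$. By Lemma~\ref{lemma:Congestion3}, $\rho(\Gamma_{\mathcal{T},G})\leqslant 4n^2\tilde\lambda^{2\alpha+1}\max_{v}|\mathcal{I}(G[A_v])|^2$, where $\alpha=\max_v\alpha(G[A_v])$ and $A_v$ is the union of the bags on the root-to-$t_v$ path. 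So everything reduces to controlling $\alpha(G[A_v])$ and $|\mathcal{I}(G[A_v])|$ for a worst-case vertex $v$.

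The key structural step is to bound the "weight" accumulated along a root-to-leaf branch of $\mathcal{T}$. Along the branch from the root down to $t_v$, the node sizes $|V_t|$ strictly decrease geometrically: each clique-based separation splits $G[V_t]$ so that both children have at most $a|V_t|$ vertices, hence $|V_{t'}|\leqslant a|V_t|$ whenever $t'$ is a child of $t$ on the branch. Thus the sizes along the branch are bounded by $n, an, a^2n,\dots$ Since $A_v$ is the disjoint union of the bags $X_t$ over ancestors $t$ of $t_v$, and each such $X_t$ is itself a $g(|V_t|)$-clique-based separator of $G[V_t]$, Observation~\ref{Obs:CliqueBased} gives $\alpha(G[X_t])\leqslant g(|V_t|)$ and $|\mathcal{I}(G[X_t])|\leqslant 2^{g(|V_t|)}$. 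Summing geometrically, $\alpha(G[A_v])\leqslant\sum_{i\geqslant 0}g(a^i n)=\sum_{i\geqslant 0}c(a^in)^{1-b}\ln^d(a^in)$. Since $b>0$ the factor $(a^i)^{1-b}=(a^{1-b})^i$ decays geometrically while $\ln^d(a^in)\leqslant\ln^d n$, so the sum is $O(n^{1-b}\ln^d n)$; call this bound $W=O(n^{1-b}\ln^d n)$. Then $\alpha(G[A_v])\leqslant W$ and, because $\mathcal{I}(G[A_v])$ embeds into the product of the $\mathcal{I}(G[X_t])$ over ancestors $t$ (each independent set of $G[A_v]$ is determined by its trace on each bag), $|\mathcal{I}(G[A_v])|\leqslant\prod_t 2^{g(|V_t|)}=2^{\sum_t g(|V_t|)}\leqslant 2^{W}$.

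Plugging these into Lemma~\ref{lemma:Congestion3} yields
$$
\rho(\Gamma_{\mathcal{T},G})\leqslant 4n^2\,\tilde\lambda^{2W+1}\,(2^{W})^2 = 4n^2\,(2\tilde\lambda)^{2W}\cdot\tilde\lambda\cdot 2\leqslant (2\tilde\lambda)^{O(n^{1-b}\ln^d n)},
$$
absorbing the polynomial factor $4n^2$ and the constants into the exponent (using $n^{2}=(2\tilde\lambda)^{O(\ln n)}$ and $W=\Omega(1)$, or simply noting $n^2=2^{O(\ln n)}\leqslant (2\tilde\lambda)^{O(\ln n)}$). Finally, Theorem~\ref{thm:BoundCanonicalPaths} gives $\tau_{G,\lambda}\leqslant\rho(\Gamma_{\mathcal{T},G})\ln\bigl(4/\min_I\pi_{G,\lambda}(I)\bigr)$; the stationary probability of any independent set is at least $\min(\lambda,1)^n/Z_G(\lambda)\geqslant\min(\lambda,1)^n/((1+\lambda)^n)$ — more crudely, $\pi_{G,\lambda}(I)\geqslant \tilde\lambda^{-n}|\mathcal{I}(G)|^{-1}\geqslant\tilde\lambda^{-n}2^{-n}=(2\tilde\lambda)^{-n}$ — so the logarithmic factor is $O(n\ln(2\tilde\lambda))$, which is again absorbed into $(2\tilde\lambda)^{O(n^{1-b}\ln^d n)}$ since $n^{1-b}\ln^d n$ dominates nothing here but the product of a $(2\tilde\lambda)^{O(n^{1-b}\ln^d n)}$ term with a $(2\tilde\lambda)^{O(n)}$...

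wait — I must be careful: $n$ is larger than $n^{1-b}\ln^d n$, so this crude bound on $\min_I\pi_{G,\lambda}(I)$ is too lossy. The fix is standard: bound $\ln(1/\min_I\pi)$ by $\ln Z_G(\lambda)+n|\ln\lambda|\leqslant O(n)+n|\ln\lambda|$, but then observe that the final mixing-time bound $(2\tilde\lambda)^{O(n^{1-b}\ln^d n)}$ should actually be read with the convention that polynomial-in-$n$ and $2^{O(n)}$-type slack is dominated only when... Actually the cleanest route, and the one I expect the paper takes, is to bound $\ln(1/\min_I\pi_{G,\lambda}(I))$ by $n\ln(2\tilde\lambda)$ and accept a bound of $(2\tilde\lambda)^{O(n^{1-b}\ln^d n)}\cdot (2\tilde\lambda)^{O(n)}$ — no: that is $(2\tilde\lambda)^{O(n)}$, losing subexponentiality.

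The genuinely correct argument, which I would carry out carefully, is: the $\ln(1/\min\pi)$ factor contributes only $\mathrm{poly}(n)\cdot\ln(2\tilde\lambda)$ when we note $\rho(\Gamma)$ already carries a $(2\tilde\lambda)^{\Theta(n^{1-b}\ln^d n)}$ factor, and multiplying a $(2\tilde\lambda)^{\Theta(n^{1-b}\ln^d n)}$ quantity by $O(n\ln(2\tilde\lambda))$ still lies in $(2\tilde\lambda)^{O(n^{1-b}\ln^d n)}$ because the exponent only increases by $O(\ln n)\cdot\frac{\ln\ln(2\tilde\lambda)+\ln n}{\ln(2\tilde\lambda)}$, which is $O(n^{1-b}\ln^d n)$ since $n^{1-b}\ln^d n\to\infty$. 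I expect the main obstacle in writing this up cleanly is precisely this bookkeeping of the $\tilde\lambda$-dependence: making sure the additive $\mathrm{poly}(n)$ and the $n|\ln\lambda|$ terms from $\ln(1/\min\pi)$ and from the $4n^2$ prefactor are genuinely absorbed into the exponent $O(n^{1-b}\ln^d n)$ of $(2\tilde\lambda)$, rather than silently becoming $(2\tilde\lambda)^{O(n)}$. The geometric-series argument bounding $W$ and the product bound on $|\mathcal{I}(G[A_v])|$ are routine.
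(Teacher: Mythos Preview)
Your approach is essentially identical to the paper's: fix the $(a,g)$-clique-based separator tree via Observation~\ref{Obs:CliqueBasedSeparatorTree}, invoke Lemma~\ref{lemma:Congestion3}, bound $\alpha(G[A_v])$ and $|\mathcal{I}(G[A_v])|$ using Observation~\ref{Obs:CliqueBased} together with the geometric decay $|V_{t_i}|\leqslant a^i n$ along the root-to-$t_v$ branch, and sum $\sum_i g(a^i n)\leqslant\frac{c}{1-a^{1-b}}\,n^{1-b}\ln^d n$ exactly as you do. (Minor slip: the geometric series converges because $a^{1-b}<1$, which needs $b<1$, not $b>0$.)

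Your anxiety about the final step is unfounded and stems from a misreading. Theorem~\ref{thm:BoundCanonicalPaths} multiplies $\rho(\Gamma)$ by $\ln\bigl(4/\min_I\pi_{G,\lambda}(I)\bigr)$, not by $4/\min_I\pi_{G,\lambda}(I)$. So the crude bound $\min_I\pi_{G,\lambda}(I)\geqslant(2\tilde\lambda)^{-n}$ contributes only a \emph{multiplicative} factor $\ln\bigl(4(2\tilde\lambda)^n\bigr)=O\bigl(n\ln(2\tilde\lambda)\bigr)$, not a factor $(2\tilde\lambda)^{O(n)}$. Since $2\tilde\lambda\geqslant 2$, we have $n\ln(2\tilde\lambda)\leqslant(2\tilde\lambda)^{1+\log_2 n}$, so this factor (and the $4n^2$ prefactor) adds only $O(\log n)$ to the exponent and is absorbed into $(2\tilde\lambda)^{O(n^{1-b}\ln^d n)}$ without any difficulty. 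The paper does precisely the ``crude'' route you first write down and then talk yourself out of; there is no delicate bookkeeping here.
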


\begin{proof} Let $g : \mathbb{R} \rightarrow \mathbb{R}$ be the function defined by $g(x) = cx^{1-b}\ln^\delta(x)$, and let $\mathcal{T} = (T, \{X_t\}_{t\in V(T)})$ be a $(a,g)$-clique based separator tree of $G$.

We show that the congestion is bounded on the set of canonical paths $\Gamma :=\Gamma_{\mathcal{T},G}$. By Lemma~\ref{lemma:Congestion3}, we have 
$$
\rho(\Gamma) \leqslant 4n^2 \tilde{\lambda}^{1+2\max_{v\in V(G)} \alpha(G[A_v])} \max_{v\in V(G)}|\mathcal{I}(G[A_v])|^2
$$
Consider $v\in V(G)$, and we bound both $\alpha(G[A_v])$ and $|\mathcal{I}(G[A_v])|$.
We describe how to construct an independent set of $G[A_v]$. Let $t_0...t_{k-1}$ be the ancestors of $t_v$ in $T$, where $t_0$ is the root and $t_{k-1}=t_v$, and notice that $X_{t_0},...,X_{t_{k-1}}$ forms a partition of $A_v$. Since any independent set $J \in \mathcal{I}(G[A_v])$ is the union of $k$ disjoint independent sets of $G[X_{t_0}],...,G[X_{t_{k-1}}]$ respectively, we have both inequalities
$$
\begin{array}{rl}
\alpha(G[A_v]) &\leqslant \sum_{i=0}^{k-1} \alpha(G[X_{t_i}]) \\[1em]
|\mathcal{I}(G[A_v])| &\leqslant \prod_{i=0}^{k-1} |\mathcal{I}(G[X_{t_i}])|
\end{array}
$$
Using Observation~\ref{Obs:CliqueBased}, there at most $2^{g(n)}$ independent sets in $G[X_{t_0}]$, and each of them has size at most $g(n)$.  Similarly, since $X_{t_1}$ is a clique based separator in a graph with less than $an$ vertices, $G[X_{t_1}]$ contains at most $2^{g(an)}$ independent sets of size at most $g(an)$. Repeating the same reasoning, we obtain that 
$$
\alpha(G[A_v]) \leqslant \sum_{i=0}^{k-1} g(a^in)
$$
and also
$$
|\mathcal{I}(G[A_v])|  \leqslant \prod_{i=0}^{k-1} 2^{g(a^in)}
= 2^{\sum_{i=0}^{k-1} g(a^in)}
$$
The sum can be bounded above :
\begin{align*}
    \sum_{i=0}^{k-1} g(a^in) &= \sum_{i=0}^{k-1} c(a^in)^{1-b}\ln^\delta(a^in) \\
    &= cn^{1-b}\sum_{i=0}^{k-1} (a^{1-b})^i\ln^\delta(a^in) \\
    &\leqslant cn^{1-b}\ln^\delta(n) \sum_{i=0}^{k-1} (a^{1-b})^i\\
    &\leqslant \frac{c}{1-a^{1-b}}n^{1-b}\ln^\delta(n) 
\end{align*}
where the last inequality holds since $a^{1-b} <1$ and $\sum_{i=0}^{+\infty}(a^{1-b})^i = \frac{1}{1-a^{1-b}}$. Altogether, the congestion verifies: 
$$
\rho(\Gamma) \leqslant 4n^2 (2\tilde{\lambda})^{2\frac{c}{1-a^{1-b}} n^{1-b}\ln^\delta(n)+1} \\
$$
Finally, using Theorem~\ref{thm:BoundCanonicalPaths} and $\min_{I\in \mathcal{I}(G)} \pi_{G,\lambda}(I) \geqslant (2\tilde{\lambda})^{-n}$,
\begin{align*}
    \tau_{G,\lambda} &\leqslant 4n^2(2\tilde{\lambda})^{2\frac{c}{1-a^{1-b}} n^{1-b}\ln^\delta(n)+1}\ln\left( 4(2\tilde{\lambda})^n\right) \\[1em]
    &\leqslant (2\tilde{\lambda})^{\gamma_{a,b,c}n^{1-b}\ln^\delta(n)}
\end{align*}
where $\gamma_{a,b,c}$ is a large enough constant depending only on $a$, $b$ and $c$. 
\end{proof}

We now discuss implications of Theorem~\ref{thm:SubexpMixingTime} together with previous works on clique-based separators from \cite{de2018framework,de2023clique}. A geometric object in $\mathbb{R}^d$ is said to be \emph{convex} if, for any two points inside the object, the entire line segment connecting them also lies within the object. A set of geometric objects is \emph{fat} if there exists a constant $c>0$ such that each object $O$ contains a ball of radius $r$ and is contained in a ball of radius at most $c \cdot r$. In particular, similarly-sized fat objects are those where the ratio between the largest and smallest enclosing balls is bounded. A \emph{map graph} is a graph obtained as follows: consider a planar graph $H$ with vertices representing regions in a planar map. The map graph is then constructed by connecting two vertices if their corresponding regions share a boundary. These graphs generalize planar graphs and may have unbounded clique sizes. A family of regions in the plane forms a collection of \emph{pseudo-disks} if the boundaries of any two regions intersect at most twice. That is, for any two regions, their intersection consists of at most two connected components. Pseudo-disk intersection graphs generalize disk graphs.

\begin{corollary}
Let $G$ be a graph with $n$ vertices. The mixing time of the Glauber dynamics on $G$ with fugacity $\lambda > 0$ satisfies:
\begin{itemize}
    \item If $G$ is the intersection graph of $d$-dimensional convex fat objects, or similarly-sized fat objects, then 
        $
        \tau_{G,\lambda} = (2\tilde{\lambda})^{O(n^{1-1/d})}.
        $
    \item If $G$ is the intersection graph of pseudo-disks, then
        $
        \tau_I{G,\lambda} = (2\tilde{\lambda})^{O(n^{\frac{2}{3}}\ln(n))}.
        $
    \item If $G$ is a map graph, then
        $
        \tau_{G,\lambda} = (2\tilde{\lambda})^{O(\sqrt{n})}.
        $
\end{itemize}
\end{corollary}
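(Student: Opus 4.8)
The plan is to obtain each item as an immediate consequence of Theorem~\ref{thm:SubexpMixingTime}, instantiated with the clique-based separator constructions already available in the literature \cite{de2018framework,de2023clique}. Theorem~\ref{thm:SubexpMixingTime} has two hypotheses: that $\mathcal{C}$ is hereditary, and that every $n$-vertex member of $\mathcal{C}$ has a $(a, cn^{1-b}\ln^d(n))$-clique based separator for constants depending only on $\mathcal{C}$. So the proof amounts to checking, for each of the three classes, that it is closed under vertex deletion and that the separator weight quoted in the literature fits the template $cn^{1-b}\ln^d(n)$; Observation~\ref{Obs:CliqueBasedSeparatorTree} (used already inside the proof of Theorem~\ref{thm:SubexpMixingTime}) then upgrades the per-graph separators to a separator tree.

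First I would handle intersection graphs of $d$-dimensional convex fat objects and of similarly-sized fat objects. Both classes are hereditary, since deleting a vertex corresponds to deleting the associated object and leaves a graph of the same type. By \cite{de2018framework}, any such $n$-vertex graph admits a clique-based separator of weight $O(n^{1-1/d})$, which is the template with $b = 1/d$ and no logarithmic factor ($d=0$ in the notation of Theorem~\ref{thm:SubexpMixingTime}). Plugging this in gives $\tau_{G,\lambda} = (2\tilde{\lambda})^{O(n^{1-1/d})}$; in particular, taking balls in $\mathbb{R}^d$ recovers Theorem~\ref{thm:BallGraphs}.

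Next, intersection graphs of pseudo-disks form a hereditary class for the same reason, and \cite{de2023clique} supplies a clique-based separator of weight $O(n^{2/3}\ln n)$, i.e.\ the template with $1-b = 2/3$ and one logarithmic factor. Theorem~\ref{thm:SubexpMixingTime} then yields the stated bound $\tau_{G,\lambda} = (2\tilde{\lambda})^{O(n^{2/3}\ln n)}$. Finally, map graphs are closed under taking induced subgraphs (an induced subgraph of a map graph is again a map graph), and they admit clique-based separators of weight $O(\sqrt{n})$ by \cite{de2018framework}; applying Theorem~\ref{thm:SubexpMixingTime} with $b = 1/2$ and no logarithmic factor gives $\tau_{G,\lambda} = (2\tilde{\lambda})^{O(\sqrt{n})}$.

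There is no genuine obstacle here; the work is purely bookkeeping. The only point that deserves an explicit sentence is the pseudo-disk case, where one must make sure the $\ln n$ factor in the separator weight is absorbed correctly — it matches the $\ln^d(n)$ term of Theorem~\ref{thm:SubexpMixingTime} with $d = 1$ — and, more generally, that in every case the constants $a, b, c, d$ depend only on the class and not on the particular graph, which is exactly what the cited separator theorems guarantee.
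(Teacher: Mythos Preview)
Your proposal is correct and matches the paper's approach: the corollary is presented there as an immediate consequence of Theorem~\ref{thm:SubexpMixingTime} combined with the clique-based separator bounds from \cite{de2018framework,de2023clique}, exactly as you argue. The only minor quibble is attribution---the map-graph separator of weight $O(\sqrt{n})$ is established in \cite{de2023clique} rather than \cite{de2018framework}---but this does not affect the argument.
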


The proof of Theorem~\ref{thm:BallGraphs} follows directly, as ball graphs in $\mathbb{R}^d$ are intersection graphs of convex objects. 

\section{Parameterized Mixing Time}\label{sec:Parameterized}

This section is dedicated to the proof of Theorem~\ref{thm:PathTree}. All the upper bounds on the mixing time follow the same outline as in the previous section: we construct a set of canonical paths, which in this case depend on path decompositions rather than separator trees. We then show that the congestion of these paths is bounded by the desired function of the parameter. Finally, we extend these results to the tree independence number by establishing structural connections between path and tree decompositions.

\paragraph*{Path independence number and pathwidth.}
Let $\mathcal{P}=\{X_t\}_{1\leqslant t \leqslant p}$ be a path decomposition of a graph $G$, where $p$ is the number of bags in the path decomposition. Up to adding a bag, we suppose that $X_p$ is empty. Given two independent sets $I,J\in \mathcal{I}(G)$, we define the canonical path from $I$ to $J$ as follows. The step $1$ consists in, starting from $I$,  removing all the vertices of $I\cap X_1$ in arbitrary order. Then, for any $2\leqslant i \leqslant p$ successively, step $i$ consists in removing the vertices from $I\cap X_{i}$ in arbitrary order, and adding all the vertices from $J\cap X_{i-1}\setminus X_i$ in arbitrary order as well.  We call $\gamma_{I,J}$ the set of transitions obtained. It is straightforward to prove that $\gamma_{I,J}$ is indeed a path from $I$ to $J$, since each vertex $v\in I$ is removed at step $t_0$ where $t_0 = \min \{i \mid v\in X_i\}$, and any vertex $v\in J$ is added at step $t_1$ where $t_1= \min \{i\mid v\in X_{i-1} \wedge v\notin X_i\}$. We call $\Gamma_{\mathcal{P},G}$ the set of canonical path $\gamma_{I,J}$ obtained for all pair of independent sets $I,J\in \mathcal{I}(G)$.

The next lemma is the equivalent of Lemma~\ref{lemma:Congestion1} for path decomposition instead of separator tree.
\begin{lemma}\label{lemma:CongestionPath1}
Let $G$ be a graph, $\mathcal{P} = \{X_t\}_{1\leqslant t\leqslant p}$ a path decomposition of $G$ with $X_p=\emptyset$, $I\in \mathcal{I}(G)$ and $v\notin N[I]$. In addition, let $t = \min \{i \mid v\in X_{i-1} \wedge v\notin X_i\}$. Then, for any $\lambda> 0$,
$$
\rho(\Gamma_{\mathcal{P},G}, (I, I\cup \{v\})) \leqslant 2n^2\frac{\lambda+1}{\lambda}\frac{1}{\lambda^{|I\cap X_t|}} Z_{X_t}(\lambda)^2
$$
\end{lemma}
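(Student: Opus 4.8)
The plan is to mirror the proof of Lemma~\ref{lemma:Congestion1}, replacing the role of the ancestor set $A_v$ in the separator tree by the bag $X_t$, where $t = \min\{i \mid v \in X_{i-1} \wedge v \notin X_i\}$ is the step at which $v$ is inserted along any canonical path. First I would expand $\rho(\Gamma_{\mathcal{P},G},(I,I\cup\{v\}))$ using Equation~\ref{Congestion}: since every canonical path has length at most $2n$, since $P(I,I\cup\{v\}) = \frac{1}{n}\frac{\lambda}{\lambda+1}$, and since $\pi(I) = \lambda^{|I|}/Z_G(\lambda)$, we get
$$
\rho(\Gamma_{\mathcal{P},G},(I,I\cup\{v\})) \leqslant 2n^2\frac{\lambda+1}{\lambda}\frac{Z_G(\lambda)}{\lambda^{|I|}} \sum_{\substack{K,L\in\mathcal{I}(G)\\ (I,I\cup\{v\})\in\gamma_{K,L}}} \pi(K)\pi(L).
$$

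Next I would control which pairs $(K,L)$ can route through the transition $(I,I\cup\{v\})$. Partition $V(G)$ into three sets determined by the path decomposition: $L_v = \bigcup_{i < t} X_i \setminus X_t$ (vertices already ``finished'' by step $t$), $X_t$ itself, and $R_v = V(G)\setminus(L_v\cup X_t)$ (vertices touched only at steps $\geqslant t$, plus those in $X_t$ are excluded). Because $\mathcal{P}$ is a path decomposition, every edge has both endpoints in a common bag, so there is no edge between $L_v$ and $R_v$ — this is the analogue of $L_v,R_v$ being disconnected in the separator-tree proof. When $(I,I\cup\{v\})$ lies on $\gamma_{K,L}$, the adding of $v$ at step $t$ forces that all of $K$'s vertices on the ``left'' side have been converted to $L$'s, while the ``right'' side still agrees with $K$; concretely $I\cap L_v = L\cap L_v$ and $I\cap R_v = K\cap R_v$. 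Hence the double sum factorizes:
$$
\sum_{\substack{K,L\\ (I,I\cup\{v\})\in\gamma_{K,L}}}\pi(K)\pi(L) \leqslant \Bigl(\sum_{\substack{K\in\mathcal{I}(G)\\ K\cap R_v = I\cap R_v}}\pi(K)\Bigr)\Bigl(\sum_{\substack{L\in\mathcal{I}(G)\\ L\cap L_v = I\cap L_v}}\pi(L)\Bigr).
$$

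Then I would bound each factor with Lemma~\ref{lemma:Technical}. Applying Lemma~\ref{lemma:Technical}.(\ref{lemma:Technical1}) with $X = R_v$ gives $\sum_{K\cap R_v = I\cap R_v}\pi(K) \leqslant \frac{\lambda^{|I\cap R_v|}}{Z_G(\lambda)}Z_{V(G)\setminus R_v}(\lambda)$, and since $V(G)\setminus R_v$ is partitioned into $L_v$ and $X_t$, Lemma~\ref{lemma:Technical}.(\ref{lemma:Technical2}) yields $Z_{V(G)\setminus R_v}(\lambda) \leqslant Z_{L_v}(\lambda)Z_{X_t}(\lambda)$; symmetrically the second factor is at most $\frac{\lambda^{|I\cap L_v|}}{Z_G(\lambda)}Z_{R_v}(\lambda)Z_{X_t}(\lambda)$. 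Multiplying through, using $\lambda^{|I\cap L_v|}\lambda^{|I\cap R_v|}/\lambda^{|I|} = \lambda^{-|I\cap X_t|}$, and using that $L_v$ and $R_v$ are non-adjacent so $Z_{L_v}(\lambda)Z_{R_v}(\lambda)\leqslant Z_G(\lambda)$ (the union of independent sets on the two sides being independent in $G$), everything collapses to the claimed bound $2n^2\frac{\lambda+1}{\lambda}\frac{1}{\lambda^{|I\cap X_t|}}Z_{X_t}(\lambda)^2$. The one point that needs genuine care — the main obstacle — is verifying that the canonical path as defined (remove $I\cap X_i$ then add $J\cap X_{i-1}\setminus X_i$ at step $i$) really forces $I\cap L_v = L\cap L_v$ and $I\cap R_v = K\cap R_v$ at the moment of inserting $v$; this requires tracking precisely when each vertex enters and leaves along $\gamma_{K,L}$ using the connected-subtree property of path decompositions, exactly as the "each vertex $v$ is removed at step $\min\{i\mid v\in X_i\}$ and added at step $\min\{i\mid v\in X_{i-1}\wedge v\notin X_i\}$" observation is used to show $\gamma_{I,J}$ is a valid path.
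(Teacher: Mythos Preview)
Your proposal is correct and follows essentially the same route as the paper's own proof: the same partition $L_t = \bigcup_{i<t} X_i \setminus X_t$, $X_t$, $R_t = V(G)\setminus(L_t\cup X_t)$, the same identification $I\cap L_t = L\cap L_t$ and $I\cap R_t = K\cap R_t$ from the definition of $\gamma_{K,L}$, and then the same applications of Lemma~\ref{lemma:Technical}.(\ref{lemma:Technical1}), (\ref{lemma:Technical2}) and the non-adjacency of $L_t$ and $R_t$. The only difference is cosmetic (you write $L_v,R_v$ instead of $L_t,R_t$), and you even flag explicitly the one point the paper glosses over, namely the verification that the insertion of $v$ really occurs at step $t$ with the left side already converted and the right side untouched.
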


\begin{proof}
By Equation~\ref{Congestion} and similarly to in the proof of Lemma~\ref{lemma:Congestion1}, we have 
$$
\rho(\Gamma_{\mathcal{P},G}, (I, I\cup \{v\}) 
     \leqslant 2n^2\frac{\lambda+1}{\lambda} \frac{Z_G(\lambda)}{\lambda^{|I|}}\sum_{\substack{K,L\in \mathcal{I}(G) \\ (I,I')\in \gamma_{K, L}}} \pi(K)\pi(L) 
$$
Suppose that $(I,I\cup\{v\})\in \gamma_{K,L}$ for some independent sets $K,L \in \mathcal{I}(G)$. We define two sets 
$$
\begin{array}{ll}
     L_t &= \bigcup_{1\leqslant i \leqslant t-1} X_{i} \setminus X_t\\
     R_t &= V(G) \setminus (L_v\cup X_t)
\end{array}
$$

Suppose that $(I,I\cup \{v\})$ is on the path $\gamma_{K,L}$. This transition has to be used in the step $t$ of the path, when the vertices of $X_{t-1}\setminus X_t$ are added to the current independent set. In particular, all the steps from $1$ to $t-1$ have been done, and thus all the vertices from $K\cap \bigcup_{1\leqslant j \leqslant t-1}X_j \setminus X_t$ have been replaced by the vertices of $L\cap \bigcup_{1\leqslant j \leqslant t-1}X_j \setminus X_t$. It follows that $I\cap L_t = L$. Similarly, all the steps from $t+1$ to $p$ have yet to be executed, and thus the vertices of $K\cap \bigcup_{t+1\leqslant j \leqslant p} X_j \setminus X_t$ have not been modified. It follows that $I\cap R_t = K\cap R_t$. We obtain that 
$$
\sum_{\substack{K,L\in \mathcal{I}(G) \\ (I,I\cup\{v\})\in \gamma_{K,L}}} \pi(K)\pi(L) \leqslant \left(\sum_{\substack{K\in \mathcal{I}(G) \\ K\cap R_t = I\cap R_t}} \pi(K) \right) 
    \left(\sum_{\substack{L\in \mathcal{I}(G) \\ L\cap L_t = I\cap L_t}} \pi(L) \right)
$$
As in the proof of Lemma~\ref{lemma:Congestion1}, a successive application of Lemmas~\ref{lemma:Technical}.(\ref{lemma:Technical1}) and~\ref{lemma:Technical}.(\ref{lemma:Technical2}) gives :
$$
\sum_{\substack{K\in \mathcal{I}(G) \\ K\cap R_t = I\cap R_t}} \pi(K) \leqslant \frac{\lambda^{|I\cap R_t|}}{Z_G(\lambda)}Z_{L_t}(\lambda)Z_{X_t}(\lambda) 
$$
$$
\sum_{\substack{K\in \mathcal{I}(G) \\ K\cap L_t = I\cap L_t}} \pi(K) \leqslant \frac{\lambda^{|I\cap L_t|}}{Z_G(\lambda)}Z_{R_t}(\lambda)Z_{X_t}(\lambda)
$$
By combining the inequalities, it follows
$$
\rho(\Gamma_{\mathcal{P},G}, (I, I\cup \{v\})) \leqslant 2n^2\frac{\lambda+1}{\lambda} \frac{\lambda^{|I\cap L_t|+|I\cap R_t|}}{\lambda^{|I|}} \frac{Z_{L_t}(\lambda) Z_{R_t}(\lambda)Z_{X_t}(\lambda)^2 }{ Z_G(\lambda)}
$$
Finally, by Lemma~\ref{lemma:Technical}.(\ref{lemma:Technical2}), we obtain 
$$
\rho(\Gamma_{\mathcal{P},G}, (I, I\cup \{v\})) \leqslant 2n^2\frac{\lambda+1}{\lambda}\frac{1}{\lambda^{|I\cap X_t|}} Z_{X_t}(\lambda)^2
$$
\end{proof}

Similarly, the next lemma is the equivalent of Lemma~\ref{lemma:Congestion2} for path decomposition instead of separator tree.
\begin{lemma}\label{lemma:CongestionPath2}
Let $G$ be a graph, $\mathcal{P} = \{X_t\}_{1\leqslant t\leqslant p}$ a path decomposition of $G$ with $X_p=\emptyset$, $I\in \mathcal{I}(G)$ and $v\in I$. In addition, let $t = \min \{i \mid v\in X_i\}$. Then, for any $\lambda> 0$,
$$
\rho(\Gamma_{\mathcal{P},G}, (I, I\setminus \{v\})) \leqslant 2n^2(\lambda+1)\frac{1}{\lambda^{|I\cap X_t|}} Z_{X_t}(\lambda)^2
$$
\end{lemma}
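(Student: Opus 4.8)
The plan is to mirror the proof of Lemma~\ref{lemma:Congestion2} exactly, transporting the argument from separator trees to path decompositions in the same way Lemma~\ref{lemma:CongestionPath1} adapts Lemma~\ref{lemma:Congestion1}. First I would expand the congestion via Equation~\ref{Congestion}: using that every canonical path has length at most $2n$, that $P(I,I\setminus\{v\}) = \tfrac{1}{n}\cdot\tfrac{1}{\lambda+1}$, and that $\pi(I) = \lambda^{|I|}/Z_G(\lambda)$, I obtain
$$
\rho(\Gamma_{\mathcal{P},G}, (I, I\setminus\{v\})) \leqslant 2n^2(\lambda+1)\frac{Z_G(\lambda)}{\lambda^{|I|}}\sum_{\substack{K,L\in \mathcal{I}(G) \\ (I,I\setminus\{v\})\in \gamma_{K, L}}} \pi(K)\pi(L).
$$
Note the factor here is $(\lambda+1)$ rather than $\tfrac{\lambda+1}{\lambda}$, because removing a vertex has probability $\tfrac{1}{\lambda+1}$ instead of $\tfrac{\lambda}{\lambda+1}$; this is the only place the two lemmas' constants differ.

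Next I would identify the two ``frozen'' blocks of the path decomposition relative to $v$. Since $v\in I$ and $t = \min\{i \mid v\in X_i\}$, the transition $(I, I\setminus\{v\})$ is executed at step $t$ of any canonical path using it, as this is precisely when $v$ is removed. I would set $L_t = \bigcup_{1\leqslant i\leqslant t-1} X_i \setminus X_t$ and $R_t = V(G)\setminus(L_t\cup X_t)$, so that $(L_t, X_t, R_t)$ is a separation of $V(G)$ — here $X_t$ separates $L_t$ from $R_t$ by property 3 of path decompositions (any vertex appearing both before step $t$ and after must lie in $X_t$). If $(I,I\setminus\{v\})\in\gamma_{K,L}$, then steps $1,\dots,t-1$ are done (so $K\cap L_t$ has been replaced by $L\cap L_t$, giving $I\cap L_t = L\cap L_t$) while steps $t+1,\dots,p$ are not yet done (so $I\cap R_t = K\cap R_t$). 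Hence
$$
\sum_{\substack{K,L\in \mathcal{I}(G) \\ (I,I\setminus\{v\})\in \gamma_{K,L}}} \pi(K)\pi(L) \leqslant \Bigl(\sum_{\substack{K\in \mathcal{I}(G) \\ K\cap R_t = I\cap R_t}} \pi(K)\Bigr)\Bigl(\sum_{\substack{L\in \mathcal{I}(G) \\ L\cap L_t = I\cap L_t}} \pi(L)\Bigr).
$$

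Then I would apply Lemma~\ref{lemma:Technical}.(\ref{lemma:Technical1}) followed by Lemma~\ref{lemma:Technical}.(\ref{lemma:Technical2}) to each factor exactly as in Lemma~\ref{lemma:CongestionPath1}, obtaining the bounds $\tfrac{\lambda^{|I\cap R_t|}}{Z_G(\lambda)}Z_{L_t}(\lambda)Z_{X_t}(\lambda)$ and $\tfrac{\lambda^{|I\cap L_t|}}{Z_G(\lambda)}Z_{R_t}(\lambda)Z_{X_t}(\lambda)$ respectively. Substituting back, using $\lambda^{|I\cap L_t|+|I\cap R_t|}/\lambda^{|I|} = 1/\lambda^{|I\cap X_t|}$ (since $(L_t,X_t,R_t)$ partitions $V(G)$), and finally using $Z_{L_t}(\lambda)Z_{R_t}(\lambda)\leqslant Z_G(\lambda)$ — which follows from Lemma~\ref{lemma:Technical}.(\ref{lemma:Technical2}), or more directly since $L_t$ and $R_t$ are nonadjacent so products of their independent sets inject into $\mathcal{I}(G)$ — yields the claimed bound $2n^2(\lambda+1)\tfrac{1}{\lambda^{|I\cap X_t|}}Z_{X_t}(\lambda)^2$. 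I do not expect a genuine obstacle here: the argument is a routine transcription, and the only subtlety is bookkeeping the correct transition probability and verifying that $X_t$ really is the relevant separator when $v$ is \emph{removed} (as opposed to the ``add'' case, where the separator is $X_t$ for $t$ the index where $v$ disappears from the decomposition). The structure of which block is frozen before versus after step $t$ is the mirror image of Lemma~\ref{lemma:CongestionPath1}, so care must be taken not to swap the roles of $L_t$ and $R_t$.
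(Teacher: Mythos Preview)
Your proposal is correct and follows the paper's proof essentially verbatim: the same expansion via Equation~\ref{Congestion} with the $(\lambda+1)$ factor (from $P(I,I\setminus\{v\})=\tfrac{1}{n(\lambda+1)}$), the same partition $(L_t, X_t, R_t)$ with the constraints $I\cap R_t = K\cap R_t$ and $I\cap L_t = L\cap L_t$, and the same application of Lemma~\ref{lemma:Technical}.(\ref{lemma:Technical1})--(\ref{lemma:Technical2}) to conclude. The paper's own proof is even terser, simply stating the partition and then writing ``Following the exact same proof of Lemma~\ref{lemma:CongestionPath1}, the result holds.''
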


\begin{proof}
By Equation~\ref{Congestion} and similarly to in the proof of Lemma~\ref{lemma:Congestion2}, we have 
$$
\rho(\Gamma_{\mathcal{P},G}, (I, I\setminus\{v\}) 
    \leqslant 2n^2 \frac{Z_G(\lambda)}{\lambda^{|I|}}(\lambda+1) \sum_{\substack{K,L\in \mathcal{I}(G) \\ (I,I')\in \gamma_{K, L}}} \pi(K)\pi(L)
$$
Suppose that $(I,I\setminus\{v\})\in \gamma_{K,L}$ for some independent sets $K,L \in \mathcal{I}(G)$. We define two sets 
$$
\begin{array}{ll}
     L_t &= \bigcup_{1\leqslant i \leqslant t-1} X_{i} \setminus X_t\\
     R_t &= V(G) \setminus (L_v\cup X_t)
\end{array}
$$

Suppose that $(I,I\setminus \{v\})$ is on the path $\gamma_{K,L}$. This transition is used in the step $t$ of the path, that is when the vertices of $K\cap X_t$ are removed. Similarly to the proof of Lemma~\ref{lemma:CongestionPath1}, we have  $I\cap L_t = L$ and $I\cap R_t = K\cap R_t$. Again, we obtain that, 
$$
\sum_{\substack{K,L\in \mathcal{I}(G) \\ (I,I\setminus\{v\})\in \gamma_{K,L}}} \pi(K)\pi(L) \leqslant \left(\sum_{\substack{K\in \mathcal{I}(G) \\ K\cap R_t = I\cap R_t}} \pi(K) \right) 
    \left(\sum_{\substack{L\in \mathcal{I}(G) \\ L\cap L_t = I\cap L_t}} \pi(L) \right)
$$
Following the exact same proof of Lemma~\ref{lemma:CongestionPath1}, the result holds.
\end{proof}

The next Lemma is the equivalent of Lemma~\ref{lemma:Congestion3} for path decomposition instead of separator tree.

\begin{lemma}\label{lemma:CongestionPath3}
Let $G$ be a graph, $\mathcal{P} = \{X_t\}_{1\leqslant t\leqslant p}$ a path decomposition of $G$ with $X_p=\emptyset$. Then, for any any $\lambda>0$,
$$
\rho(\Gamma_{\mathcal{P},G}) \leqslant 4n^2\tilde{\lambda}^{2\alpha+1}\max_{1\leqslant t\leqslant p}|\mathcal{I}(G[X_t])|^2
$$
where $\alpha = \max_{1\leqslant t\leqslant p} \alpha(G[X_t])$.
\end{lemma}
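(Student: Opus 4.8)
The plan is to mirror exactly the structure of the proof of Lemma~\ref{lemma:Congestion3}, replacing the role of the ancestor sets $A_v$ by the bags $X_t$ of the path decomposition. First I would note that, by definition of $\Gamma_{\mathcal{P},G}$, no canonical path uses a loop transition $(I,I)$ (each step adds or removes exactly one vertex), so $\rho(\Gamma_{\mathcal{P},G},(I,I))=0$ and it suffices to bound the congestion of transitions of the form $(I,I\cup\{v\})$ with $v\notin N[I]$ and $(I,I\setminus\{v\})$ with $v\in I$. For such a transition, Lemma~\ref{lemma:CongestionPath1} or Lemma~\ref{lemma:CongestionPath2} gives, for the relevant index $t$,
$$
\rho(\Gamma_{\mathcal{P},G},(I,I')) \leqslant 2n^2 \max\!\left(\lambda+1,\tfrac{\lambda+1}{\lambda}\right)\frac{1}{\lambda^{|I\cap X_t|}} Z_{X_t}(\lambda)^2 .
$$

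Next I would do the same two-case analysis on $\lambda$ as in Lemma~\ref{lemma:Congestion3}. If $\lambda\geqslant 1$ then $\tilde\lambda=\lambda$, and Lemma~\ref{lemma:Technical}.(\ref{lemma:Technical0}) applied to $G[X_t]$ gives $Z_{X_t}(\lambda)\leqslant \tilde\lambda^{\alpha(G[X_t])}|\mathcal{I}(G[X_t])|$; moreover $\frac{1}{\lambda^{|I\cap X_t|}}\leqslant 1$ and $\lambda+1\leqslant 2\lambda$, yielding $\rho\leqslant 4n^2\tilde\lambda^{2\alpha(G[X_t])+1}|\mathcal{I}(G[X_t])|^2$. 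If $\lambda\in(0,1)$ then $\tilde\lambda=\lambda^{-1}$, and since $\lambda^{\alpha(G[X_t])}\leqslant 1$ Lemma~\ref{lemma:Technical}.(\ref{lemma:Technical0}) gives $Z_{X_t}(\lambda)\leqslant |\mathcal{I}(G[X_t])|$, while $\max(\lambda+1,\frac{\lambda+1}{\lambda})\leqslant 2\tilde\lambda$ and $\frac{1}{\lambda^{|I\cap X_t|}}\leqslant \tilde\lambda^{\alpha(G[X_t])}$, so $\rho\leqslant 4n^2\tilde\lambda^{\alpha(G[X_t])+1}|\mathcal{I}(G[X_t])|^2$. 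In both cases the bound is at most $4n^2\tilde\lambda^{2\alpha(G[X_t])+1}|\mathcal{I}(G[X_t])|^2$.

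Finally I would take the maximum over all transitions: since $\alpha(G[X_t])\leqslant \alpha=\max_{1\leqslant s\leqslant p}\alpha(G[X_s])$ and $|\mathcal{I}(G[X_t])|\leqslant \max_{1\leqslant s\leqslant p}|\mathcal{I}(G[X_s])|$, and since $\tilde\lambda\geqslant 1$ so the exponent can be enlarged freely, we get
$$
\rho(\Gamma_{\mathcal{P},G}) = \max_{(W,W')}\rho(\Gamma_{\mathcal{P},G},(W,W')) \leqslant 4n^2\tilde\lambda^{2\alpha+1}\max_{1\leqslant t\leqslant p}|\mathcal{I}(G[X_t])|^2 ,
$$
which is the claim. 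There is essentially no obstacle here beyond bookkeeping: the only mild subtlety is confirming that the index $t$ produced by Lemmas~\ref{lemma:CongestionPath1} and~\ref{lemma:CongestionPath2} always indexes an actual bag of $\mathcal{P}$ (so that $G[X_t]$, $\alpha(G[X_t])$ and $|\mathcal{I}(G[X_t])|$ are well defined and covered by the maxima), and that enlarging the exponent from $\alpha(G[X_t])+1$ or $2\alpha(G[X_t])+1$ to $2\alpha+1$ is valid because $\tilde\lambda\geqslant 1$; both are immediate.
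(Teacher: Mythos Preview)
Your proposal is correct and follows essentially the same approach as the paper's proof: reduce to non-loop transitions, apply Lemmas~\ref{lemma:CongestionPath1} and~\ref{lemma:CongestionPath2}, then split into the cases $\lambda\geqslant 1$ and $\lambda\in(0,1)$ using Lemma~\ref{lemma:Technical}.(\ref{lemma:Technical0}) exactly as in Lemma~\ref{lemma:Congestion3}. If anything, you are slightly more careful than the paper in spelling out why one may pass from $\alpha(G[X_t])$ to $\alpha$ using $\tilde\lambda\geqslant 1$.
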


\begin{proof}[Proof of Lemma~\ref{lemma:CongestionPath3}]
Let $(I,I')$ be a transition of the Markov chain. By definition of $\Gamma:= \Gamma_{\mathcal{P},G}$, no canonical path use the transition $(I,I)$, since at each step either a vertex is added or is deleted to the current independent set. Thus, we only consider the cases $I'=I\cup \{v\}$ for some $v\in V(G)\setminus N[I]$ or $I'= I\setminus \{v\}$ for some vertex $v\in I$. In both cases, by Lemma~\ref{lemma:CongestionPath1} and Lemma~\ref{lemma:CongestionPath2}, there exists a $t \in \{1,\cdots, p\}$ such that 
$$
\rho(\Gamma,(I,I')) \leqslant 2n^2\max\left(\lambda+1, \frac{\lambda}{\lambda+1} \right)\frac{1}{\lambda^{|I\cap X_t|}}Z_{X_t}(\lambda)^2 
$$
First, suppose that $\lambda \geqslant 1$. In this case, the equality $\tilde{\lambda} = \lambda$ holds, and by Lemma~\ref{lemma:Technical}.(\ref{lemma:Technical0}),  
$$
Z_{X_t}(\lambda) \leqslant \tilde{\lambda}^{\alpha(G[X_t])} |\mathcal{I}(G[X_t])|
$$
Since $\frac{1}{\lambda^{|I\cap X_t|}} \leqslant 1$ and $\lambda+1 \leqslant 2\lambda$, we obtain:  
$$
\rho(\Gamma,(I,I')) \leqslant 4n^2 \tilde{\lambda}^{2\alpha(G[X_t])+1} |\mathcal{I}(G[X_t])|^2
$$
Next, suppose that $\lambda \in (0,1)$. In this case, the equality $\tilde{\lambda} = \lambda^{-1}$ holds. Since $\lambda^{\alpha(G[X_t])} \leqslant 1$, Lemma~\ref{lemma:Technical}.(\ref{lemma:Technical0}) implies $Z_{A_v}(\lambda) \leqslant |\mathcal{I}(G[X_t])|$. Furthermore, we have $\max\left(\lambda+1, \frac{\lambda+1}{\lambda} \right) \leqslant 2\tilde{\lambda}$, and $\frac{1}{\lambda^{|I\cap X_t|}} \leqslant \tilde{\lambda}^{\alpha(G[X_t])}$. It follows that  
$$
\rho(\Gamma,(I,I')) \leqslant 4n^2 \tilde{\lambda}^{\alpha(G[X_t])+1} |\mathcal{I}(G[X_t])|^2
$$
The conclusion is straightforward.
\end{proof}

We are now equipped to prove the first case of Theorem~\ref{thm:PathTree}, together with a FPT mixing time for graphs of bounded pathwidth.

\begin{theorem}\label{thm:Path}
Let $G$ be a graph with $n$ vertices. The mixing time of the Glauber dynamics with fugacity $\lambda> 0$ satisfies 
\begin{itemize}
     \item $\tau_{G,\lambda} = (2\tilde{\lambda})^{O(\pw(G))} \cdot n^{O(1)}$ ;
    \item $\tau_{G,\lambda} = (b\tilde{\lambda})^{O(\pa(G))} \cdot n^{O(1)}$ where $b$ is the minimum width (plus one) of a path decomposition of path independence number $\pa(G)$.
\end{itemize}
\end{theorem}

\begin{proof}
Let $\mathcal{P}=\{X_t\}_{1\leqslant t \leqslant p}$ be a path decomposition of $G$, where $p$ is the number of bags in the path decomposition and $X_p = \emptyset$. We bound the congestion of the set of canonical paths $\Gamma :=\Gamma_{\mathcal{P},G}$. The reasoning follows exactly as in the proof of Theorem~\ref{thm:SubexpMixingTime}. By Lemma~\ref{lemma:CongestionPath3}, we have
$$
\rho(\Gamma) \leqslant 4n^2 \tilde{\lambda}^{1+2\max_{1\leqslant t \leqslant p} \alpha(G[X_t])}\max_{1\leqslant t\leqslant p}|\mathcal{I}(G[X_t])|^2
$$
 Then, we bound both $\alpha(G[X_t])$ and $|\mathcal{I}(G[X_t])|$ for all $1\leqslant t \leqslant p$, depending on whether the path decomposition is a witness for $\pw(G)$ or $\pa(G)$.

Firstly, suppose that, for any $1\leqslant t\leqslant p$, $|X_t|\leqslant \pw(G)+1$. It follows that $G[X_t]$ contains at most $2^{\pw(G)+1}$ distinct independent sets, each of size at most $\pw(G)+1$. Thus, 
$$
\rho(\Gamma) \leqslant 4n^2 \tilde{\lambda}^{2\pw(G)+3}2^{2\pw(G)+2} \leqslant 4n^2(2\tilde{\lambda})^{2\pw(G)+3}
$$
Using Theorem~\ref{thm:BoundCanonicalPaths} and $\min_{I\in \mathcal{I}(G)} \pi_{G,\lambda}(I) \geqslant (2\tilde{\lambda})^{-n}$,
$$
\tau_{G,\lambda} \leqslant 4n^2(2\tilde{\lambda})^{2\pw(G)+3} \ln\left( 4(2\tilde{\lambda})^{-n}\right)
$$
and finally $\tau_{G,\lambda} = (2\tilde{\lambda})^{2\pw(G)+4} \cdot n^{O(1)}$.

Secondly, suppose that, for any $1\leqslant t\leqslant p$, $\alpha(G[X_t])\leqslant \pa(G)$ and $|X_t| \leqslant b$. It follows that $G[X_t]$ contains at most $b^{\pa(G)}$ distinct independent sets, each of size at most $\pa(G)$. Thus, 
$$
\rho(\Gamma) \leqslant 4n^2 \tilde{\lambda}^{2\pa(G)+1}b^{2\pa(G)} \leqslant 4n^2(b\tilde{\lambda})^{2\pa(G)+1}
$$
Using Theorem~\ref{thm:BoundCanonicalPaths}, we obtain that $\tau_{G,\lambda} = (b\tilde{\lambda})^{2\pa(G)+2} \cdot n^{O(1)}$.
\end{proof}

\paragraph{Tree independence number.}
To establish bounds on the mixing time parameterized by the tree independence number, there are two possible approaches. One option is to use a separator-tree-based method, as in Section~\ref{sec:GeometricGraphs}. The other is to relate the width of path decompositions to the width of tree decompositions. A result of this type is already known for sparse parameters: for any $n$-vertex graph, it is known that $\pw(G) = O(\ln(n))\tw(G)$~\cite{korach1993tree}. We provide a strengthening of this result for the dense case.

\begin{lemma}\label{lemma:structural_bounds}
Let $G$ be a graph with $n$ vertices, and let $\mathcal{T}=(T,\{X_t\}_{t\in V(T)})$ be a tree decomposition of $G$ such that $|V(T)|\leqslant 4n$ and for any $t\in V(T)$, $|X_t|\leqslant b$ and $\alpha(G[X_t])\leqslant k$. Then, there exists a path decomposition $\mathcal{P}=\{Y_i\}_{1\leqslant i \leqslant p}$ such that for each $1\leqslant i \leqslant p$ :
\begin{itemize}
    \item $|Y_i|\leqslant b(\log_2(n)+3)$ ;
    \item $\alpha(G[Y_i])\leqslant k(\log_2 n + 3)$ ;
    \item  $|\mathcal{I}(G[Y_i])|\leqslant b^{(\log_2 n + 3)k}$.
\end{itemize}
\end{lemma}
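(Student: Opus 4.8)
The plan is to turn the tree decomposition $\mathcal{T}$ into a path decomposition by recursively ordering the nodes of $T$ via a centroid (balanced separator) decomposition of the tree $T$, exactly as in the classical argument of Korach and Solel~\cite{korach1993tree}, but keeping track of independence numbers of bags rather than just their sizes.

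First I would recall the standard construction. Since $T$ has at most $4n$ nodes, there is a node $r \in V(T)$ whose removal splits $T$ into subtrees each of size at most $|V(T)|/2$. Root $T$ at $r$, and build the path decomposition recursively: process the subtrees hanging off $r$ one after another, and when processing a subtree rooted at a child of $r$, recursively produce a path decomposition for that subtree and then add $X_r$ to every bag of it. Unrolling the recursion, the resulting path decomposition $\mathcal{P} = \{Y_i\}$ has the property that each $Y_i$ is a union of at most $\log_2(|V(T)|) + 1 \leqslant \log_2(4n) + 1 = \log_2 n + 3$ bags of $\mathcal{T}$: namely the bags along the path in the recursion tree from the root to the ``current'' centroid, which corresponds to an ancestor chain of length at most $\lceil \log_2 |V(T)|\rceil$ in $T$ (the centroids chosen at successive recursion levels). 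One then checks the three defining properties of a path decomposition for $\mathcal{P}$ (coverage of vertices and edges, and connectivity of the bags containing each vertex) — this is routine and identical to the sparse case.

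Given that each $Y_i$ is a union of at most $\ell := \log_2 n + 3$ bags $X_{t}$ of $\mathcal{T}$, the three bulleted bounds follow immediately: $|Y_i| \leqslant \ell \cdot \max_t |X_t| \leqslant b(\log_2 n + 3)$; and since an independent set of $G[Y_i]$ restricts to an independent set of each $G[X_t]$ whose union it is, $\alpha(G[Y_i]) \leqslant \sum_{t} \alpha(G[X_t]) \leqslant \ell k = k(\log_2 n + 3)$, exactly the additivity argument already used in the proof of Theorem~\ref{thm:SubexpMixingTime}. For the last bound, every independent set of $G[Y_i]$ is determined by its intersections with the at most $\ell$ bags composing $Y_i$, and each such intersection is an independent set of $G[X_t]$, of which there are at most $|\mathcal{I}(G[X_t])| \leqslant \sum_{j=0}^{k} \binom{b}{j} \leqslant b^k$ (using $|X_t| \leqslant b$ and $\alpha(G[X_t]) \leqslant k$); hence $|\mathcal{I}(G[Y_i])| \leqslant (b^k)^{\ell} = b^{(\log_2 n + 3)k}$.

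The main obstacle — though a mild one — is making the recursive construction precise enough to justify that each path-decomposition bag really is supported on an ancestor chain of length $\leqslant \lceil \log_2 |V(T)| \rceil + 1$ in (a rerooting of) $T$, and that the path-decomposition axioms hold; in particular one must be careful that when subtrees are concatenated, the ``interface'' bag $X_r$ is added to every bag of the recursively produced sub-path so that connectivity is preserved across the concatenation points. I would either cite~\cite{korach1993tree} for the construction and only verify the independence-number bookkeeping, or include a short self-contained induction on $|V(T)|$ establishing simultaneously that the produced sequence is a valid path decomposition and that every bag is a union of $\mathcal{T}$-bags lying on a root-to-node path in $T$ of length at most $\lceil \log_2 |V(T)|\rceil + 1$.
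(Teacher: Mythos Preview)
Your approach is correct and essentially identical to the paper's. The only cosmetic difference is that the paper factors the argument through a black-box claim ``every $m$-node tree has pathwidth at most $\log_2 m$'' (proved by the same centroid recursion you describe), obtains a path decomposition $(U_i)_i$ of $T$ with $|U_i|\leqslant \log_2(4n)+1=\log_2 n+3$, and then sets $Y_i=\bigcup_{t\in U_i}X_t$; you instead unroll that recursion directly on $G$. The three bounds are derived in exactly the same way.
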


\begin{proof}
We begin by proving the following well-known bound on the pathwidth of trees.

\begin{claim}
For any tree $T$ with $n$ vertices, $\pw(T) \leqslant \log_2 n$.
\end{claim}

\begin{proof}
We proceed by induction on $n$. The base case is trivial for trees with at most two vertices. Assume the statement holds for all trees with at most $n-1$ vertices. 

Let $T$ be a tree with $n$ vertices. Define a vertex $r \in V(T)$ such that the maximum size of a connected component of $T - r$ is minimized. Let $T_1, \dots, T_k$ be these connected components. By the choice of $r$, each $T_i$ has at most $n/2$ vertices; otherwise, the only vertex adjacent to $r$ in $T_i$ would contradict the minimality condition. By the induction hypothesis, for each $1\leqslant i \leqslant k$, we have $\pw(T_i) \leqslant \log_2(n/2)$. 

Concatenating the $k$ path decompositions of $T_1, \dots, T_k$ and adding $r$ to each bag results in a path decomposition of $T$ of width at most $\log_2(n/2) + 1 = \log_2 n$, completing the induction.
\end{proof}

Now, we use this claim to establish both statements of the lemma. By the previous claim, $T$ admits a path decomposition $\mathcal{P}_{T}=(U_i)_{1\leqslant i \leqslant p}$ such that for any $1\leqslant i \leqslant p$, $|U_i| \leqslant \log_2(4n)+1 = \log_2 n + 3$. For each $1\leqslant i \leqslant p$, define $Y_i = \bigcup_{t\in U_i}X_t$. Then, $\mathcal{P}=(Y_i)_{1\leqslant i \leqslant p}$ is a path decomposition of $G$. Let $1\leqslant i \leqslant p$. 
\begin{itemize}
    \item Since for any $t\in U_i$, $|X_t|\leqslant b$, it follows that $|Y_i| \leqslant b(\log_2(n)+3)$.
    \item Any independent set of $G[Y_i]$ is the disjoint union of independent sets $I_t$, where each $I_t \subseteq X_t$ for some $t\in U_i$. Since $\alpha(G[X_t]) \leqslant k$ for $t\in U_i$, it follows that $\alpha(G[Y_i])\leqslant k(\log_2 n + 3)$. 
    \item Similarly, since $G[X_t]$ has at most $t^k$ independent sets for $t\in U_i$, it follows that $|\mathcal{I}(G[Y_i])|\leqslant b^{k(\log_2 n + 3)}$. 
\end{itemize}
\end{proof}

Using these inequalities, we derive new bounds on the mixing time of the Glauber dynamics for independent sets in terms of treewidth and tree independence number. The first case of the following theorem corresponds to the result of Eppstein and Frishberg \cite{eppstein2023rapid}. The second result, which is the exactly the second case of Theorem~\ref{thm:PathTree}, generalizes the result of Bez\'akov\'a and Sun \cite{bezakova2020mixing}, as chordal graphs are exactly graph of tree independence number $1$.

\begin{theorem}\label{thm:generalized_mixing_time}
Let $G$ be a graph with $n$ vertices. The mixing time of the Glauber dynamics on $G$ with fugacity $\lambda > 0$ satisfies:
\begin{itemize}
    \item $\tau_{G,\lambda} = n^{O(\tw(G))\ln(\tilde{\lambda})} $ ;
    \item $\tau_{G,\lambda} = n^{O(\ta(G))\cdot \ln(b\tilde{\lambda})} $, where $b$ is the minimum width (plus one) of a tree decomposition of tree independence number $\ta(G)$.
\end{itemize}
\end{theorem}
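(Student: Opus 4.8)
The plan is to derive both bounds by feeding the path decomposition produced by Lemma~\ref{lemma:structural_bounds} into the congestion machinery for path decompositions, namely Lemma~\ref{lemma:CongestionPath3} together with Theorem~\ref{thm:BoundCanonicalPaths}. First I would recall the standard normalisation: any graph admits a tree decomposition realising $\tw(G)$ (respectively $\ta(G)$) in which no bag is contained in an adjacent bag, obtained by contracting every edge $st$ of $T$ with $X_s\subseteq X_t$; such a decomposition has at most $n$ nodes (hence at most $4n$), the same width, and the same tree-independence number, so it satisfies the hypotheses of Lemma~\ref{lemma:structural_bounds}. For the first bullet I apply that lemma with $b=k=\tw(G)+1$ (using $\alpha(G[X_t])\le|X_t|\le b$); for the second bullet I apply it with $k=\ta(G)$ and $b$ the width plus one of a tree decomposition witnessing $\ta(G)$. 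In both cases I get a path decomposition $\mathcal P=\{Y_i\}_i$, to which I append an empty final bag so that $\Gamma_{\mathcal P,G}$ is defined.

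Second, I bound $\rho(\Gamma_{\mathcal P,G})$ with Lemma~\ref{lemma:CongestionPath3}. For the treewidth case I use $\alpha(G[Y_i])\le|Y_i|\le(\tw(G)+1)(\log_2 n+3)$ and the crude count $|\mathcal I(G[Y_i])|\le 2^{|Y_i|}\le 2^{(\tw(G)+1)(\log_2 n+3)}$, which gives $\rho(\Gamma_{\mathcal P,G})\le\mathrm{poly}(n)\cdot(2\tilde\lambda)^{O(\tw(G)\log n)}$. For the tree-independence case I instead use the sharper estimates of Lemma~\ref{lemma:structural_bounds}, $\alpha(G[Y_i])\le k(\log_2 n+3)$ and $|\mathcal I(G[Y_i])|\le b^{k(\log_2 n+3)}$, which gives $\rho(\Gamma_{\mathcal P,G})\le\mathrm{poly}(n)\cdot(b\tilde\lambda)^{O(k\log n)}$. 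Finally I invoke Theorem~\ref{thm:BoundCanonicalPaths} with $\min_I\pi_{G,\lambda}(I)\ge(2\tilde\lambda)^{-n}$, so $\tau_{G,\lambda}\le\rho(\Gamma_{\mathcal P,G})\cdot O(n\ln(2\tilde\lambda))$, and simplify using $(b\tilde\lambda)^{O(k\log_2 n)}=n^{O(k\log_2(b\tilde\lambda))}$ (and likewise for $\tw$): the remaining polynomial and $\ln(2\tilde\lambda)$ factors are absorbed into $n^{O(\tw(G))\ln\tilde\lambda}$, respectively $n^{O(\ta(G))\ln(b\tilde\lambda)}$, which are the two claimed bounds.

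The main subtlety --- rather than a deep obstacle --- is to route the path decomposition of Lemma~\ref{lemma:structural_bounds} directly through Lemma~\ref{lemma:CongestionPath3} with the count appropriate to each parameter: the bag-size count $2^{|Y_i|}$ for treewidth (using the tree-$\alpha$ count $b^{k\log n}$ here would introduce a spurious $\tw(G)^2$ in the exponent), and the independence-number count $b^{k\log n}$ for tree-$\alpha$. In particular, one must not invoke Theorem~\ref{thm:Path} as a black box in the tree-$\alpha$ case: that would replace the base $b\tilde\lambda$ by the width of $\mathcal P$, which is $b(\log_2 n+3)\tilde\lambda$, producing an extra $(\log n)^{O(k\log n)}=n^{O(\ta(G)\log\log n)}$ factor and only a quasi-polynomial bound even when the target is polynomial (bounded $b$ and $\lambda$). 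Keeping $\log n$ out of the base is exactly what yields the exponent $\ln(b\tilde\lambda)$ and lets chordal graphs ($\ta=1$, $b$ the separator bound) recover the polynomial mixing time of Bez\'akov\'a and Sun. Everything else is bookkeeping: checking the normalisation of the tree decomposition, and that the pre-factors vanish into the $O(\cdot)$ in the exponent (one need only mind the degenerate regime $b\tilde\lambda<2$, i.e.\ an essentially edgeless graph with $\lambda\le1$, where the statements should be read with $\ln(\cdot)$ bounded below by a constant).
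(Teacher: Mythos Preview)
Your proposal is correct and follows essentially the same route as the paper: feed the path decomposition of Lemma~\ref{lemma:structural_bounds} into Lemma~\ref{lemma:CongestionPath3} and then Theorem~\ref{thm:BoundCanonicalPaths}, using the bag-size count for $\tw$ and the independence-number count $b^{k(\log_2 n+3)}$ for $\ta$. You are in fact a bit more careful than the paper (explicitly normalising the tree decomposition to $\leqslant n$ nodes, and flagging that one must not black-box Theorem~\ref{thm:Path} in the tree-$\alpha$ case lest a spurious $\log n$ enter the base), but the argument is the same.
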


\begin{proof}
The first case is a direct consequence of Theorem~\ref{thm:Path} and Lemma~\ref{lemma:structural_bounds}. We focus on the second case. Let $\mathcal{P}=\{X_i\}_{1\leqslant i \leqslant p}$ be a path decomposition of a graph $G$, where $p$ is the number of bags in the path decomposition and $X_p = \emptyset$. In addition, from Lemma~\ref{lemma:structural_bounds}, suppose that for each $1\leqslant i \leqslant p$, we have $\alpha(G[X_i])\leqslant (\log_2(n)+3)\ta(G)$ and $|\mathcal{I}(G[X_i])|\leqslant b^{(\log_2(n)+3)\ta(G)}$.

We show that the congestion is bounded on the set of canonical paths $\Gamma :=\Gamma_{\mathcal{P},G}$. The reasoning follows exactly as in the proof of Theorem~\ref{thm:Path}. By Lemma~\ref{lemma:CongestionPath3}, we have
\begin{align*}
\rho(\Gamma) &\leqslant 4n^2 \tilde{\lambda}^{1+2\max_{1\leqslant t \leqslant p} \alpha(G[X_t])}\max_{1\leqslant t\leqslant p}|\mathcal{I}(G[X_t])|^2  \\
&\leqslant 4n^2\tilde{\lambda}^{1+2(\log_2(n)+3)\ta(G)} b^{2(\log_2(n)+3)\ta(G)}
\end{align*}
Using Theorem~\ref{thm:BoundCanonicalPaths}, we obtain that $\tau_{G,\lambda} = n^{O(\ta(G))\cdot \ln(b\tilde{\lambda})}$.
\end{proof}

\section{Lower Bound on the Mixing Time of Glauber Dynamics}\label{sec:LowerBounds}

This section is dedicated to reviewing and establishing lower bounds on the mixing time of the Glauber dynamics, based on upper bounds on the conductance of the Markov chain. We demonstrate that our parameterized mixing time bounds for dense graphs are tight by constructing, in each case, a family of graphs that matches the upper bounds stated in Theorems~\ref{thm:PathTree}. 

\subsection{Geometric intersection graphs}

Most of the research on Glauber dynamics has focused on the distinction between fast mixing and torpid mixing, where the latter corresponds to an exponential mixing time. Lattices provide a prototypical example of graphs exhibiting torpid mixing. In their seminal work, the authors of~\cite{borgs1999torpid} studied Glauber dynamics on the $d$-dimensional torus and established the following result:

\begin{theorem}[\cite{borgs1999torpid}]\label{thm:LowerBoundTorus}
Let $T_{L,d}$ be the $d$-dimensional torus with side length $L$. For $d\geqslant 2$ and $\lambda$ sufficiently large, the mixing time of the glauber dynamcis on $T_{L,d}$ verifies :
$$
\tau_{T_{L,d},\lambda} \geqslant e^{cL^{d-1}/\ln^2(L)}
$$
where $c$ is a constant depending only on $d$.
\end{theorem}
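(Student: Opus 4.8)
The plan is to establish the lower bound on $\tau_{T_{L,d},\lambda}$ via the conductance argument encapsulated in Theorem~\ref{thm:LowerBoundConductance}. That is, I would exhibit a partition $(\Omega_S, \Omega_1, \Omega_2)$ of $\mathcal{I}(T_{L,d})$ where $\Omega_S$ separates $\Omega_1$ and $\Omega_2$ in the reconfiguration graph $\mathcal{G}(T_{L,d})$, and where $\pi_{T_{L,d},\lambda}(\Omega_1)/\pi_{T_{L,d},\lambda}(\Omega_S)$ is exponentially large in $L^{d-1}/\ln^2(L)$. The natural choice, following the Peierls-type / contour argument that is standard for the hard-core model on lattices, is to split independent sets according to which of the two sublattices (the even and odd classes of the bipartition of the torus) they predominantly occupy. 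Concretely, let $\phi(I)$ measure the ``imbalance'' of $I$ between the even and odd sublattices; set $\Omega_1$ to be the configurations heavily biased towards even, $\Omega_2$ those heavily biased towards odd, and $\Omega_S$ the thin ``interface'' band of roughly balanced configurations. Since a single Glauber move changes $I$ by at most one vertex, the imbalance changes by at most one per step, so any reconfiguration path from $\Omega_1$ to $\Omega_2$ must pass through $\Omega_S$, giving the separation property for free.

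The core of the argument is then the weight estimate $\pi(\Omega_1) \geqslant \pi(\Omega_S) \cdot e^{cL^{d-1}/\ln^2(L)}$. This is where the contour/Peierls machinery enters: for a configuration in the interface set $\Omega_S$, one identifies a ``contour'' separating an even-dominated region from an odd-dominated region, whose size is at least of order $L^{d-1}$ (a minimal separating surface in the torus), and one shows the hard-core weight of configurations carrying such a contour is suppressed by a factor exponential in the contour length when $\lambda$ is large. A standard way to make this quantitative and to handle the entropy of the set of possible contours is to do a coarse-graining at scale $\Theta(\ln L)$, which is exactly where the $\ln^2 L$ loss in the exponent comes from — one needs the contour to be long enough that its energetic cost beats the $\exp(O(\text{surface area at coarse scale}))$ number of contour shapes. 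I would cite or adapt the Dobrushin--Kotecký--Shlosman / Borgs--Chayes--Frieze--Kim--Tetali style contour bounds rather than reprove them. I expect the main obstacle to be exactly this: carefully setting up the coarse-grained contour bound on the torus (as opposed to a box with boundary conditions), controlling the entropy of contours, and verifying that $\pi(\Omega_S)$ — summed over all possible interface locations and shapes — is still dominated by the single exponential factor, so that the ratio in Theorem~\ref{thm:LowerBoundConductance} is as claimed.

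Finally, I would assemble the pieces: plug the weight ratio into Theorem~\ref{thm:LowerBoundConductance} to conclude $\tau_{T_{L,d},\lambda} \geqslant \frac{\ln 2}{4}\left(e^{cL^{d-1}/\ln^2(L)} - 2\right)$, and absorb the constants into a possibly smaller $c$ (depending only on $d$, and requiring $\lambda$ larger than some threshold $\lambda_0(d)$) to get the stated clean form. In fact, since the statement is attributed to~\cite{borgs1999torpid}, the honest ``proof'' here is mostly a pointer: the result is precisely Theorem~1 (or its analogue) of that paper, and I would simply recall that their argument is the conductance bound via the sublattice-imbalance partition together with their coarse-grained contour estimate, noting that it fits the framework of Theorem~\ref{thm:LowerBoundConductance} verbatim. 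So the ``proof proposal'' is really: restate the partition, invoke their contour estimate as a black box for the weight ratio, and apply Theorem~\ref{thm:LowerBoundConductance}.
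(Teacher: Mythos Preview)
The paper gives no proof of this statement: it is quoted verbatim as a result of~\cite{borgs1999torpid} and used as a black box, so there is nothing to compare against. You correctly recognise this at the end of your proposal, and your sketch of the sublattice-imbalance partition plus coarse-grained Peierls/contour estimate is indeed a faithful high-level summary of the argument in~\cite{borgs1999torpid}; for the purposes of this paper, simply citing that reference is exactly what is done.
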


This theorem already demonstrates some tightness in the bound of Theorem~\ref{thm:SubexpMixingTime}. Indeed, let $\mathcal{H}(T_{L,d})$ be the set of induced subgraphs of $T_{L,d}$, which forms a hereditary class by definition. It is easy to observe that any $n$-vertex graph $G \in \mathcal{H}(T_{L,d})$ has a balanced separator consisting of at most $\sqrt{n}$ vertices, and thus admits a $(1/2, \sqrt{n})$-clique-based separator. Consequently, in the statement of Theorem~\ref{thm:SubexpMixingTime}, it is not possible to improve the exponent of $n$.

However, it is important to note that the 2-dimensional torus (resp. the $d$-dimensional torus) is not a disk graph (resp. a $d$-dimensional ball graph), and thus Theorem~\ref{thm:LowerBoundTorus} does not directly provide a lower bound for Theorem~\ref{thm:BallGraphs}. Nevertheless, for sufficiently large $L$, the torus $T_{L,d}$ is a $(d+1)$-dimensional ball graph. As a result, the exponent in Theorem~\ref{thm:BallGraphs} cannot be improved beyond $(d-2)/d$.

In the two-dimensional case, Randall~\cite{randall2006slow} established a stronger lower bound on the mixing time of Glauber dynamics on the 2-dimensional grid. Using the direct implication of Theorem~\ref{thm:LowerBoundConductance}, it follows :

\begin{theorem}[\cite{randall2006slow}]
Let $\Lambda_{L,2}$ be the $2$-dimensional grid with side length $L$. For  $\lambda\geqslant 58.612$, the mixing time of the glauber dynamics on $\Lambda_{L,2}$ verifies :
$$
\tau_{\Lambda_{L,2},\lambda} \geqslant e^{cL}
$$
where $c$ is a constant.
\end{theorem}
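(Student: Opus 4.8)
The plan is to invoke the conductance lower bound of Theorem~\ref{thm:LowerBoundConductance} with a partition $(\Omega_S,\Omega_1,\Omega_2)$ of $\mathcal{I}(\Lambda_{L,2})$ built from the two-coloring of the grid. Write $V(\Lambda_{L,2}) = V_e \sqcup V_o$ for the even and odd sublattices of the bipartition, and for an independent set $I$ define the imbalance $d(I) = |I\cap V_e| - |I\cap V_o|$. Fix a small constant $\delta>0$ and set
$$
\Omega_1 = \{I : d(I) > \delta L^2\},\qquad \Omega_2 = \{I : d(I) < -\delta L^2\},\qquad \Omega_S = \mathcal{I}(\Lambda_{L,2})\setminus(\Omega_1\cup\Omega_2).
$$
A single Glauber transition changes $d$ by at most one, so every path in $\mathcal{G}(\Lambda_{L,2})$ from $\Omega_1$ to $\Omega_2$ must pass through $\Omega_S$; thus $\Omega_S$ separates $\Omega_1$ and $\Omega_2$. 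The grid automorphism swapping $V_e$ and $V_o$ gives $\pi_{\Lambda_{L,2},\lambda}(\Omega_1) = \pi_{\Lambda_{L,2},\lambda}(\Omega_2)$, hence $\pi(\Omega_1)\geqslant \tfrac12(1-\pi(\Omega_S))$. Consequently, if we can show $\pi(\Omega_S)\leqslant e^{-cL}$ for some constant $c>0$ depending on $\lambda$, then $\pi(\Omega_1)/\pi(\Omega_S)\geqslant \tfrac12(e^{cL}-1)$ and Theorem~\ref{thm:LowerBoundConductance} yields $\tau_{\Lambda_{L,2},\lambda}\geqslant e^{c'L}$ for any $c'<c$ and $L$ large.

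Everything therefore reduces to the Peierls-type estimate $\pi(\Omega_S)\leqslant e^{-cL}$, which I would prove in three steps. \emph{(a) Topological lemma:} if $I\in\Omega_S$, so $I$ is not strongly biased toward either sublattice, then the unoccupied vertices of $I$, together with the disagreement pattern between the two ground states, contain a \emph{fault line} $\gamma$ — a crossing of the $L\times L$ region by a path of unoccupied/buffer vertices separating a region in even-phase from a region in odd-phase — and any such crossing has length $|\gamma|\geqslant L$. (The intuition: in the absence of a crossing fault line one phase must topologically dominate the whole region, forcing $|d(I)|$ to be a positive constant fraction of $L^2$.) \emph{(b) Weight comparison:} fix a fault line $\gamma$ and, for an independent set $I$ realizing $\gamma$, let $\Phi_\gamma(I)$ be obtained by erasing $\gamma$ and flipping the smaller side to match the phase of the larger side; this fills in $\Omega(|\gamma|)$ newly occupiable vertices along $\gamma$, so $\pi(\Phi_\gamma(I))\geqslant \lambda^{|\gamma|/c_0}\pi(I)$ for an absolute constant $c_0$, and $\Phi_\gamma$ is injective, giving $\sum_{I\text{ realizes }\gamma}\pi(I)\leqslant \lambda^{-|\gamma|/c_0}$. \emph{(c) Contour count:} the number of dual crossing paths of length $\ell$ is at most $L\cdot C^\ell$ for an absolute constant $C$ (a crude bound on self-avoiding dual walks). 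Summing over all fault lines,
$$
\pi(\Omega_S)\ \leqslant\ \sum_{\ell\geqslant L} L\,C^\ell\,\lambda^{-\ell/c_0}\ =\ L\sum_{\ell\geqslant L}\bigl(C\lambda^{-1/c_0}\bigr)^\ell,
$$
which is a convergent geometric series bounded by $e^{-cL}$ as soon as $C\lambda^{-1/c_0}<1$, i.e. once $\lambda$ exceeds an explicit threshold; sharpening the constants in the contour count and in $\Phi_\gamma$ is what produces the numerical value $58.612$.

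The main obstacle is step (a) together with the bookkeeping in step (b): one must pin down the right definition of fault line so that every imbalance-balanced configuration contains one of length $\Omega(L)$ \emph{and} so that erasing it provably creates $\Omega(|\gamma|)$ new occupied vertices with an injective map, and then track all constants through to the stated threshold on $\lambda$. The probabilistic ingredients — the one-step Lipschitz property of $d$, the sublattice-swap symmetry, and the geometric summation — are routine; the geometry of contours on the grid and the count of occupied sites gained when a fault line is removed is the delicate part, and is precisely the content of Randall's topological-obstruction argument~\cite{randall2006slow}.
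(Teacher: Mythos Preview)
The paper does not supply its own proof of this statement: it is quoted verbatim as a result of Randall~\cite{randall2006slow}, with only the remark that it follows from a conductance bound of the form in Theorem~\ref{thm:LowerBoundConductance}. Your sketch is precisely the Peierls/fault-line argument Randall uses, so in that sense you are on the same track as the original source rather than diverging from the paper.

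Two small points on the sketch itself. First, the sentence ``the grid automorphism swapping $V_e$ and $V_o$ gives $\pi(\Omega_1)=\pi(\Omega_2)$'' is only literally true when $L$ is even; for odd $L$ the two colour classes have different sizes and no automorphism interchanges them. Randall handles this (and you can too) either by restricting to even $L$, or by noting that one does not need exact symmetry---it suffices that $\pi(\Omega_1)$ is bounded below by the weight of a single ground state, which already beats $\pi(\Omega_S)$ by the required $e^{cL}$ factor. Second, as you yourself flag, step~(a) is the real content: the correct combinatorial object is a \emph{fault line} in the dual lattice (a path of ``broken bonds'' separating even-phase from odd-phase regions), and making the map $\Phi_\gamma$ in step~(b) injective while guaranteeing a gain of $\Omega(|\gamma|)$ occupied sites requires care with what happens at the boundary of the grid and at the endpoints of $\gamma$. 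The constant $58.612$ in Randall's paper comes from a specific choice of contour encoding and a careful count of the connective constant of the relevant dual walks; your outline is correct but would not recover that numerical threshold without reproducing those details.
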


The generalization to the $d$-dimensional grid remains to be proven but appears to hold. Notably, in the case of the Ising model, which shares many similarities with the hard-core model, Thomas~\cite{thomas1989bound} demonstrated that the mixing time is indeed exponential in $n^{1-1/d}$ on the $d$-dimensional grid with $n$ vertices.

\subsection{Pathwidth and path independence number.} 

Using a conductance argument, the simplest graph on which Glauber dynamics mixes in exponential time is the complete bipartite graph $K_{t,t}$. Indeed, to transition from an independent set contained entirely in one part to an independent set contained in the other, the dynamics must pass through the empty independent set, which has significantly lower energy compared to any other state. Observe that this graph has pathwidth exactly $t$. We show that this graph establishes the tightness of the FPT mixing time bound in terms of pathwidth.

\begin{theorem}\label{thm:LowerBoundPw}
For any $c>0$, there exists a $n$-vertex graph $G$  such that for any $\lambda\geqslant 1$,  
$$
\tau_{G,\lambda} \geqslant  \left(\frac{\lambda+1}{2}\right)^{\pw(G)} \cdot n^{c} .
$$
\end{theorem}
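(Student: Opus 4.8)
The plan is to take $G$ to be a disjoint union of many copies of a complete bipartite graph, which keeps the pathwidth controlled while forcing a bottleneck through the empty set in each component, then amplify the bottleneck by tensorizing over the copies. Concretely, fix an integer $t$ and let $H = K_{t,t}$ with parts $A$ and $B$; take $G$ to be $r$ disjoint copies $H^{(1)},\dots,H^{(r)}$ of $H$, where $r$ is chosen (polynomially in $t$) so that the final $n^c$ factor comes out. Since pathwidth is the max over connected components, $\pw(G) = \pw(K_{t,t}) = t$, and $n = 2tr$.

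First I would analyze a single copy $H = K_{t,t}$. Partition $\mathcal{I}(H)$ into $\Omega_1 = \{I : I \subseteq A, I \neq \emptyset\}$, $\Omega_2 = \{I : I \subseteq B, I \neq \emptyset\}$, and $\Omega_S = \{\emptyset\}$ (the only independent set meeting both parts is $\emptyset$ since $A\times B \subseteq E$). Any path in $\mathcal{G}(H)$ from a nonempty subset of $A$ to a nonempty subset of $B$ must pass through $\emptyset$, so $\Omega_S$ separates $\Omega_1$ and $\Omega_2$. Computing $\pi_{H,\lambda}$: $Z_H(\lambda) = 2(1+\lambda)^t - 1$, $\pi_{H,\lambda}(\Omega_S) = 1/Z_H(\lambda)$, and $\pi_{H,\lambda}(\Omega_1) = ((1+\lambda)^t - 1)/Z_H(\lambda)$, so the ratio $\pi_{H,\lambda}(\Omega_1)/\pi_{H,\lambda}(\Omega_S) = (1+\lambda)^t - 1$. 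Then Theorem~\ref{thm:LowerBoundConductance} gives $\tau_{H,\lambda} \geqslant \tfrac{\ln 2}{4}\big((1+\lambda)^t - 3\big)$, which is already $\Omega((1+\lambda)^t)$ but has no $n^c$ factor and has base $1+\lambda$ rather than $(\lambda+1)/2$.

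To get the disjoint-union amplification, I would lift this to $G$ by choosing $\Omega_S^G = \{I \in \mathcal{I}(G) : I \cap V(H^{(j)}) = \emptyset \text{ for some } j\}$, $\Omega_1^G = \{I : \forall j,\ \emptyset \neq I\cap V(H^{(j)}) \subseteq A^{(j)}\}$, and $\Omega_2^G$ the analogous all-$B$ set; since every component must pass through its empty state, $\Omega_S^G$ separates $\Omega_1^G$ and $\Omega_2^G$. Because $\pi_{G,\lambda}$ factorizes over components, $\pi_{G,\lambda}(\Omega_1^G) = \big((1+\lambda)^t-1\big)^r / Z_H(\lambda)^r$, while a union bound gives $\pi_{G,\lambda}(\Omega_S^G) \leqslant r/Z_H(\lambda) \cdot Z_H(\lambda)^{r-1}/Z_H(\lambda)^{r-1}$... more carefully, $\pi_{G,\lambda}(\Omega_S^G) \leqslant r \cdot \pi_{H,\lambda}(\emptyset) = r/Z_H(\lambda)$ is wrong dimensionally, so instead $\pi_{G,\lambda}(\Omega_S^G) \leqslant \sum_{j} \Pr[I\cap V(H^{(j)}) = \emptyset] = r/Z_H(\lambda)$ — this IS right since $\Pr[I \cap V(H^{(j)})=\emptyset] = 1/Z_H(\lambda)$ by the product structure. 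Hence $\pi_{G,\lambda}(\Omega_1^G)/\pi_{G,\lambda}(\Omega_S^G) \geqslant \big((1+\lambda)^t-1\big)^r / (r\, Z_H(\lambda)^{r-1})$. Using $Z_H(\lambda) \leqslant 2(1+\lambda)^t$ and $(1+\lambda)^t - 1 \geqslant (1+\lambda)^t/2$ for $\lambda \geqslant 1$, $t\geqslant 1$, this is at least $(1+\lambda)^{tr}/(r \cdot 2^{r-1} (1+\lambda)^{t(r-1)} \cdot 2) = (1+\lambda)^t/(2^r r)$; that only recovers one factor of $(1+\lambda)^t$, so plain disjoint union does not amplify the exponent.

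Therefore I would instead keep a \emph{single} $K_{t,t}$ but pad it with a large independent set $P$ of $s$ isolated vertices to inflate $n$ without changing the bottleneck ratio structure in the exponent: with $G = K_{t,t} + \overline{K_s}$ (disjoint union), $\pw(G) = t$, $n = 2t + s$, and taking $\Omega_S = \{\emptyset\text{ on the }K_{t,t}\text{ part}\}$, $\Omega_1,\Omega_2$ as the all-$A$/all-$B$ configurations on the clique-bipartite part (with arbitrary choice on $P$), the $P$ factor $Z_{\overline{K_s}}(\lambda) = (1+\lambda)^s$ cancels in the ratio, giving again $\pi_{G,\lambda}(\Omega_1)/\pi_{G,\lambda}(\Omega_S) = (1+\lambda)^t - 1 \geqslant \big(\tfrac{\lambda+1}{2}\big)^t \cdot \tfrac{(1+\lambda)^{?}}{\cdots}$. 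Hmm, $(1+\lambda)^t - 1$ versus $\big(\tfrac{\lambda+1}{2}\big)^t$: for $\lambda\geqslant 1$ we have $(1+\lambda)^t \geqslant (1+\lambda)^t - 1$ and $\big(\tfrac{\lambda+1}{2}\big)^t = (1+\lambda)^t/2^t$, so $(1+\lambda)^t - 1 \geqslant \big(\tfrac{\lambda+1}{2}\big)^t$ easily holds. But there is still no polynomial $n^c$ factor. To obtain it, I would take $s = \Theta(t^{c})$, i.e. make the padding large; but $n^c$ with $n \approx s$ against a bound that does not grow with $s$ gets us nowhere.

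The resolution — and the step I expect to be the real obstacle — is to use many copies \emph{of a sub-bottleneck-size clique-bipartite graph} so the per-copy loss $2^{-t}$ is absorbed: take $r$ copies of $K_{t,t}$ where now we only demand $\pw(G) = t$ (still true) and set up the product bottleneck as above, obtaining ratio $\geqslant (1+\lambda)^t/(2^r r)$. Choosing $r = c'\log_{\lambda+1}(t)$ makes $2^r r = \mathrm{poly}(t)$ while $n = 2tr = \Theta(t\log t)$, so $(1+\lambda)^t/(2^r r) \geqslant (1+\lambda)^{t}/\mathrm{poly}(t) \geqslant \big(\tfrac{\lambda+1}{2}\big)^{t} \cdot \tfrac{2^t}{\mathrm{poly}(t)} \geqslant \big(\tfrac{\lambda+1}{2}\big)^{\pw(G)} n^{c}$ for $t$ large, since $2^t/\mathrm{poly}(t)$ dominates any fixed power of $n=\Theta(t\log t)$. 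I would verify the separator property and the probability estimates carefully (the union bound for $\pi_{G,\lambda}(\Omega_S^G)$ and the product formula for $\pi_{G,\lambda}(\Omega_1^G)$ are where constants must be tracked), apply Theorem~\ref{thm:LowerBoundConductance}, and conclude. The main obstacle is precisely this bookkeeping: making the polynomial factor $n^c$ appear requires the $2^t$ slack between $(1+\lambda)^t$ and $\big((\lambda+1)/2\big)^t$ to swallow both the $\mathrm{poly}(t)$ from the union bound over copies and the $n^c$ term, which forces the number of copies to be logarithmic rather than constant.
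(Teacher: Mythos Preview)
Your single-copy analysis of $H=K_{t,t}$ is already the entire proof, and it is exactly what the paper does. You obtain $\tau_{H,\lambda}\geqslant\tfrac{\ln 2}{4}\big((1+\lambda)^t-3\big)$ and then say this ``has no $n^c$ factor and has base $1+\lambda$ rather than $(\lambda+1)/2$''. But those two complaints cancel: since $(1+\lambda)^t=\big(\tfrac{\lambda+1}{2}\big)^t\cdot 2^t$ and $n=2t$, the $2^t$ slack swallows any fixed polynomial $(2t)^c$ once $t$ is large enough (depending only on $c$). The paper's proof is literally this one line:
\[
\frac{\tfrac{\ln 2}{4}(1+\lambda)^t}{\big(\tfrac{\lambda+1}{2}\big)^t\,(2t)^c}=\frac{\ln 2}{4}\,2^{\,t-c\log_2(2t)}>1\quad\text{for }t\geqslant t_c.
\]
You even identify this $2^t$ slack in your final paragraph, just in the wrong place.

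Your multi-copy detour is unnecessary and, as written, has two defects. First, taking $r=c'\log_{\lambda+1}(t)$ makes the graph depend on $\lambda$, which the statement does not allow (the graph is fixed first, then the bound must hold for all $\lambda\geqslant 1$). Second, your triple $(\Omega_S^G,\Omega_1^G,\Omega_2^G)$ is not a partition of $\mathcal{I}(G)$: the ``mixed'' states (nonempty in every copy, but some copies in $A$ and others in $B$) lie in none of the three sets. Both issues are fixable (take $r$ constant and dump mixed states into $\Omega_2^G$), but the simplest fix is $r=1$, which collapses back to the paper's argument.
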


\begin{lemma}\label{lemma:PiEdgeless}
Let $G$ be an edgeless graph with $n$ vertices. Then, for any $\lambda>0$, $Z_G(\lambda)=(\lambda+1)^n$.
\end{lemma}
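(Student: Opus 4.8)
The plan is to observe that in an edgeless graph there are no adjacency constraints, so \emph{every} subset of $V(G)$ is an independent set, i.e.\ $\mathcal{I}(G) = 2^{V(G)}$. Hence the partition function is a sum over all subsets of $V(G)$, weighted by $\lambda$ raised to the cardinality.

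Concretely, first I would note $\mathcal{I}(G) = \{ I : I \subseteq V(G)\}$ directly from the definition of independent set together with $E(G)=\emptyset$. Then I would write
$$
Z_G(\lambda) = \sum_{I \in \mathcal{I}(G)} \lambda^{|I|} = \sum_{I \subseteq V(G)} \lambda^{|I|} = \sum_{k=0}^{n} \binom{n}{k}\lambda^k,
$$
grouping subsets by their size $k$. Finally I would apply the binomial theorem to recognize $\sum_{k=0}^{n}\binom{n}{k}\lambda^k = (1+\lambda)^n$, which gives the claim. (Alternatively, one could argue by induction on $n$ using the fact that an edgeless graph on $n$ vertices is the disjoint union of an edgeless graph on $n-1$ vertices with an isolated vertex, together with $Z_{G_1 \sqcup G_2}(\lambda) = Z_{G_1}(\lambda)Z_{G_2}(\lambda)$ for disjoint unions, and the base case $Z_{K_1}(\lambda) = 1+\lambda$; but the binomial-theorem route is the most direct.)

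There is no real obstacle here: the only point worth stating carefully is the first identification $\mathcal{I}(G)=2^{V(G)}$, after which the computation is a one-line application of the binomial theorem.
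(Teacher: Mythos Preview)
Your proposal is correct and takes essentially the same approach as the paper: the paper's proof simply notes that $G$ has $\binom{n}{k}$ independent sets of size $k$ and applies the binomial theorem, which is exactly your argument with the preliminary identification $\mathcal{I}(G)=2^{V(G)}$ made explicit.
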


\begin{proof}
For any $0\leqslant k \leqslant n$, $G$ has $\binom{n}{k}$ independent sets of size exactly $k$. It follows that 
$$
Z_G(\lambda) = \sum_{k=0}^n \binom{n}{k}\lambda^k = (\lambda+1)^n.
$$
\end{proof}

\begin{proof}[Proof of Theorem~\ref{thm:LowerBoundPw}]
Let  $\Omega_t = \mathcal{I}(K_{t,t})$ as the set of independent sets of $K_{t,t}$. It is straightforward to verify that $K_{t,t}$ has pathwidth exactly $t$. Let $(A_t,B_t)$ be the unique bipartition of $V(G_t)$ such that both $A_t$ and $B_t$ are maximum independent sets of $G_t$. Define $\Omega_A$ (resp. $\Omega_B$) as the set of independent sets of $G_t$ contained entirely within $A_t$ (resp. $B_t$). 

Notice that any path from $\Omega_A$ to $\Omega_B$ in $\mathcal{G}(K_{t,t})$ must pass through the empty set. Otherwise, such a path would contain an independent set which intersect both $A_t$ and $B_t$, contradicting the definition of an independent set. By Theorem~\ref{thm:LowerBoundConductance}, and using the fact that $\pi(\Omega_A) = \pi(\Omega_B)$ by symmetry, we obtain :
$$
\tau_{G,\lambda} \geqslant \frac{\ln(2)}{4}\left( \frac{\pi(\Omega_A)}{\pi(\{\emptyset\})}-2\right) 
$$  
Since $\pi(\emptyset) = \frac{1}{Z_G(\lambda)}$ and $\pi(\Omega_A) = \frac{(\lambda+1)^t}{Z_G(\lambda)}$ by Lemma~\ref{lemma:PiEdgeless}, it follows that
$$
\tau_{G,\lambda} \geqslant \frac{\ln(2)}{4}(\lambda+1)^t
$$  
Then, observe that there exists some constant $t_c\geqslant 1$ such that for any $t\geqslant t_c$, the following inequality holds:
$$
\frac{\frac{\ln(2)}{4}(\lambda+1)^{t}}{\left(\frac{\lambda+1}{2}\right)^t\cdot (2t)^c } = \frac{\ln(2)}{4}2^{t-c\log_2(2t)}>1
$$
which finally leads to 
$$
\tau_{G,\lambda} \geqslant \left(\frac{\lambda+1}{2}\right)^{t} \cdot (2t)^c 
$$  
when $t$ is large enough.
\end{proof}

By replacing each vertex in $K_{t,t}$ with a clique, we establish the tightness of the mixing time in terms of the path independence number.

\begin{theorem}\label{thm:LowerBoundPa}
For any $t\geqslant 1$, there exists infinitely many graphs $G$ such that $\pa(G)=t$ and for any $\lambda \geqslant 1$,
$$
\tau_{G,\lambda} \geqslant c_t\left(n\lambda\right)^{t}
$$
where $n=|V(G)|$ and $c_t$ is a constant which only depends on $t$.
\end{theorem}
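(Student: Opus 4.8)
The plan is to use, exactly as suggested by the paragraph preceding the statement, a clique blow-up of the complete bipartite graph. Fix $t\geqslant 1$ and a large integer $s$, and let $G=G_{t,s}$ be obtained from $K_{t,t}$ with parts $\{a_1,\ldots,a_t\}$, $\{b_1,\ldots,b_t\}$ by replacing each $a_i$ (resp.\ $b_j$) with a clique $\hat A_i$ (resp.\ $\hat B_j$) on $s$ vertices, and each edge $a_ib_j$ by a complete bipartite graph between $\hat A_i$ and $\hat B_j$; there are no edges between $\hat A_i$ and $\hat A_{i'}$ for $i\neq i'$, nor between $\hat B_j$ and $\hat B_{j'}$. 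Then $n:=|V(G)|=2ts$, and $\hat A_i\cup\hat B_j$ is a clique for every $i,j$. The basic facts I would record first are: any nonempty independent set of $G$ lies entirely in $\hat A:=\bigcup_i\hat A_i$ or entirely in $\hat B:=\bigcup_j\hat B_j$ (a vertex of $\hat A_i$ and one of $\hat B_j$ are always adjacent), it meets each $\hat A_i$ and each $\hat B_j$ in at most one vertex, hence $\alpha(G)=t$; and, since $G[\hat A]$ is a disjoint union of $t$ copies of $K_s$, we have $Z_{\hat A}(\lambda)=(1+s\lambda)^t$.

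Next I would establish $\pa(G)=t$. The bound $\pa(G)\leqslant\alpha(G)=t$ is immediate from the one-bag path decomposition $(V(G))$. For the reverse inequality, take an arbitrary path decomposition $(X_1,\ldots,X_p)$ and, for $v\in V(G)$, let $I(v)$ be the interval of indices of bags containing $v$. Since each $\hat A_i$, each $\hat B_j$ and each $\hat A_i\cup\hat B_j$ is a clique, the intervals $\mathcal A_i:=\bigcap_{v\in\hat A_i}I(v)$, $\mathcal B_j:=\bigcap_{v\in\hat B_j}I(v)$ are nonempty, and $\mathcal A_i\cap\mathcal B_j\neq\emptyset$ for all $i,j$ by the Helly property of intervals. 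If $\bigcap_j\mathcal B_j\neq\emptyset$, any index $m$ in it satisfies $X_m\supseteq\hat B$, so $\alpha(G[X_m])\geqslant t$. Otherwise two of the $\mathcal B_j$ are disjoint; taking $m$ to be the right endpoint of the left one, each $\mathcal A_i$ contains $m$ (it meets both $\mathcal B_j$, which lie on opposite sides of $m$, and $\mathcal A_i$ is an interval), so $X_m\supseteq\hat A$ and again $\alpha(G[X_m])\geqslant t$. Either way every path decomposition has a bag of independence number $\geqslant t$, so $\pa(G)\geqslant t$. (Equivalently, one restricts the decomposition to a transversal picking one vertex per clique, which induces $K_{t,t}$, and invokes $\pa(K_{t,t})=t$.)

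For the mixing-time bound I would apply Theorem~\ref{thm:LowerBoundConductance} to the partition $\Omega_S=\{\emptyset\}$, $\Omega_1=\{I\in\mathcal I(G):\emptyset\neq I\subseteq\hat A\}$, $\Omega_2=\{I\in\mathcal I(G):\emptyset\neq I\subseteq\hat B\}$, which is indeed a partition of $\mathcal I(G)$ by the first observation above. The set $\Omega_S$ separates $\Omega_1$ and $\Omega_2$: on any path $W_0,\ldots,W_\ell$ in $\mathcal G(G)$ with $W_0\in\Omega_1$, $W_\ell\in\Omega_2$, let $k$ be the first index with $W_k\cap\hat B\neq\emptyset$; then $W_k=W_{k-1}\cup\{b\}$ with $b\in\hat B_j$, and independence forces $W_{k-1}$ to avoid all of $\hat A$ and all of $\hat B_{j'}$, whence $W_{k-1}=\emptyset\in\Omega_S$. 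Since $\pi(\{\emptyset\})=1/Z_G(\lambda)$ and $\sum_{I\in\Omega_1}\lambda^{|I|}=Z_{\hat A}(\lambda)-1=(1+s\lambda)^t-1$, Theorem~\ref{thm:LowerBoundConductance} yields $\tau_{G,\lambda}\geqslant\frac{\ln 2}{4}\bigl((1+s\lambda)^t-3\bigr)$. Writing $s=n/(2t)$ and using $\lambda\geqslant 1$, this is at least $\frac{\ln 2}{4}\bigl((n\lambda)^t/(2t)^t-3\bigr)$, which is at least $c_t(n\lambda)^t$ with $c_t:=\frac{\ln 2}{8(2t)^t}$ as soon as $s$ is large enough in terms of $t$ (e.g.\ $s\geqslant 6$ works for all $t$); letting $s\to\infty$ gives the required infinite family.

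The main obstacle is the determination of $\pa(G)$, specifically the lower bound $\pa(G)\geqslant t$: one has to rule out path decompositions that spread the $2t$ cliques cleverly so that no single bag ever contains a full transversal of one side. The interval Helly argument above is what forces such a bag to exist, and it is the one place where a small amount of care is needed because we are working with the blow-up rather than with $K_{t,t}$ itself. Everything else — the independent-set count $Z_{\hat A}(\lambda)=(1+s\lambda)^t$, the separation claim, and the closing arithmetic with the constant $c_t$ — is routine.
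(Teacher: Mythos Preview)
Your proposal is correct and follows essentially the same route as the paper: the same graph (blow-up of $K_{t,t}$ by $s$-cliques), the same partition $(\{\emptyset\},\Omega_1,\Omega_2)$ fed into Theorem~\ref{thm:LowerBoundConductance}, and the same arithmetic with $s=n/(2t)$. The only substantive difference is in establishing $\pa(G)\geqslant t$: the paper invokes the standard fact (Lemma~\ref{lemma:Neighboroodtree}) that some bag of any tree decomposition contains a closed neighbourhood $N[v]$, and observes $\alpha(G[N[v]])=t$; you instead give a direct Helly-type interval argument on the clique intervals $\mathcal A_i,\mathcal B_j$. Both are short and self-contained; your version avoids citing the auxiliary lemma at the cost of a slightly longer case analysis.
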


\begin{lemma}\label{lemma:P3-free}
Let $G$ be a graph which is the disjoint union of $t$ cliques of size $n$. Then, for any $\lambda>0$, $Z_G(\lambda)=(n\lambda+1)^t$.
\end{lemma}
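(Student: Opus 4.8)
The plan is to reduce the statement to two elementary facts: that the partition function is multiplicative over disjoint unions, and that a single clique on $n$ vertices has partition function $n\lambda + 1$.

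First I would record the multiplicativity. If $G$ is the disjoint union of graphs $H_1, \dots, H_t$, then every independent set $I \in \mathcal{I}(G)$ decomposes uniquely as $I = I_1 \cup \dots \cup I_t$ with $I_j = I \cap V(H_j) \in \mathcal{I}(H_j)$, and conversely any such choice of $I_j$'s yields an independent set of $G$ since there are no edges between the $H_j$. Because the $V(H_j)$ are pairwise disjoint, $\lambda^{|I|} = \prod_{j=1}^t \lambda^{|I_j|}$, so summing over all independent sets and distributing the product gives
$$
Z_G(\lambda) = \sum_{I \in \mathcal{I}(G)} \lambda^{|I|} = \prod_{j=1}^t \left( \sum_{I_j \in \mathcal{I}(H_j)} \lambda^{|I_j|} \right) = \prod_{j=1}^t Z_{H_j}(\lambda).
$$
(This is the equality case of Lemma~\ref{lemma:Technical}.(\ref{lemma:Technical2}), which holds precisely because the parts are disconnected; alternatively one can invoke that lemma in both directions.)

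Next I would compute $Z_{K_n}(\lambda)$ for a clique on $n$ vertices. Since no two vertices of $K_n$ can both lie in an independent set, $\mathcal{I}(K_n) = \{\emptyset\} \cup \{\{v\} : v \in V(K_n)\}$, a set of size $n+1$. Hence $Z_{K_n}(\lambda) = \lambda^0 + n\lambda^1 = n\lambda + 1$. Applying the multiplicativity step with each $H_j$ equal to $K_n$ then gives $Z_G(\lambda) = (n\lambda+1)^t$, as claimed.

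There is no real obstacle here; the only point requiring a word of care is that Lemma~\ref{lemma:Technical}.(\ref{lemma:Technical2}) is stated as an inequality, so if one wants to cite it rather than re-derive the identity, one must note that for a disjoint union the reverse inequality also holds (every product of independent sets of the parts is an independent set of the whole), yielding equality.
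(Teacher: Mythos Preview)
Your proof is correct. The paper takes a slightly different route: it directly counts that $G$ has exactly $n^k\binom{t}{k}$ independent sets of size $k$ (choose $k$ of the $t$ cliques to hit, then one vertex from each chosen clique), and then sums $\sum_{k=0}^t \binom{t}{k}(n\lambda)^k = (n\lambda+1)^t$ via the binomial theorem. Your approach instead factors through the multiplicativity of $Z$ over connected components together with the trivial computation $Z_{K_n}(\lambda)=n\lambda+1$. This is arguably more modular and sidesteps the explicit size-by-size count, while the paper's argument has the minor advantage of making the size distribution of independent sets explicit. The two are of course equivalent---expanding your product $(n\lambda+1)^t$ reproduces exactly the paper's sum---so neither buys anything essential over the other.
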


\begin{proof} 
For any $0\leqslant k \leqslant t$, $G$ has exactly $n^k \binom{t}{k}$ independent sets of size exactly $k$. Indeed, any independent set must intersect each clique in at most one vertex, meaning that choosing an independent set corresponds to selecting $k$ cliques and picking exactly one vertex inside each chosen clique. It follows that:
$$
Z_G(\lambda) = \sum_{k=0}^t \binom{t}{k}n^k \lambda^k = (n\lambda+1)^t
$$
\end{proof}

\begin{lemma}[Folklore]\label{lemma:Neighboroodtree}
For any graph $G$ and tree decomposition $\mathcal{T}=(T,\{X_t\}_{t\in V(T)})$ of $G$, there exist $v\in V(G)$ and $t\in V(T)$ such that $N[v] \subseteq X_t$.
\end{lemma}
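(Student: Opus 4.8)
The plan is to root the tree $T$ at an arbitrary node and argue by choosing a vertex whose ``footprint'' in $T$ sits as low as possible. First I would use property~(3): for each $v \in V(G)$ the set $T_v := \{t \in V(T) \mid v \in X_t\}$ is a nonempty connected subtree of $T$, so after rooting it has a unique node of minimum depth, call it $t_v$. (Uniqueness is the routine observation that two minimum-depth nodes of a connected subtree would have a common ancestor inside the subtree at strictly smaller depth.) Every node of $T_v$ is a weak descendant of $t_v$, and, because $T_v$ is connected, $T_v$ contains the entire path between any two of its nodes.

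Next I would pick $v \in V(G)$ maximizing $\operatorname{depth}(t_v)$ and set $t := t_v$, so $v \in X_t$ by definition; it then suffices to show $N(v) \subseteq X_t$. Let $u \in N(v)$. By property~(2) there is a node $s$ with $\{u,v\} \subseteq X_s$, hence $s \in T_v \cap T_u$, so $s$ is a weak descendant of both $t_v$ and $t_u$; in particular $t_v$ and $t_u$ both lie on the root-to-$s$ path and are therefore comparable under the ancestor relation. If $t_v$ is a weak ancestor of $t_u$, then $\operatorname{depth}(t_u) \geqslant \operatorname{depth}(t_v)$, and maximality of $\operatorname{depth}(t_v)$ forces $t_u = t_v$, whence $u \in X_{t_u} = X_t$. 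Otherwise $t_u$ is a strict ancestor of $t_v$, so $t_v$ lies on the path from $t_u$ to $s$; as both endpoints of that path lie in the connected set $T_u$, so does $t_v$, again giving $u \in X_t$. In either case $u \in X_t$, and since $u$ was arbitrary, $N[v] = N(v) \cup \{v\} \subseteq X_t$.

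The argument is essentially routine — this is why the statement is labelled folklore — and uses only the three defining axioms of a tree decomposition, not any width bound. The one step that needs a little care is the case analysis relating $t_u$ and $t_v$ through their common descendant $s$: the key point is that connectivity of $T_u$ lets one fill in the whole path from $t_u$ down to $s$, and hence capture $t_v$ on it. (An alternative route would be to first reduce $\mathcal{T}$ so that no bag is contained in an adjacent bag and then take a vertex private to a leaf bag, but the rooting argument above avoids this preprocessing.)
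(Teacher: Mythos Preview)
Your proof is correct and follows essentially the same approach as the paper's: root $T$, associate to each vertex the top node of its bag-subtree, pick a vertex whose top node has maximum depth, and use connectivity plus property~(2) to force every neighbour into that bag. The only cosmetic difference is that the paper argues by contradiction (assuming $u \notin X_{t_v}$ and deriving a deeper top node for $u$), whereas you give a direct case split on the ancestor relation between $t_u$ and $t_v$; the underlying idea is identical.
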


\begin{proof}[Proof of Theorem~\ref{thm:LowerBoundPa}]
Consider the graph $G_{p,t}$ constructed as follows:

\begin{itemize} \item Let $G_1$ be the disjoint union of $t$ cliques, each of size $p$, and let $G_2$ be an identical copy of $G_1$. \item Add all possible edges between $G_1$ and $G_2$. \end{itemize}

We first prove that $G_{p,t}$ has a path-independence number exactly $t$. Let $C_1, \dots, C_t$ be the $t$ cliques of size $p$ in $G_2$. Observe that the collection of sets $\{V(G_1) \cup C_i\}_{1\leqslant i \leqslant t}$ forms a valid path decomposition of $G_{p,t}$, and for each $1\leqslant i \leqslant t$, $\alpha(G[V(G_1) \cup C_i]) = t$. Furthermore, for any path decomposition $\{X_i\}_{1\leqslant i \leqslant k}$ of $G_{p,t}$, by Lemma~\ref{lemma:Neighboroodtree}, there exists a vertex $v\in V(G)$ and an index $1\leqslant i \leqslant k$ such that $N[v] \subseteq X_i$. Since for any vertex $v\in V(G)$, its closed neighborhood induces a graph where $\alpha(N[v])=t$ (as the neighborhood of each vertex is obtained by adding all the edges between a clique of size $p-1$ and a disjoint union of $t$ cliques), it follows that $\pa(G_{p,t}) = t$.

Now, define $\Omega_{p,t} = \mathcal{I}(G_{p,t})$ and let $\Omega_1$ (resp. $\Omega_2$) denote the independent sets contained entirely in $V(G_1)$ (resp. $V(G_2)$).

Notice that any transition from $\Omega_1$ to $\Omega_2$ must pass through the empty set. By Theorem~\ref{thm:LowerBoundConductance}, and using the fact that $\pi(\Omega_1) = \pi(\Omega_2)$ by symmetry, we obtain:
$$
\tau_{G,\lambda}\geqslant \frac{\ln(2)}{4}\left( \frac{\pi(\Omega_1)}{\pi(\{\emptyset \})}-2\right) 
$$  
Since $\pi(\emptyset) = \frac{1}{Z_G(\lambda)}$ and by Lemma~\ref{lemma:P3-free}, $\pi(\Omega_1) = \frac{(n\lambda+1)^t}{Z_G(\lambda)}$, it follows that
$$
\tau_{G,\lambda} \geqslant \frac{\ln(2)}{8}(p\lambda+1)^t \geqslant \frac{\ln(2)}{8}(p\lambda)^t 
$$  
After substituting $p$ with $\frac{n}{2t}$, where $n$ is the number of vertices in $G_{p,t}$, we derive:
$$
\tau_{G,\lambda} \geqslant \frac{\ln(2)}{4(2t)^t}(n\lambda)^t
$$  
This completes the proof.
\end{proof}

\subsection{Tree-independence number} Having proven Theorems~\ref{thm:LowerBoundPw} and~\ref{thm:LowerBoundPa}, it is natural to ask whether these results extend to establish the optimality of the mixing time for treewidth and tree-independence number. For treewidth, an equivalent question was already raised by Eppstein and Frishberg~\cite{eppstein2023rapid}, who asked whether an FPT mixing time in terms of treewidth is possible. For the tree-independence number, Dyer, Greenhill, and Müller posed the weaker question of whether the quasi-polynomial mixing time bound for bounded bipartite treewidth is tight. We answer this in the affirmative: by showing that the quasi-polynomial mixing time is optimal for graphs with bounded tree-independence number, we resolve their question. This follows from Proposition~\ref{prop:BoundBipartite}, since graphs of bounded tree-independence number also have bounded bipartite treewidth.

The proof strategy for Theorems~\ref{thm:LowerBoundPw} and~\ref{thm:LowerBoundPa} can be summarized as follows: identify a graph that has exactly two distinct "high-energy states" (such as two distinct maximum independent sets) and demonstrate that any transformation from one state to the other must pass through a low-energy state. A natural first step is to find such a graph that also has bounded treewidth. This idea has been previously explored in the context of reconfiguration~\cite{de2018independent} and has also been applied in the analysis of the Metropolis Markov chain~\cite{nardi2016hitting}. \newline
 
We construct a family of graphs $G_k$ for $k\geqslant 0$. Let $T_k$ be the complete binary tree of depth $k$. The graph $G_k$ is obtained by replacing each node $t\in V(T_k)$ with a set $X_t = \{u_t,v_t\}$ of two adjacent vertices. For each edge $tt'\in E(T_k)$, we add a matching of two edges $u_tu_{t'}$ and $v_tv_{t'}$. The graph $G_3$ is illustrated in Figure~\ref{fig:G3}. Building on the work from \cite{de2018independent}, the graph $G_k$ can be seen as the simplest graph of treewidth $2$ with unbounded TAR reconfiguration threshold. We next prove some basic results about those graphs.

\begin{figure}[t]
    \centering
    \begin{tikzpicture}[scale=0.8]
    \foreach \i in {0,...,7}{
        \node[draw, circle, fill] () at (\i,0) {};
        \node[draw, circle,fill] () at (\i,-1) {};
        \draw (\i,0) --(\i,-1);
    }
    \foreach \i in {0,...,3}{
        \node[draw, circle,fill] () at (2*\i+0.5,1) {};
        \node[draw, circle,fill] () at (2*\i+0.5,-2) {};
        \draw (2*\i+0.5,1)  -- (2*\i+0.5,-2);
        \foreach \j in {2*\i,2*\i+1}{
                \draw (2*\i+0.5,1) --(\j,0); 
                \draw (2*\i+0.5,-2) --(\j,-1); 
        }
    }
    \foreach \i in {0,1}{
        \node[draw, circle,fill] () at (4*\i+1.5,2) {};
        \node[draw, circle,fill] () at (4*\i+1.5,-3) {};
        \draw (4*\i+1.5,2) -- (4*\i+1.5,-3);
        \foreach \j in {4*\i+0.5,4*\i+2.5}{
                \draw (4*\i+1.5,2) --(\j,1); 
                \draw (4*\i+1.5,-3) --(\j,-2); 
        }
    }
    
    \node[draw, circle,fill] () at (3.5,3) {};
    \node[draw, circle,fill] () at (3.5,-4) {};
    \draw (3.5,3)--(3.5,-4);
    \draw (1.5,2) -- (3.5,3) -- (5.5,2);
    \draw (1.5,-3) -- (3.5,-4) -- (5.5,-3);

    \end{tikzpicture}
    \begin{tikzpicture}[scale=0.5]
    \foreach \i in {0,...,7}{
        \node[draw, circle, fill, scale=0.6] () at (2*\i,0) {};
        \node[draw, circle, fill, scale=0.6] () at (2*\i+1,0) {};
        \draw(2*\i+0.5,0) ellipse (0.8 and 0.5);
        \draw (2*\i,0) --(2*\i+1,0);
    }
    
    \foreach \i in {0,...,3}{
        \node[draw, circle, fill, scale=0.6] () at (4*\i+1,2) {};
        \node[draw, circle, fill, scale=0.6] () at (4*\i+1+1,2) {};
        \draw(4*\i+1.5,2) ellipse (0.8 and 0.5);
        \draw (4*\i+1,2) --(4*\i+2,2);
        \draw (4*\i,0) -- (4*\i+1,2) -- (4*\i+2,0);
        \draw (4*\i+1,0) -- (4*\i+2,2) -- (4*\i+3,0);
    }
    
    \foreach \i in {0,1}{
        \node[draw, circle, fill, scale=0.6] () at (8*\i+1+2,4) {};
        \node[draw, circle, fill, scale=0.6] () at (8*\i+1+3,4) {};
        \draw(8*\i+3.5,4) ellipse (0.8 and 0.5);
        \draw (8*\i+3,4) --(8*\i+4,4);
        \draw (8*\i+1,2) -- (8*\i+3,4) -- (8*\i+5,2);
        \draw (8*\i+2,2) -- (8*\i+4,4) -- (8*\i+6,2);
    }
    
    \node[draw, circle, fill, scale=0.6] () at (7,6) {};
    \node[draw, circle, fill, scale=0.6] () at (8,6) {};
    \draw(7.5,6) ellipse (0.8 and 0.5);
    \draw (7,6) --(8,6);
    \draw (3,4) -- (7,6) -- (11,4);
    \draw (4,4) -- (8,6) -- (12,4);

    \node[draw = none] () at (0,-2) {};
    \end{tikzpicture}
    \caption{Two representations of the graph $G_3$. In the right representation, two vertices are in the same bag if and only if they belong to the same set $X_t$ for some $t \in V(T_3)$.}
    \label{fig:G3}
\end{figure}
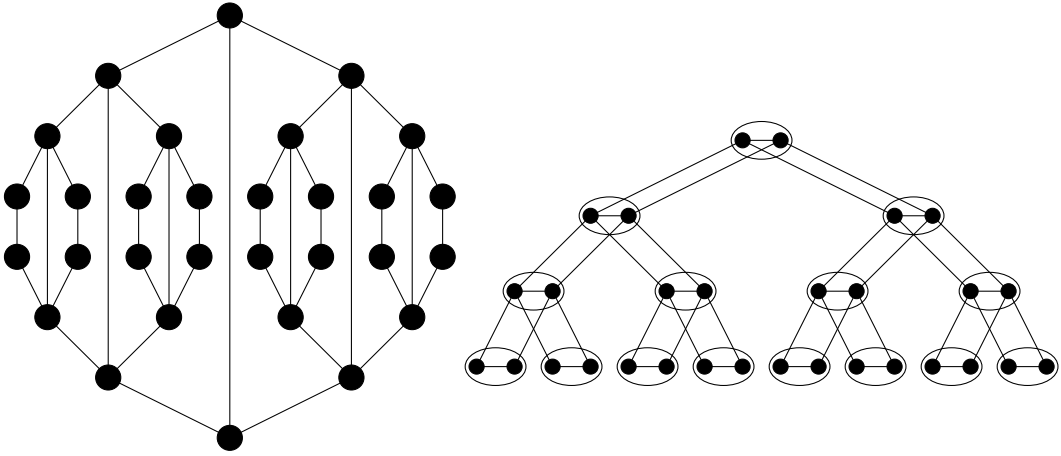

\begin{lemma}\label{lemma:PropertiesGk}
Let $k\geqslant 0$. The following properties hold.
\begin{enumerate}
    \item\label{sublemma:TwoMIS} The graph $G_k$ has $2^{k+2}-2$ vertices and exactly two distinct maximum independent set (MIS) $I_k$ and $J_k$ of size $\alpha(G_k) = 2^{k+1}-1$, and they form a partition of $V(G_k)$.
    \item\label{sublemma:SmallIS} There is a constant $c_1$ such that for every path $(W_i)_{0\leqslant i \leqslant p}$ from $I_k$ to $J_k$ in $\mathcal{G}(G_k)$, there exists some $0\leqslant i \leqslant p$ such that $|W_i| \leqslant \alpha(G_k) - \lfloor c_1 \cdot k \rfloor$.
    \item\label{sublemma:CountIS} For any $d\geqslant 0$, $G_k$ contains at most $2^{2d+1}\alpha(G_k)^d$ independent sets of size exactly $\alpha(G_k) - d$.
\end{enumerate}
\end{lemma}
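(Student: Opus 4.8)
\textbf{Part (\ref{sublemma:TwoMIS}).}
The plan is to proceed by induction on $k$, using the recursive structure of $G_k$: it consists of a root bag $X_r = \{u_r, v_r\}$ attached by two matchings to two disjoint copies $G_{k-1}^{(1)}, G_{k-1}^{(2)}$ of $G_{k-1}$. First I would check the vertex count: $|V(G_k)| = 2 + 2|V(G_{k-1})| = 2 + 2(2^{k+1}-2) = 2^{k+2}-2$, with base case $|V(G_0)| = 2$. For the independent set structure, observe that each $X_t = \{u_t, v_t\}$ is an edge, so any independent set picks at most one vertex of each $X_t$; hence $\alpha(G_k) \le |V(T_k)| = 2^{k+1}-1$. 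I would then define $I_k$ as the set containing $u_t$ for every node $t$ at even depth and $v_t$ for every node at odd depth, and $J_k$ symmetrically; the matching edges $u_tu_{t'}, v_tv_{t'}$ between adjacent bags mean that once a choice is fixed at the root, it propagates, and the constraint ``different letter at parent and child'' forces exactly these two global configurations as the \emph{only} independent sets hitting every bag. A short argument shows any MIS must hit every bag (otherwise we could add a vertex), so there are exactly two MIS, they have size $2^{k+1}-1$, and since $I_k$ and $J_k$ use complementary letters at every node, $I_k \cup J_k = V(G_k)$ and $I_k \cap J_k = \emptyset$.

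\textbf{Part (\ref{sublemma:SmallIS}).}
This is the crux and I expect it to be the main obstacle, as it is essentially a TAR-reconfiguration lower bound of the type in~\cite{de2018independent}. The plan is to argue by induction on $k$ that any reconfiguration path from $I_k$ to $J_k$ must at some point have small intersection with the root bag \emph{and} recursively pay a cost in one of the two subtrees. Concretely, along any path $(W_i)$ from $I_k$ to $J_k$, consider the first index where the token configuration on the root bag $X_r$ ``flips'' from the $I_k$-pattern to the $J_k$-pattern; at the moment $X_r$ is changed, at least one of the two children copies $G_{k-1}^{(j)}$ must already be ``mostly converted'' to its $J$-pattern or still be in its $I$-pattern in a way that is incompatible with the root, and by the induction hypothesis (applied to the restriction of the path to that copy, which is a valid walk in $\mathcal{G}(G_{k-1})$) the path loses $\lfloor c_1(k-1)\rfloor$ tokens there; adding the constant loss incurred at the root gives $\lfloor c_1 k\rfloor$ for a suitable small constant $c_1 > 0$. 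The delicate point is to set up the inductive invariant so that the losses in the subtree and at the current level genuinely add rather than overlap — this requires tracking which side of the matching is ``blocked'' and choosing the recursion to descend into a child whose conversion has not yet begun. I would phrase the invariant as: ``any walk in $\mathcal{G}(G_k)$ starting from a state agreeing with $I_k$ on $X_r$ and ending at a state agreeing with $J_k$ on $X_r$ must drop below $\alpha(G_k) - \lfloor c_1 k\rfloor$ somewhere,'' which is self-strengthening under the recursion.

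\textbf{Part (\ref{sublemma:CountIS}).}
Here I would count independent sets of $G_k$ of size $\alpha(G_k) - d$ directly. Such a set $S$ hits all but $d$ of the $|V(T_k)| = \alpha(G_k)$ bags; choose the set $D$ of $d$ ``missed'' bags in at most $\binom{\alpha(G_k)}{d} \le \alpha(G_k)^d$ ways. Once $D$ is fixed, I claim $S$ is determined up to a factor of at most $2^{2d+1}$: on the subgraph induced by the bags \emph{not} in $D$, each connected component of the ``bag tree minus $D$'' that contains $\ge 1$ node forces, just as in part (\ref{sublemma:TwoMIS}), at most $2$ choices of letter-pattern (two per component because fixing one node fixes the whole component); since removing $d$ nodes from a tree leaves at most $d+1$ components, the number of completions is at most $2^{d+1}$. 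Multiplying, the count is at most $\alpha(G_k)^d \cdot 2^{d+1} \le 2^{2d+1}\alpha(G_k)^d$, which is the claimed bound (the extra slack $2^d$ comfortably absorbs the fact that a missed bag can still contribute $0$ or could be re-examined). I do not expect difficulty here beyond bookkeeping the component count.
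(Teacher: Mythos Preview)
Your treatment of parts~(\ref{sublemma:TwoMIS}) and~(\ref{sublemma:CountIS}) follows the same line as the paper, with one slip in~(\ref{sublemma:CountIS}): the claim ``removing $d$ nodes from a tree leaves at most $d+1$ components'' is false for binary trees. Each removed node has up to two children, so the forest $T_k - D$ has at most $2d+1$ components (at most $2d$ nodes whose parent lies in $D$, plus possibly the global root). The paper counts exactly this way, obtaining $2^{2d+1}$ completions rather than $2^{d+1}$; there is no slack in the stated bound, so your final inequality survives only because the target already contains the correct exponent.

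Part~(\ref{sublemma:SmallIS}) is where your plan diverges from the paper and where it has a genuine gap. The paper does \emph{not} argue recursively. Instead it invokes the vertex-isoperimetry of the complete binary tree~\cite{bharadwaj2009bounds}: along the walk, track the set $S_i\subseteq V(T_k)$ of bags at which $W_i$ contains the $J_k$-vertex; since $|S_i|$ moves from $0$ to $|V(T_k)|$ one step at a time, there is an index $i_0$ with $|S_{i_0}|$ equal to the isoperimetric optimum, whence $|N_{T_k}(S_{i_0})\setminus S_{i_0}|\geqslant c_1 k$. For each such boundary node $t'$, the $J$-vertex at the adjacent $t\in S_{i_0}$ shares its letter with the $I$-vertex at $t'$, so both vertices of $X_{t'}$ are excluded from $W_{i_0}$; hence $W_{i_0}$ misses at least $c_1 k$ bags.

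Your recursive scheme, by contrast, needs the loss at the root to \emph{coincide in time} with the inductive loss in a subtree, and this is exactly the difficulty you flag but do not resolve. The naive recursion only yields $g(k)\geqslant g(k-1)$: the low point of subtree~$1$ (at time $i_1$) can occur while the root bag is full and subtree~$2$ is at its maximum, giving merely $|W_{i_1}|\leqslant \alpha(G_k)-g(k-1)$. Your proposed strengthening (``walks that flip the root'') does not help, because the hypothesis on the root does not propagate to a usable hypothesis on the subtree root at the relevant moment. Making a recursive argument go through here, if possible at all, requires a substantially more delicate invariant than the one you state; the isoperimetric route sidesteps the issue entirely and is what the paper uses.
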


\begin{proof}
We start by proving point (\ref{sublemma:TwoMIS}). We prove the result by induction on $k$. The base case $k=0$ is straightforward: $G_0$ consists of two adjacent vertices, so it has $|V(G_0)| = 2$ and exactly one maximum independent set of size $\alpha(G_0) = 1$.

Now, suppose the statement holds for $G_{k-1}$. By construction, $G_k$ is obtained from $G_{k-1}$ by adding a new root node $r$ to the binary tree, along with its corresponding set $X_t = {u_t, v_t}$.  By the induction hypothesis, both copies of $G_{k-1}$ in $G_k$, that we call $G_1$ and $G_2$, in $G_k$ contains exactly two distinct MIS, each of size $2^{k}-1$. Let $r_1 \in V(T_k)$ (resp. $r_2$) be the child of $r$ in $T_k$ which correspond to the root bag of $G_1$ (resp. $G_2$). For $i\in \{1,2\}$, let $I_{i,u}$ be the MIS of $G_1$ which contain $u_{r_i}$, and let $I_{i,v}$ be the MIS of $G_2$ that contains $v_{r_i}$. In $G_k$, we construct two MIS as follows:
\begin{itemize}
\item The first MIS, denoted $I_k$, is exactly $\{u_r\}\cup I_{1,v} \cup I_{2,v}$.
\item The second MIS, denoted $J_k$, is exactly $\{v_r\}\cup I_{1,u} \cup I_{2,u}$.
\end{itemize}
Since every vertex belongs to exactly one of these two independent sets, they partition $V(G_k)$. Additionally, the size of each MIS in $G_k$ satisfies:
$$
\alpha(G_k) = 1+2\alpha(G_{k-1}) = 1+2(2^{k}-1) = 2^{k+1}-1
$$
Thus, the lemma holds for all $k \geqslant 0$.\newline

Then, we prove point (\ref{sublemma:SmallIS}), i.e. that any path from $I_k$ to $J_k$ in the reconfiguration graph $\mathcal{G}(G_k)$ pass through a small independent set, using the isoperimetric value of a complete tree.

\begin{claim}[\cite{bharadwaj2009bounds}]\label{lemma:BinaryTree}
There exists a universal constant $c_1 > 0$ such that  
$$
\max_{1\leqslant i\leqslant |V(T_k)|} \min_{\substack{S\subseteq V(T_k) \\ |S|=i}} |N_{T_k}(S)| \geqslant c_1 \cdot k.
$$
\end{claim}
Let $\text{iso}(k)$ be the index at which the maximum is attained.

For each $0\leqslant i \leqslant p$, define $J_i = W_i\cap J_k$. Observe that for all $1\leqslant i <p$, we have $|J_{i+1}|\leqslant |J_i|+1$, with $|J_0|=0$ and $|J_t|=n$. Let $i_0$ be the smallest index such that $|J_{i_0}| = \text{iso}(k)$, and define $S = \{t\in V(T_k) \mid X_t\cap W_{i_0} \neq \emptyset\}$. 

By Claim~\ref{lemma:BinaryTree}, we have $|N_{T_k}(S)|\geqslant c_1\cdot k$. Let $t' \in N_{T_k}(S)$, and note that $X_{t'}\cap W_{i_0} =\emptyset$. Indeed, if $X_{t'}\cap W_{i_0} \cap J_k \neq \emptyset$, then $t'\in S$, contradicting our assumption. Similarly, if $X_{t'}\cap W_{i_0} \cap I_k \neq \emptyset$, then $W_i$ would contain two adjacent vertices, which is impossible. Thus, $W_{i_0}$ intersects at most $|V(T_k)| - \lfloor c_1\cdot k \rfloor$ sets $X_t$, implying that $|W_{i_0}| \leqslant 2^{k+1}-1 - \lfloor c_1 \cdot k \rfloor$. 

Since for any $0\leqslant i <p$, the symmetric difference between $W_i$ and $W_{i+1}$ has size at most~$1$, there must exists some $i$ such that $|W_i| = \alpha(G_k) - \lfloor c_1 \cdot k \rfloor$.\newline

Finally, we prove point (\ref{sublemma:CountIS}). Note that this result was not necessary to prove that TAR-reconfiguration threshold is unbounded in \cite{de2018independent}, but in our case, obtaining such a bound is necessary to apply a conductance argument.

Let $D \subseteq V(T_k)$ be a subset of $d$ nodes of $T_k$. Since $T_k$ is a tree, removing all vertices in $D$ results in a forest. Let $R_D$ be the set of nodes that serve as the roots of the connected components in the forest $T_k - D$. More formally, a node $t \in V(T_k) \setminus D$ belongs to $R_D$ if and only if either $t$ is the child of a node $t' \in D$ or $t$ is the root of $T_k$.

We now prove that there exists a bijection between the independent sets of $G_k$ of size $\alpha(G_k) - d$ and the set $\mathcal{D} = \{(D,\{0,1\}^{R_D}\}_{D\subseteq V(T_k), |D|=d}$.

Given a subset $D \subseteq V(T_k)$ with $|D| = d$ and a function $c_D : R_D \to {0,1}$, we define an independent set $I = f(D, c_D)$ of $G_k$ as follows: 
\begin{itemize}
    \item $I$ intersects each bag $X_t$ for all $t \in V(T_k) \setminus D$.
    \item For each $t \in R_D$, if $c_D(t) = 0$, then $u_t \in I$; otherwise, $v_t \in I$.
\end{itemize}

Now, we prove that there exists a unique independent set $I$ of size $\alpha(G_k) - d$ that satisfies the two conditions above. We will use the following claim. 

\begin{claim}\label{claim:UniqueIS}
Let $T$ be a subtree of $T_k$ with root $r\in V(T)$. Then $G_k[\bigcup_{t\in V(T)} X_t]$ has exactly two independent sets of size $|V(T)|$, one containing $u_r$ and the other containing $v_r$.
\end{claim}

\begin{proof}
We prove the result by induction on $|V(T)|$.  If $T$ consists of a single node $r$, then $X_r = \{u_r, v_r\}$, and the only two independent sets of size $1$ are $\{u_r\}$ and $\{v_r\}$, which satisfies the claim.  Suppose the result holds for all subtrees of $T_k$ with at most $m-1$ nodes. Let $T$ be a subtree of $T_k$ with $m \geqslant 2$ nodes, rooted at $r\in V(T)$. Suppose that $r$ has two children, $r_1$ and $r_2$, both belonging to $V(T)$. Let $T_1$ and $T_2$ be the subtrees of $T$ rooted at $r_1$ and $r_2$, respectively.  

Since both $T_1$ and $T_2$ have at most $m-1$ nodes, we can apply the induction hypothesis to them. This means that:  
\begin{itemize}
\item $G_k[\bigcup_{t\in V(T_1)} X_t]$ has exactly two independent sets of size $|V(T_1)|$, one containing $u_{r_1}$, called $I_{1,u}$ and the other containing $v_{r_1}$, called $I_{1,v}$.  
\item $G_k[\bigcup_{t\in V(T_2)} X_t]$ has exactly two independent sets of size $|V(T_2)|$, one containing $u_{r_2}$, called $I_{2,u}$ and the other containing $v_{r_2}$, called $I_{2,v}$.  
\end{itemize}
To construct an independent set $I$ of $G_k[\bigcup_{t\in V(T)} X_t]$ of size $|V(T)|$, we must include exactly one vertex from each bag $X_t$ for every $t \in V(T)$. 

Thus, we have exactly two choices for constructing $I$:  
\begin{itemize}
\item If $u_r \in I$, then we must take $v_{r_1} \in I$ and $v_{r_2} \in I$ since $u_r$ is adjacent to both vertices $u_{r_1}$ and $u_{r_2}$. By induction, this forces us to take the unique independent sets $I_{1,v}$ and $I_{2,v}$ corresponding to $T_1$ and $T_2$ that contain $v_{r_1}$ and $v_{r_2}$, respectively. Hence,  
  $$
  I = \{u_r\} \cup I_{1,v} \cup I_{2,v}.
  $$
\item Similarly, if $v_r \in I$, then we must take $u_{r_1} \in I$ and $u_{r_2} \in I$, leading to the unique independent sets $I_{1,u}$ and $I_{2,u}$ that contain $u_{r_1}$ and $u_{r_2}$, respectively. Hence,  
  $$
  I = \{v_r\} \cup I_{1,u} \cup I_{2,u}.
  $$  
\end{itemize}
If $r$ has only one child $r_1$ belonging to $V(T)$, then a similar reasoning applies: the only two independent sets of size $|V(T)|$ must contain either $u_r$ or $v_r$, and they uniquely extend to $I_{1,v}$ or $I_{1,u}$, respectively, by the induction hypothesis.

Since these are the only two ways to construct an independent set of size $|V(T)|$, the claim follows.  
\end{proof}

Consider a tree $T$ in the forest $T_k - D$, and let $r \in R_D$ be its root. Consider the independent set $I_T=I\cap~\bigcup_{t\in V(T)} X_t$. Since $I$ intersects all bags $X_t$ for $t\in V(T)$, it follows that $I_T$ is an independent set of $G_k[\bigcup_{t\in V(T)} X_t]$ of size $|V(T)|$. By Claim~\ref{claim:UniqueIS}, the graph $G_k[\bigcup_{t\in V(T)} X_t]$ has exactly two independent sets of size $|V(T)|$: one containing $u_r$, denoted by $I_{T,u}$, and the other containing $v_r$, denoted by $I_{T,v}$.

If $c_D(r) = 0$, then $I \cap X_r = {u_r}$, which implies $I_T = I_{T,u}$. Otherwise, if $c_D(r) = 1$, then $I \cap X_r = {v_r}$, leading to $I_T = I_{T,v}$. Thus, for each tree $T$ in the forest $T_k - D$, there is a unique choice for the independent set $I$ restricted to the bags corresponding to $T$. Applying this argument to all trees in the forest ensures that $I$ is uniquely determined.

It follows that the function $f$ is indeed a bijection. It remains to count the number of elements in $\mathcal{D}$. There are exactly $\binom{|V(T_k)|}{d}$ ways to choose a subset $D \subseteq V(T_k)$ of size $d$. Given such a subset $D$, we observe that the set $R_D$ has at most $2d+1$ elements. Indeed, there are at most $2d$ nodes of $T_k$ whose parent belongs to $D$, and additionally, the root of $T_k$ is included in $R_D$ whenever it does not belong to $D$. This results in at most $2^{2d+1}$ possible choices for functions in $\{0,1\}^{R_D}$.

Thus, the total number of pairs $(D, c_D)$ is at most $\binom{|V(T_k)|}{d}2^{2d+1}$ in total. Finally, we can bound $\binom{|V(T_k)|}{d}$ by $|V(T_k)|^d$ and use the fact that $\alpha(G_k) = |V(T_k)|$ to conclude the proof.
\end{proof}

\begin{proof}[Proof of Theorem~\ref{thm:LowerBoundTreeAlpha}]
Consider the graph $H_{k,t}$ obtained by replacing each vertex $v\in V(G_k)$ with a set of vertices $Y_v$ inducing a disjoint union of $t$ cliques of size $2^{2k}$. For each edge $uv\in E(G_k)$, we add all possible edges between $Y_u$ and $Y_v$. For any subset $V'\subseteq V(H_{k,t})$, define $p(V') = \{u\in V(G_k) \mid Y_u\cap S\neq \emptyset\}$, and observe that if $V'$ is an independent set of $H_{k,t}$, $p(V')$ is an independent set of $G_k$. Let $I_{k,t}$ be a maximum independent set of $H_{k,t}$ such that $p(I_{k,t}) =I_k$. Notice that $|I_{k,t}|=t\alpha(G_k)$. We partition $\mathcal{I}(H_{k,t})$ as follows:
\begin{itemize}
\item $\Omega_S$: independent sets $S$ such that $p(S)$ is an independent set of $G_k$ of size exactly $s =  \alpha(G_k) - \lfloor c_1 k \rfloor$, where $c_1$ is the constant of Lemma~\ref{lemma:PropertiesGk}.(\ref{sublemma:SmallIS}).
\item $\Omega_I$: independent sets $I$ for which there exists a path of independent sets in $\mathcal{G}(H_{k,t})$ from $I_{k,t}$ to $I$ avoiding $\Omega_S$.
\item $\Omega_J$: the remaining independent sets.
\end{itemize}
By definition, $\Omega_S$ separates $\Omega_I$ and $\Omega_J$ in the reconfiguration graph $\mathcal{G}(H_{k,t})$. Indeed, suppose there exists a path from an independent set $I \in \Omega_I$ to an independent set $J \in \Omega_J$ that avoids $\Omega_S$. Then, by the definition of $\Omega_I$, there must also be a path from $I_{k,t}$ to $J$ that avoids $\Omega_S$. This implies that $J \in \Omega_I$, contradicting the assumption that $J \in \Omega_J$.

We denote by $\pi$ the distribution $\pi_{H_{k,t},\lambda}$. To bound the conductance, we first establish an upper bound on $\pi(\Omega_S)$ and a lower bound on $\min(\pi(\Omega_I), \pi(\Omega_J))$.

\begin{claim}\label{claim:LowerBoundOmegaS}
$\pi(\Omega_S) \leqslant 2^{\lfloor c_1k \rfloor (k+3)+1}((2^{2k}\lambda +1)^{t}-1)^s$.
\end{claim}

\begin{proof}
By Lemma~\ref{lemma:PropertiesGk}.(\ref{sublemma:CountIS}), there are at most $2^{2\lfloor c_1 k \rfloor+1} \alpha(G_k)^{\lfloor c_1 k \rfloor}$ independent sets of size $\alpha(G_k) - \lfloor c_1 k \rfloor$.  Let $J \in \mathcal{I}(G_k)$ be an independent set of size $s$. Observe that any independent set $I\in \mathcal{I}(H_{k,t})$ such that $p(I)=J$ is uniquely defined by the $|J|$ pairwise disjoint independent sets $\{I_v\}_{v\in J}$ such that $I_v\in \mathcal{I}(H_{k,t}[Y_v])\setminus \{\emptyset\}$. Thus, it follows that  
$$
\sum_{\substack{I \in \mathcal{I}(H_{k,t}) \\ p(I) = J }} \lambda^{|I|} = \prod_{v\in J} \left( \sum_{I\in \mathcal{I}(H_{k,t}[Y_v]) \setminus \{\emptyset\}} \lambda^{|I|} \right).
$$  
 For any $v \in V(G_k)$, the graph $H_{k,t}[Y_v]$ is a disjoint union of $t$ cliques of size $2^{2k}$. We obtain  
$$
\sum_{I\in \mathcal{I}(H_{k,t}[Y_v]) \setminus \{\emptyset\}} \lambda^{|I|} = (2^{2k} \lambda + 1)^t - 1
$$  
which gives the bound  
$$
\prod_{v\in J} \left( \sum_{I\in \mathcal{I}(H_{k,t}[Y_v])} \lambda^{|I|} \right) = ((2^{2k}\lambda+1)^t -1)^{s} 
$$  
Thus, we obtain  
\begin{align*}
    \pi(\Omega_S) &= \sum_{\substack{I \in \mathcal{I}(H_{k,t}) \\ |p(I)| = s} } \lambda^{|I|} \\
    &= \sum_{\substack{J \in \mathcal{I}(G_k) \\ |J|=s} } \sum_{\substack{I \in \mathcal{I}(H_{k,t}) \\ p(I) = J  }} \lambda^{|I|} \\[2em]
    &\leqslant 2^{2\lfloor c_1 k\rfloor +1} \alpha(G_k)^{\lfloor c_1k\rfloor} ((2^{2k}\lambda +1)^{t}-1)^s.
\end{align*}  
The final result follows by bounding $\alpha(G_k)$ by $2^{k+1}$.
\end{proof}

\begin{claim}
$\min(\pi(\Omega_I),\pi(\Omega_J)) \geqslant ((2^{2k}\lambda+1)^t -1)^{\alpha(G_k)}$.
\end{claim}

\begin{proof}
First, observe that $\Omega_I$ contains all independent sets $I\in \mathcal{I}(H_{k,t})$ such that $p(I) = I_k$. Using a similar argument as in the proof of Claim~\ref{claim:LowerBoundOmegaS}, we obtain  
$$
\pi(\Omega_I) \geqslant \sum_{\substack{I \in \mathcal{I}(H_{k,t}) \\ p(I) = I_k}} \lambda^{|I|} = ((2^{2k}\lambda+1)^t-1)^{\alpha(G_k)}.
$$

Next, we show that a similar bound holds for $\pi(\Omega_J)$. Specifically, we prove that $\Omega_J$ contains all independent sets $J\in \mathcal{I}(H_{k,t})$ such that $p(J) = J_k$.  

Assume for contradiction that there exists $J\in \Omega_I$ such that $p(J) = J_k$. Then, there must exist a path $W_0, W_1, \dots, W_p$ in $\mathcal{G}(H_{k,t})$ such that $W_i\in \Omega_I$ for all $0\leqslant i \leqslant p$ and $W_0=I_{k,t}$. We show that $p(W_0), p(W_1), \dots, p(W_p)$ forms a path from $I_k$ to $J_k$ in $\mathcal{G}(G_k)$.  Indeed, we have $p(W_0)= I_k$ and $p(W_p)=J_k$. Consider any step $0\leqslant i < p$ where $W_{i+1} = W_i\cup \{x\}$ for some vertex $x\in V(H_{k,t})$. Let $v\in V(G_k)$ be the unique vertex such that $x\in Y_v$. There are two cases:  
\begin{itemize}
\item If there exists a vertex $y\in W_i\cap Y_v$, then $p(W_{i+1})=p(W_i)$.  
\item If $W_i\cap Y_v = \emptyset$, then $p(W_{i+1}) = p(W_i)\cup \{v\}$.  
\end{itemize}
A similar argument holds when $W_{i+1} = W_i \setminus \{x\}$ for some vertex $x\in V(H_{k,t})$, ensuring that $|p(W_{i+1})\Delta p(W_i)|\leqslant 1$.  Since no index $0\leqslant i\leqslant p$ satisfies $|p(W_i)|=s$ (otherwise $W_i$ would belong to $\Omega_S$), this path from $I_k$ to $J_k$ in $\mathcal{G}(G_k)$ avoids all independent sets of size~$s$, contradicting Lemma~\ref{lemma:PropertiesGk}.(\ref{sublemma:SmallIS}).  Thus, we conclude that $\Omega_J$ contains all independent sets $J\in \mathcal{I}(H_{k,t})$ such that $p(J) = J_k$, implying  
$$
\pi(\Omega_J) \geqslant ((2^{2k}\lambda+1)^t -1)^{\alpha(G_k)}
$$
This completes the proof of the claim.
\end{proof}
Using Theorem~\ref{thm:LowerBoundConductance}, we obtain:  
\begin{align*}
     \tau_{H_{k,t},\lambda} &\geqslant \frac{\ln(2)}{4}\left( \frac{((2^{2k}\lambda+1)^t -1)^{\alpha(G_k)}}{2^{\lfloor c_1k \rfloor (k+3)+1}((2^{2k}\lambda +1)^{t}-1)^s}-2\right) \\[1em]
    &= \frac{\ln(2)}{4}\left( \frac{((2^{2k}\lambda+1)^t -1)^{\lfloor c_1k\rfloor}}{2^{\lfloor c_1k \rfloor(k+3)+1}}-2\right)
\end{align*}
Let $n = |V(H_{k,t})|$. We have  $n = t 2^{2k} (2^{k+2}-2) \leqslant t2^{3k+2}$, which implies 
$\log_2(n) \leqslant \log_2(t)+3k+2$. Rearranging, we obtain $k \geqslant \frac{1}{3}(\log_2(n)-\log_2(t) -2)$.

It is straightforward to observe that there exists a threshold $k_{t}$ such that for any $k\geqslant k_{t}$, the following inequalities hold:  
$$
\left\lbrace 
\begin{array}{l}
     k \geqslant \frac{\ln(2)}{4}\log_2(n),  \\[1em]
     ((2^{2k}\lambda+1)^t -1)^{\lfloor c_1k\rfloor} \geqslant 2^{\frac{4}{3}\lfloor c_1k\rfloor k t+3c_1k+3}\lambda^{\frac{2}{3}c_1kt}
\end{array}
\right.
$$
Thus, for any $k\geqslant k_{t}$, we obtain  
$$
\tau_{H_{k,t}, \lambda}\geqslant \lambda^{\frac{2}{3}c_1kt}2^{\lfloor c_1k\rfloor k\left( \frac{4}{3}t-1\right)} 
$$
and finally since $k\geqslant \frac{\log_2(n)}{4}$ and $\frac{4}{3}t-1 \geqslant \frac{t}{3}$ whenever $t\geqslant 1$, there exists a small enough constant $c>0$ such that 
$$
\tau_{H_{k,t}, \lambda} \geqslant (\lambda n)^{ct\ln n}
$$
\end{proof}

\section{Conclusion}

We showed that the mixing time of the Glauber dynamics for the hard-core model is subexponential on geometric intersection graphs. Additionally, we established tight parameterized mixing time bounds for this Markov chain, enabling a more fine-grained understanding of mixing times beyond the classical polynomial-versus-exponential dichotomy. 

In particular, for geometric intersection graphs, our results demonstrate that the Glauber dynamics is as efficient as exact algorithms for computing the partition function and activation probabilities of each vertex, while being significantly more practical to implement. Moreover, our approach does not require knowledge of the geometric representation of the graph. 

 We recall two major open questions and propose potential future directions:

\begin{itemize}
    \item For any $d\geqslant 3$, does the Glauber dynamics mix in time $2^{\Omega(n^{1-\frac{1}{d}})}$ on the $d$-dimensional grid with $n$ vertices? Such a result would require generalizing the work of Randall~\cite{randall2006slow} to higher dimensions.
    \item For any $\lambda > 1$, can the mixing time $\tau_{G,\lambda}$ on $n$-vertex graphs be bounded by a function of the form $n^{o(\tw(G))}$, or even by $f(\tw(G)) \cdot n^{O(1)}$, where $f$ is a computable function?

    \item On planar graphs, is it possible to combine the subexponential and parameterized approaches to achieve such a mixing time? Indeed, all the lower bounds we presented involve large complete bipartite induced subgraphs, which do not exist in geometric intersection graphs.
    \item Does there exist a chordal graph $G$ such that $\tau_{G,\lambda}=\Omega(n^{\ln n})$ for a large enough constant~$\lambda$?
\end{itemize}

\subsubsection*{Acknowledgements} I would like to express my sincere gratitude to Rémi Watrigant for his guidance throughout this work. I also thank Julien Duron for insightful discussions on the subject, and Mark Jerrum for pointing out relevant references. I would also like to thank Thomas Begin, Anthony Busson, and Loïc Chassin de Kergommeaux for our discussions regarding the motivation behind this study. Finally, I greatly thank anonymous reviewers who pointed out the links with the work of Dyer, Greenhill and Müller, which improved significantly the relevance of this work.

\bibliographystyle{plain}
\bibliography{biblio}

\appendix

\section{Preliminaries on Markov chains}\label{appendix:Markov}

This section provides a brief overview of Markov chains and the canonical paths technique for analyzing mixing times. We use the notation of \cite{sinclair1992improved}, and we also refer to \cite{guruswami2016rapidly,levin2017markov,montenegro2006mathematical} for more details.

\paragraph*{Markov chains}

A finite Markov chain $\mathcal{M}$ is defined by a state space $\Omega$ and a transition matrix $P$ of dimension $|\Omega| \times |\Omega|$, where $P(x, y)$ represents the probability of transitioning from $x \in \Omega$ to $y \in \Omega$. A step of the Markov chain corresponds to a single such transition. A distribution $\pi$ on $\Omega$ is stationary if $\pi P = \pi$. In an \emph{ergodic} Markov chain, the stationary distribution $\pi$ is unique and represents the long-term behavior of the chain as $t \to \infty$. Given an initial distribution $\mu$, the distribution of the chain after $t$ steps is given by $\mu P^t$, and for an ergodic chain, we have $\mu P^t \to \pi$ as $t \to \infty$.

The \emph{mixing time} of a Markov chain is the number of steps required for the chain to become $\varepsilon$-close to its stationary distribution. Given two distributions $\mu$ and $\nu$ on $\Omega$, the \emph{total variation distance} $\Delta(\mu,\nu)$ is defined as:
$$
\Delta(\mu,\nu)= \frac{1}{2}\sum_{x\in \Omega} |\mu(x) - \nu(x)|
$$

Starting from a state $x \in \Omega$, let $\mu_x$ denote the probability distribution where $\mu_x(x) = 1$ and $\mu_x(y) = 0$ for all $y \in \Omega \setminus {x}$. For $\varepsilon \in (0,1)$, the mixing time $\tau_x(\varepsilon)$ is the smallest $t$ such that
$$
\Delta(\mu_x P^t, \pi) \leqslant \varepsilon
$$
The mixing time is closely linked to the spectral properties of the transition matrix. By standard linear algebra, whenever $\mathcal{M}$ is ergodic, $P$ has $N:=|\Omega|$ real eigenvalues satisfying $1=\lambda_0 \geqslant \lambda_1 \geqslant ... \geqslant \lambda_{N-1} \geqslant -1$. Let $\lambda_{\max}(\mathcal{M}) = \max(\lambda_1, |\lambda_{N-1}|)$.

\begin{proposition}[\cite{sinclair1992improved}]\label{prop:BoundMixingTime}
Let $\varepsilon \in (0,1)$. The mixing time satisfies the following bounds:
\begin{enumerate}
    \item For any state $x \in \Omega$, $\tau_x(\varepsilon) \leqslant (1-\lambda_{\max})^{-1}\left( \ln\left( \frac{1}{\pi(x)}\right) +\ln\left( \frac{1}{\varepsilon}\right)\right)$
    \item $\max_{x\in \Omega} \tau_{x}(\varepsilon) \geqslant \frac{1}{2} \lambda_{\max} (1-\lambda_{\max})^{-1} \ln\left(\frac{1}{2\varepsilon}\right)$
\end{enumerate}
\end{proposition}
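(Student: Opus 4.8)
The plan is the classical spectral argument for reversible chains; the hypotheses of the proposition place us in that setting, since the Glauber dynamics satisfies detailed balance and hence $P$ is self-adjoint with respect to the inner product $\langle f,g\rangle_\pi = \sum_{x\in\Omega}\pi(x)f(x)g(x)$. First I would record what this buys us: there is an orthonormal basis $\phi_0,\dots,\phi_{N-1}$ of real eigenfunctions with eigenvalues $1=\lambda_0\ge\lambda_1\ge\dots\ge\lambda_{N-1}\ge-1$, where $\phi_0\equiv 1$, and the $t$-step kernel decomposes as $\tfrac{P^t(x,y)}{\pi(y)}=\sum_{i=0}^{N-1}\lambda_i^t\phi_i(x)\phi_i(y)$ (verified by checking that both sides have the same $\langle\cdot,\cdot\rangle_\pi$-inner products with every $\phi_j$ as functions of $y$). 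Setting $t=0$ and $y=x$ yields the identity $\sum_{i=0}^{N-1}\phi_i(x)^2=\tfrac{1}{\pi(x)}$, which is the only slightly non-obvious ingredient.

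For the upper bound, write $2\Delta(\mu_xP^t,\pi)=\sum_y\pi(y)\bigl|\tfrac{P^t(x,y)}{\pi(y)}-1\bigr|$ and note that the $i=0$ term cancels the $1$, so $\tfrac{P^t(x,y)}{\pi(y)}-1=\sum_{i\ge 1}\lambda_i^t\phi_i(x)\phi_i(y)$. Cauchy–Schwarz against the probability weights $\pi(y)$, followed by orthonormality of the $\phi_i$, gives $\bigl(2\Delta(\mu_xP^t,\pi)\bigr)^2\le\sum_{i\ge 1}\lambda_i^{2t}\phi_i(x)^2\le\lambda_{\max}^{2t}\sum_{i\ge 0}\phi_i(x)^2=\lambda_{\max}^{2t}/\pi(x)$, i.e. $\Delta(\mu_xP^t,\pi)\le\tfrac12\lambda_{\max}^t\pi(x)^{-1/2}$. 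It therefore suffices that $t\ln(1/\lambda_{\max})\ge\ln(1/\varepsilon)+\ln(1/\pi(x))$, and since $\ln(1/\lambda_{\max})\ge 1-\lambda_{\max}$ for $\lambda_{\max}\in(0,1)$ (ergodicity forces $\lambda_{\max}<1$), the claimed bound $\tau_x(\varepsilon)\le(1-\lambda_{\max})^{-1}\bigl(\ln(1/\pi(x))+\ln(1/\varepsilon)\bigr)$ follows.

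For the lower bound, pick an eigenvalue $\lambda$ with $|\lambda|=\lambda_{\max}$ and a unit eigenfunction $\phi$; since $\lambda\ne 1$ we have $\mathbb{E}_\pi[\phi]=\langle\phi,\phi_0\rangle_\pi=0$. Let $x^\star$ maximize $|\phi|$, so $|\phi(x^\star)|=\|\phi\|_\infty>0$. Then $\mathbb{E}_{\mu_{x^\star}P^t}[\phi]=(P^t\phi)(x^\star)=\lambda^t\phi(x^\star)$, and combining with $|\mathbb{E}_\mu[\phi]-\mathbb{E}_\nu[\phi]|\le 2\|\phi\|_\infty\Delta(\mu,\nu)$ gives $\Delta(\mu_{x^\star}P^t,\pi)\ge\tfrac12\lambda_{\max}^t$ for every $t\ge 0$. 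Evaluating at $t=\tau_{x^\star}(\varepsilon)$, where by definition $\Delta\le\varepsilon$, yields $\tau_{x^\star}(\varepsilon)\ln(1/\lambda_{\max})\ge\ln(1/(2\varepsilon))$; using $\ln(1/\lambda_{\max})\le(1-\lambda_{\max})/\lambda_{\max}$ on $(0,1]$ then gives $\max_x\tau_x(\varepsilon)\ge\tau_{x^\star}(\varepsilon)\ge\lambda_{\max}(1-\lambda_{\max})^{-1}\ln(1/(2\varepsilon))$, which is in fact stronger than the stated bound (the factor $\tfrac12$ is slack; the estimate is only meaningful when $\varepsilon<\tfrac12$, and is trivial otherwise).

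I expect no genuine obstacle: the single point requiring care is the justification of the eigendecomposition, which rests on reversibility — implicitly assumed in the proposition's claim that $P$ has $N$ real eigenvalues, since a non-reversible ergodic chain need not. The rest is the routine translation between the $L^2$/spectral estimates and the total-variation and time-complexity statements via the elementary inequalities $1-x\le\ln(1/x)\le(1-x)/x$ valid on $(0,1]$.
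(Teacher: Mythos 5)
The paper itself does not prove this proposition -- it is quoted directly from Sinclair~\cite{sinclair1992improved} -- so there is no internal proof to compare against. Your argument is correct and is exactly the standard spectral proof: reversibility yields a real orthonormal eigenbasis in $L^2(\pi)$; the upper bound comes from the kernel decomposition $P^t(x,y)/\pi(y)=\sum_i\lambda_i^t\phi_i(x)\phi_i(y)$ followed by Cauchy--Schwarz, the identity $\sum_i\phi_i(x)^2=1/\pi(x)$, and $\ln(1/u)\geqslant 1-u$; and the lower bound comes from testing total variation against an eigenfunction $\phi$ with $|\lambda|=\lambda_{\max}$ together with $\ln(1/u)\leqslant (1-u)/u$. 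Your lower bound is in fact sharper by a factor of $2$ than the one stated, which you correctly observe only makes the stated version a looser consequence, and your caveat that reversibility is the implicit hypothesis behind the real spectrum and $L^2(\pi)$-orthonormal eigenbasis is well placed and is satisfied here since Glauber dynamics is a heat-bath chain.
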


In the following, we assume that $\lambda_{\max} = \lambda_1$. While this is not always true, it holds in particular for Glauber dynamics and, more generally, for any heat-bath Markov chain~\cite{dyer2014structure}.

\paragraph*{Conductance}

In combinatorics, it has long been known that the second-largest eigenvalue of the adjacency matrix of a graph is closely related to its \emph{expansion}, a structural property of the graph. The conductance of $\mathcal{M}$ is defined as the edge expansion of the underlying transition graph:
\begin{equation}\label{eq:DefConductanceMarkov}
\Phi(\mathcal{M}) = \min_{\substack{S\subseteq \Omega \ 0\leqslant \pi(S)\leqslant 1/2}} \frac{Q(S,\overline{S})}{\pi(S)},
\end{equation}
where $\pi(S) = \sum_{x\in \Omega} \pi(x)$ and $Q(S,\overline{S}) = \sum_{x\in S, y\notin S} \pi(x)P(x,y)$. Note that $Q(S,\overline{S}) = Q(\overline{S},S)$ since for any $(x,y) \in \Omega^2$, we have $\pi(x)P(x,y) = \pi(y)P(y,x)$.

\begin{theorem}[\cite{sinclair1992improved}]\label{thm:ConductanceSpectrum}
Given an ergodic Markov chain $\mathcal{M}$ with second eigenvalue $\lambda_1$ and conductance $\Phi$, the following inequality holds :
$$
1-2\Phi \leqslant \lambda_1 \leqslant 1-\frac{\Phi^2}{2}
$$
\end{theorem}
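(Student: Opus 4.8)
The two inequalities are proved by completely separate arguments, both built on the variational description of the spectral gap. Since the chain is reversible (as are all heat-bath chains, in particular Glauber dynamics), the transition matrix $P$ is self-adjoint for the inner product $\langle f,g\rangle_\pi=\sum_x \pi(x)f(x)g(x)$, so its eigenvalues are real and, writing $\mathcal{E}(f,f)=\tfrac12\sum_{x,y}\pi(x)P(x,y)\bigl(f(x)-f(y)\bigr)^2$ for the Dirichlet form, the gap has the Rayleigh-quotient form $1-\lambda_1=\min\bigl\{\mathcal{E}(f,f)/\langle f,f\rangle_\pi : \sum_x\pi(x)f(x)=0,\ f\neq 0\bigr\}$. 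For the lower bound $\lambda_1\geq 1-2\Phi$ I would take a set $S\subseteq\Omega$ with $\pi(S)\leq 1/2$ that (nearly) attains the conductance and plug in the centred indicator $f=\mathbf{1}_S-\pi(S)$, which has $\pi$-mean zero. A direct computation gives $\mathcal{E}(f,f)=Q(S,\overline S)$ and $\langle f,f\rangle_\pi=\pi(S)\bigl(1-\pi(S)\bigr)\geq \pi(S)/2$, hence $1-\lambda_1\leq \mathcal{E}(f,f)/\langle f,f\rangle_\pi\leq 2\,Q(S,\overline S)/\pi(S)=2\Phi$. This half is routine.

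The upper bound $\lambda_1\leq 1-\Phi^2/2$ is the Cheeger direction and is the crux. Let $g$ be an eigenfunction for $\lambda_1$; since it has $\pi$-mean zero it changes sign, so after possibly replacing $g$ by $-g$ we may assume $\pi(\{g>0\})\leq 1/2$. Put $h=\max(g,0)\geq 0$, which is supported on a set of $\pi$-measure at most $1/2$. The key preliminary step is the pointwise estimate $\bigl((I-P)h\bigr)(x)\leq (1-\lambda_1)\,h(x)$ for every $x$: on $\{g>0\}$ this follows from $(I-P)g=(1-\lambda_1)g$ together with $g\leq h$ coordinatewise, and on its complement both sides reduce to a nonpositive quantity. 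Multiplying by $\pi(x)h(x)\geq 0$ and summing over $x$ yields $\mathcal{E}(h,h)\leq (1-\lambda_1)\langle h,h\rangle_\pi$, so it remains only to prove $\mathcal{E}(h,h)\geq \tfrac{\Phi^2}{2}\langle h,h\rangle_\pi$ for an arbitrary nonnegative $h$ supported on a set of measure $\leq 1/2$.

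For this I would use the standard co-area / Cauchy--Schwarz device. Order the states so that $h$ is non-increasing, let $S_k$ be the set of the first $k$ states (each a subset of $\mathrm{supp}(h)$, hence of $\pi$-measure $\leq 1/2$), and consider
$$
B_h:=\sum_{\substack{x,y\\ h(x)>h(y)}}\pi(x)P(x,y)\bigl(h(x)^2-h(y)^2\bigr).
$$
Decomposing $h(x)^2-h(y)^2$ into level sets rewrites $B_h=\sum_k Q(S_k,\overline{S_k})\bigl(h(x_k)^2-h(x_{k+1})^2\bigr)$; bounding each $Q(S_k,\overline{S_k})\geq \Phi\,\pi(S_k)$ and summing by parts gives $B_h\geq \Phi\langle h,h\rangle_\pi$. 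For the matching upper bound, write $h(x)^2-h(y)^2=(h(x)-h(y))(h(x)+h(y))$ and apply Cauchy--Schwarz to the sum over \emph{unordered} pairs: the ``difference'' factor equals $\mathcal{E}(h,h)$, and the ``sum'' factor is at most $2\langle h,h\rangle_\pi$ because $(h(x)+h(y))^2\leq 2(h(x)^2+h(y)^2)$ and $\sum_{\{x,y\}}\pi(x)P(x,y)(h(x)^2+h(y)^2)\leq \langle h,h\rangle_\pi$ by $\sum_yP(x,y)=1$ and stationarity. Combining, $\Phi^2\langle h,h\rangle_\pi^2\leq B_h^2\leq 2\,\mathcal{E}(h,h)\langle h,h\rangle_\pi$, i.e. $\mathcal{E}(h,h)\geq\tfrac{\Phi^2}{2}\langle h,h\rangle_\pi$, whence $1-\lambda_1\geq \Phi^2/2$.

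The main obstacle is the Cheeger direction, in two respects: (i) passing from the sign-indefinite eigenfunction $g$ to its nonnegative truncation $h$ without losing the spectral information, which is exactly what the pointwise sub-eigenfunction inequality buys; and (ii) keeping the constant sharp in the Cauchy--Schwarz step, since counting ordered pairs instead of unordered ones (i.e. double-counting each edge) yields only the weaker $\Phi^2/4$. The lower bound $\lambda_1\geq 1-2\Phi$ and the reduction via the Rayleigh quotient are comparatively immediate.
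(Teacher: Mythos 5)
The paper does not prove this theorem itself; it imports it verbatim from Sinclair's 1992 paper (the citation \cite{sinclair1992improved}), so there is no in-paper proof to compare against. Your argument is a correct reconstruction of the standard discrete Cheeger inequality for reversible chains, essentially as it appears in the cited source and in Jerrum--Sinclair / Lawler--Sokal: the easy direction plugs the centred indicator $f=\mathbf{1}_S-\pi(S)$ into the Rayleigh quotient (using $\mathcal{E}(f,f)=Q(S,\overline S)$ and $\langle f,f\rangle_\pi=\pi(S)(1-\pi(S))\geq\pi(S)/2$), and the hard direction truncates the eigenfunction, establishes the pointwise sub-eigenfunction inequality for $h=\max(g,0)$, and then combines the co-area expansion of $B_h$ with Cauchy--Schwarz over unordered edges to get $\mathcal{E}(h,h)\geq\tfrac{\Phi^2}{2}\langle h,h\rangle_\pi$.

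Two small points. First, your claim that every $S_k$ is a subset of $\mathrm{supp}(h)$ is only true for the indices $k$ with $h(x_k)>0$; for larger $k$ the set $S_k$ spills outside $\mathrm{supp}(h)$, but then the co-area coefficient $h(x_k)^2-h(x_{k+1})^2$ vanishes, so the conductance bound $Q(S_k,\overline{S_k})\geq\Phi\,\pi(S_k)$ is only invoked where it is valid -- worth saying explicitly. Second, you rightly flag reversibility up front: the paper's statement says only ``ergodic,'' but its surrounding discussion (``$P$ has $N$ real eigenvalues'') already implicitly assumes reversibility, which is what you need both for the spectrum to be real and for the identity $\mathcal{E}(h,h)=\langle h,(I-P)h\rangle_\pi$. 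Your proof is sound.
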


The conductance is the main tool to prove lower bound on the mixing time of a Markov chain. One of the simplest methods to bound the conductance, introduced by Randall~\cite{randall2006slow}, involves partitioning the state space $\Omega$ into three subsets $(\Omega_S, \Omega_1, \Omega_2)$ such that $\Omega_S$ \emph{separates} $\Omega_1$ and $\Omega_2$. This means that for any pair of states $(x, y) \in \Omega_1 \times \Omega_2$ and any path $\gamma_{x,y} = (x = z_0, z_1, \ldots, z_\ell = y)$ satisfying $P(z_i, z_{i+1}) > 0$ for all $i$, there exists at least one intermediate state $z_i$ such that $z_i \in \Omega_S$. For the sake of completeness, we now formally state and prove the following theorem.

\begin{theorem}
Let $\mathcal{M}$ be an ergodic Markov chain such that $\lambda_{\max}(\mathcal{M})=\lambda_1(\mathcal{M})$, and let $(\Omega_S, \Omega_1,\Omega_2)$ be a partition of the state space $\Omega$ of $\mathcal{M}$ such that $\Omega_S$ separates $\Omega_1$ and $\Omega_2$. Thus, for any $\varepsilon\in (0,1)$, 
$$
\max_{x\in \Omega}\tau_x(\varepsilon) \geqslant \frac{1}{4}\left(\frac{\pi(\Omega_1)}{\pi(\Omega_S)}-2\right)\ln\left(\frac{1}{2\varepsilon}\right)
$$
\end{theorem}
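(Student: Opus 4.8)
The plan is to read the bound off the two general facts already recorded in this appendix: Theorem~\ref{thm:ConductanceSpectrum}, which bounds $\lambda_1$ in terms of the conductance, and the second item of Proposition~\ref{prop:BoundMixingTime}, which lower-bounds the mixing time by (a multiple of) the relaxation time $(1-\lambda_{\max})^{-1}$. The only hypothesis-specific input needed is an upper bound on the ergodic flow leaving $\Omega_1$, which I would establish first. Concretely, the claim is that $Q(\Omega_1,\overline{\Omega_1})\leqslant\pi(\Omega_S)$. The separation property, applied to the one-step walk $(x,y)$, forces $P(x,y)=0$ whenever $x\in\Omega_1$ and $y\in\Omega_2$ (such a walk would be a positive-probability path from $\Omega_1$ to $\Omega_2$ with no intermediate state in $\Omega_S$), so every positive-probability transition out of $\Omega_1$ lands in $\Omega_S$. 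Hence
$$
Q(\Omega_1,\overline{\Omega_1})=\sum_{x\in\Omega_1}\sum_{y\in\Omega_S}\pi(x)P(x,y)\leqslant\sum_{y\in\Omega_S}\sum_{x\in\Omega}\pi(x)P(x,y)=\sum_{y\in\Omega_S}\pi(y)=\pi(\Omega_S),
$$
where the inequality merely enlarges the inner sum to all of $\Omega$ and the following equality is stationarity $\pi P=\pi$.

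Next I would use $\Omega_1$ itself as a test set in the definition of the conductance $\Phi(\mathcal{M})$ — which is legitimate once $\pi(\Omega_1)\leqslant\tfrac12$, a normalization I may assume after relabelling (the complementary case is handled analogously with the set $\Omega_S\cup\Omega_2$, whose mass is $1-\pi(\Omega_1)\leqslant\tfrac12$). The flow bound then gives $\Phi(\mathcal{M})\leqslant\pi(\Omega_S)/\pi(\Omega_1)$. By Theorem~\ref{thm:ConductanceSpectrum} and the hypothesis $\lambda_{\max}(\mathcal{M})=\lambda_1(\mathcal{M})$ we get $\lambda_{\max}\geqslant 1-2\Phi$, hence both $(1-\lambda_{\max})^{-1}\geqslant 1/(2\Phi)$ and $\lambda_{\max}\geqslant 1-2\Phi$. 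Feeding these into the second item of Proposition~\ref{prop:BoundMixingTime},
$$
\max_{x\in\Omega}\tau_x(\varepsilon)\ \geqslant\ \tfrac12\,\lambda_{\max}(1-\lambda_{\max})^{-1}\ln\!\left(\tfrac{1}{2\varepsilon}\right)\ \geqslant\ \frac{\lambda_{\max}}{4\,\Phi}\,\ln\!\left(\tfrac{1}{2\varepsilon}\right),
$$
and since $\dfrac{\lambda_{\max}}{\Phi}\geqslant\dfrac{1-2\Phi}{\Phi}=\dfrac{1}{\Phi}-2\geqslant\dfrac{\pi(\Omega_1)}{\pi(\Omega_S)}-2$, the right-hand side is at least $\tfrac14\big(\tfrac{\pi(\Omega_1)}{\pi(\Omega_S)}-2\big)\ln(\tfrac{1}{2\varepsilon})$, which is the asserted bound. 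All the multiplications above are valid in the only non-vacuous regime, namely $\varepsilon<\tfrac12$ and $\pi(\Omega_1)>2\pi(\Omega_S)$, where every factor involved is positive; outside it the claimed right-hand side is $\leqslant 0$ and the inequality is trivial since $\tau_x(\varepsilon)\geqslant 0$.

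I do not expect a genuine obstacle here: the argument is essentially bookkeeping on top of the quoted results. The one content-bearing step is the flow estimate $Q(\Omega_1,\overline{\Omega_1})\leqslant\pi(\Omega_S)$, whose proof rests entirely on translating ``$\Omega_S$ separates $\Omega_1$ and $\Omega_2$'' into ``no direct transitions between $\Omega_1$ and $\Omega_2$''. The one place demanding care is making the constants line up: the ``$-2$'' in the statement is precisely the defect incurred when $\lambda_{\max}$ is replaced by its lower bound $1-2\Phi$ (that is, $\tfrac{1-2\Phi}{\Phi}=\tfrac1\Phi-2$), so the factor $\tfrac12$ coming from Proposition~\ref{prop:BoundMixingTime} together with the $\tfrac1{2\Phi}$ coming from the gap bound must be combined exactly to land on the constant $\tfrac14$. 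A secondary, purely technical point is the normalization $\pi(\Omega_1)\leqslant\tfrac12$ required to legally use $\Omega_1$ as a test set in the conductance.
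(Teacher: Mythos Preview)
Your proposal is correct and follows essentially the same route as the paper's proof: bound the outflow from $\Omega_1$ by $\pi(\Omega_S)$, plug this into the conductance to get $\Phi\leqslant\pi(\Omega_S)/\pi(\Omega_1)$, then chain Theorem~\ref{thm:ConductanceSpectrum} with the second item of Proposition~\ref{prop:BoundMixingTime}. The only (minor) difference is in how the flow bound is obtained: the paper uses detailed balance to write $Q(\Omega_1,\Omega_S)=Q(\Omega_S,\Omega_1)$ and then bounds the latter by $\sum_{x\in\Omega_S}\pi(x)\sum_{y\in\Omega_1}P(x,y)\leqslant\pi(\Omega_S)$, whereas you bound $Q(\Omega_1,\Omega_S)$ directly via stationarity $\pi P=\pi$; both are one-liners and lead to the same inequality.
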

\begin{proof}
Without loss of generality, we assume that $\pi(\Omega_1)\leqslant \pi(\Omega_2)$, and thus $\pi(\Omega_1)\leqslant 1/2$. By Equation~\ref{eq:DefConductanceMarkov} with $S = \Omega_1$ or $S=\Omega_2$ such that $\pi(S)$ is minimum,  we obtain that 
    $$
    \Phi \leqslant \frac{Q(\Omega_1, \Omega_S\cup \Omega_2)}{\pi(\Omega_1)} = \frac{Q(\Omega_1,\Omega_S)}{\pi(\Omega_1)}
    $$
    where the last equality holds since for any $(x,y)\in \Omega_1\times \Omega_2$, $P(x,y)=0$. Finally, we have 
    $$
    Q(\Omega_S,\Omega_1) =\sum_{(x,y)\in \Omega_S\times \Omega_1} \pi(x) P(x,y) = \sum_{x\in \Omega_S} \pi(x) \underbrace{\left( \sum_{y\in \Omega_1}P(x,y)\right)}_{\leqslant 1} \leqslant \pi(\Omega_S)
    $$
    It follows that $\Phi \leqslant \frac{\pi(\Omega_S)}{\pi(\Omega_1)}$. By Theorem~\ref{thm:ConductanceSpectrum}, we have 
    $$
    \lambda_1 \geqslant 1-2\Phi \geqslant 1-2 \frac{\pi(\Omega_S)}{\pi(\Omega_1)}
    $$
    Using Proposition~\ref{prop:BoundMixingTime}, we obtain 
    $$
        \max_{x\in \Omega}\tau_x(\varepsilon) \geqslant \frac{1}{2}\frac{\lambda_1}{1-\lambda_1}\ln\left(\frac{1}{2\varepsilon}\right)
        \geqslant \frac{1}{4}\left(\frac{\pi(\Omega_1)}{\pi(\Omega_S)}-2\right)\ln\left(\frac{1}{2\varepsilon}\right)
    $$
\end{proof}
\paragraph*{Canonical paths}

The \emph{canonical paths} technique provides a powerful method for upper bounding the mixing time. For each pair of states $x, y \in \Omega$, fix a specific path $\gamma_{x,y} = (x = z_0, z_1, \ldots, z_\ell = y)$ such that $(z_i, z_{i+1})$ are adjacent states, i.e., $P(z_i, z_{i+1}) > 0$. The path $\gamma_{x,y}$ is called the canonical path from $x$ to $y$. Let $\Gamma := \{\gamma_{x,y} \mid x, y \in \Omega\}$ be the set of all fixed canonical paths. The \emph{congestion} through a transition $e = (u, v)$, where $P(u, v) > 0$, is defined as
\begin{center}
\begin{equation}\label{CongestionMarkov}
\rho(\Gamma, e) := \frac{1}{\pi(u)P(u, v)} \sum_{\substack{x, y \in \Omega \\ \gamma_{x,y} \text{ uses } e}} \pi(x) \pi(y) |\gamma_{x,y}|
\end{equation}
\end{center}
where $|\gamma_{x,y}|$ is the length of the path $\gamma_{x,y}$. The overall congestion of the paths $\Gamma$ is then
$$
\rho(\Gamma) := \max_{e = (u,v) : P(u,v) > 0} \rho(\Gamma, e).
$$
The heart of this method is the following, 
\begin{theorem}[\cite{sinclair1992improved}]\label{thm:BoundSpectrumPaths}
Given an ergodic Markov chain $\mathcal{M}$ with second eigenvalue $\lambda_1$ and any choice of canonical paths $\Gamma$, the following inequality holds :
$$
\lambda_1 \leqslant 1-\frac{1}{\rho(\Gamma)}
$$
\end{theorem}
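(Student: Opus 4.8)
The plan is the standard Dirichlet-form (variational) argument for reversible chains. Since the chain is reversible with respect to $\pi$ (as is the case for Glauber dynamics, and more generally for any heat-bath chain), the transition operator $P$ is self-adjoint on $L^2(\pi)$, and the spectral gap admits the characterization
$$
1-\lambda_1 = \min_{\mathrm{Var}_\pi(f)>0} \frac{\mathcal{E}(f,f)}{\mathrm{Var}_\pi(f)},
$$
where $\mathcal{E}(f,f)=\tfrac12\sum_{x,y}\pi(x)P(x,y)(f(x)-f(y))^2$ is the Dirichlet form and $\mathrm{Var}_\pi(f)=\tfrac12\sum_{x,y}\pi(x)\pi(y)(f(x)-f(y))^2$ is the variance; here the identification of $\mathcal{E}$ and $\mathrm{Var}_\pi$ with these ``difference'' forms uses $\sum_y P(x,y)=1$ and $\sum_x\pi(x)P(x,y)=\pi(y)$. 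The inequality $\lambda_1\leqslant 1-1/\rho(\Gamma)$ is therefore equivalent to $\mathrm{Var}_\pi(f)\leqslant\rho(\Gamma)\,\mathcal{E}(f,f)$ for every $f\colon\Omega\to\mathbb{R}$, and it suffices to prove the latter.

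First I would fix $f$ and, for each ordered pair $(x,y)$, telescope along the canonical path $\gamma_{x,y}=(x=z_0,z_1,\dots,z_\ell=y)$, writing $f(x)-f(y)=\sum_{i=0}^{\ell-1}(f(z_i)-f(z_{i+1}))$. Applying Cauchy--Schwarz to this sum of $\ell=|\gamma_{x,y}|$ terms gives $(f(x)-f(y))^2\leqslant|\gamma_{x,y}|\sum_{e=(u,v)\in\gamma_{x,y}}(f(u)-f(v))^2$. Substituting into $\mathrm{Var}_\pi(f)$ and exchanging the order of summation — summing over transitions $e$ first, then over the pairs $(x,y)$ whose canonical path traverses $e$ — yields
$$
\mathrm{Var}_\pi(f)\;\leqslant\;\tfrac12\sum_{e=(u,v)}(f(u)-f(v))^2\sum_{\substack{x,y\in\Omega\\ \gamma_{x,y}\text{ uses }e}}\pi(x)\pi(y)|\gamma_{x,y}|.
$$

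Then, by the definition of congestion in Equation~\ref{CongestionMarkov}, the inner sum equals $\pi(u)P(u,v)\,\rho(\Gamma,e)\leqslant\pi(u)P(u,v)\,\rho(\Gamma)$. Plugging this in gives $\mathrm{Var}_\pi(f)\leqslant\rho(\Gamma)\cdot\tfrac12\sum_{e=(u,v):P(u,v)>0}\pi(u)P(u,v)(f(u)-f(v))^2=\rho(\Gamma)\,\mathcal{E}(f,f)$, as desired, and the claimed bound on $\lambda_1$ follows from the variational formula. I do not anticipate a genuine obstacle here: the computation is routine. The only points needing care are invoking reversibility to license the variational characterization, keeping the bookkeeping of ordered transitions and the factor $\tfrac12$ consistent between $\mathrm{Var}_\pi$ and $\mathcal{E}$, and making sure the path-length factor $|\gamma_{x,y}|$ is carried correctly through the Cauchy--Schwarz step so that it combines with $\rho(\Gamma,e)$ rather than being lost.
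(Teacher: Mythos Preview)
Your argument is correct and is precisely the standard Sinclair proof via the Poincar\'e inequality: telescope along canonical paths, apply Cauchy--Schwarz, swap sums, and recognize the congestion bound on the inner sum. Note, however, that the paper does not actually prove this theorem; it is simply quoted from \cite{sinclair1992improved} and used as a black box, so there is no alternative argument to compare against.
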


Combining Proposition~\ref{prop:BoundMixingTime} and Theorem~\ref{thm:BoundSpectrumPaths}, we obtain directly the following.

\begin{theorem}
Let $\mathcal{M}$ be an ergodic Markov chain such that $\lambda_{\max}(\mathcal{M})=\lambda_1(\mathcal{M})$, and let $\Gamma$ be a set of canonical paths. For any $\varepsilon \in (0,1)$ and $x\in \Omega$, 
$$
\tau_x(\varepsilon) \leqslant \rho(\Gamma) \left(\ln\left(\frac{1}{\pi(x)}\right) + \ln\left(\frac{1}{\varepsilon}\right)\right)
$$
\end{theorem}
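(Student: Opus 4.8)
This is a direct composition of two results already stated in the appendix, so the plan is short. The key observation is that $\rho(\Gamma)$ controls the spectral gap $1-\lambda_1$ from below, and the spectral gap in turn controls the mixing time from above.

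First I would invoke Theorem~\ref{thm:BoundSpectrumPaths}, which gives $\lambda_1(\mathcal{M}) \leqslant 1 - \frac{1}{\rho(\Gamma)}$ for any choice of canonical paths $\Gamma$. Rearranging, this means $1 - \lambda_1(\mathcal{M}) \geqslant \frac{1}{\rho(\Gamma)}$, hence $(1-\lambda_1(\mathcal{M}))^{-1} \leqslant \rho(\Gamma)$. Using the hypothesis $\lambda_{\max}(\mathcal{M}) = \lambda_1(\mathcal{M})$, we get $(1-\lambda_{\max}(\mathcal{M}))^{-1} \leqslant \rho(\Gamma)$.

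Next I would apply the first bound of Proposition~\ref{prop:BoundMixingTime}, which states that for any state $x \in \Omega$ and any $\varepsilon \in (0,1)$,
$$
\tau_x(\varepsilon) \leqslant (1-\lambda_{\max})^{-1}\left( \ln\left( \tfrac{1}{\pi(x)}\right) + \ln\left( \tfrac{1}{\varepsilon}\right)\right).
$$
Substituting the inequality $(1-\lambda_{\max})^{-1} \leqslant \rho(\Gamma)$ obtained above, and noting that the parenthesized factor is nonnegative (since $\pi(x) \leqslant 1$ and $\varepsilon < 1$), yields exactly
$$
\tau_x(\varepsilon) \leqslant \rho(\Gamma)\left( \ln\left( \tfrac{1}{\pi(x)}\right) + \ln\left( \tfrac{1}{\varepsilon}\right)\right),
$$
which is the claim. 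There is essentially no obstacle here: the only thing to be careful about is that the monotonicity step requires the logarithmic factor to be nonnegative, which is guaranteed by $\varepsilon \in (0,1)$ and $\pi(x) \in (0,1]$ (and ergodicity ensures $\pi(x) > 0$). Everything else is a one-line substitution.
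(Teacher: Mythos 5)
Your proof is correct and matches the paper exactly: the paper states the result as an immediate consequence of combining Proposition~\ref{prop:BoundMixingTime} with Theorem~\ref{thm:BoundSpectrumPaths}, and you simply spell out that substitution with the appropriate sign check.
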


\end{document}